\itshape\color{gray},           
\def\dotuline{\bgroup
  \ifdim\ULdepth=\maxdimen  
   \settodepth\ULdepth{(j}\advance\ULdepth.4pt\fi
  \markoverwith{\begingroup
  \advance\ULdepth0.08ex
  \lower\ULdepth\hbox{\kern.15em .\kern.1em}%
  \endgroup}\ULon}
\def\dashuline{\bgroup
  \ifdim\ULdepth=\maxdimen  
   \settodepth\ULdepth{(j}\advance\ULdepth.4pt\fi
  \markoverwith{\kern.15em
  \vtop{\kern\ULdepth \hrule width .3em}%
  \kern.15em}\ULon}
\newcommand{\TYPE}[1]{\mbox{\sffamily #1}}
\renewcommand\emph{\textit}
\renewcommand\em{\textit}
\newcommand{\tuple}[1]{\langle #1 \rangle}
\newcommand{\lst}[1]{\lstinline!#1!}
\newcommand{\dom}[0]{\ensuremath{\mathit{dom}}}
\newcommand{\match}{{\it match}}
\newcommand{\notmatch}{{\it nonmatch}}
\newcommand{\DC}{\ensuremath{\mathit{Co}}}
\newcommand{\DT}{\ensuremath{\mathit{D}}}
\newcommand{\method}{{\tt m}}
\newcommand{\returnval}[1]{\mbox{\lstinline!return #1!}\xspace}
\newcommand{\false}{\ensuremath{{\mathord{\mathit{false}}}}}
\newcommand{\types}[0]{{\ensuremath{\textit{Types}}}}
\newcommand{\rulemap}[0]{\ensuremath{\mu}}
\newcommand{\rulemaps}[0]{\ensuremath{M}}
\newcommand{\prmap}[0]{\ensuremath{\sigma}}
\newcommand{\prmaps}[1]{\ensuremath{\Sigma^{#1}}}
\newcommand{\addr}[0]{\ensuremath{\oplus}}
\newcommand{\addp}[0]{\ensuremath{\oplus}}
\newcommand{\baddr}[0]{\ensuremath{\bigoplus}}
\newcommand{\baddp}[0]{\ensuremath{\bigoplus}}
\newcommand{\restrmap}[2]{\ensuremath{{#1}|_{#2}}}
\newcommand{\renmap}[3]{\ensuremath{{#1}_{{#2}\sim{#3}}}}
\newcommand{\genf}[1]{\ensuremath{\mathit{gen}^{#1}}}
\newcommand{\nrules}[1]{\ensuremath{|{#1}|}}
\newcommand{\getrule}[2]{\ensuremath{{#2}^{#1}}}
\newcommand{\genG}[1]{\ensuremath{\mathit{genG}^{#1}}}
\newcommand{\genS}[1]{\ensuremath{\mathit{genS}^{#1}}}
\newcommand{\extrmap}[3]{\langle{#1}\rangle^{#2}_{#3}}
\newcommand{\rvars}[1]{\ensuremath{\mathit{rvars(#1)}}}
\newcommand{\variation}[3]{\ensuremath{{#1}\triangleright^{#3}{#2}}}
\newcommand{\traces}[0]{\ensuremath{\mathit{Tr}}}
\newcommand{\trace}[0]{\ensuremath{\mathcal{T}}}
\newcommand{\useful}[3]{\ensuremath{\mathit{useful}_{#1}^{#2}(#3)}}
\newcommand{\valdep}[3]{\ensuremath{{#1}\Rightarrow_{#2}{#3}}}
\newcommand{\trsteps}[1]{\ensuremath{\mathit{steps}({#1})}}
\newcommand{\trstepsnolink}[0]{\ensuremath{\mathit{steps}}}
\newcommand{\depvars}[2]{\ensuremath{\mathit{dep^{#1}({#2})}}}
\newcommand{\semrule}[2]{
\begin{array}{c}
#1 \\
\hline
#2
\end{array}
}
\newcommand{\bc}{b}
\newcommand{\bcs}{bs}
\newcommand{\fevalt}{eval_t}
\newcommand{\fevalg}{eval_g}
\newcommand{\tv}{\it lv}
\newcommand{\TV}{\mathcal{LV}}
\newcommand{\atv}{\it lv^\alpha}
\newcommand{\atvl}[1]{\it {#1}^\alpha}
\newcommand{\stkbc}{{\it bs}}
\NewDocumentCommand\astkbc{O{}}{\it bs^\alpha\IfNoValueTF{#1}{}{_#1}}
\NewDocumentCommand{\code}{m}{\lstinline[basicstyle=\sffamily]|#1|}
\NewDocumentCommand{\mi}{m}{\mathit{#1}}
\NewDocumentCommand{\cons}{}{\mi{Alt}}
\NewDocumentCommand{\dcon}{}{\mi{Co}}
\NewDocumentCommand{\intname}{}{Int}
\NewDocumentCommand{\tint}{}{\TYPE{\intname}}
\NewDocumentCommand{\tintl}{}{\TYPE{IntList}}
\NewDocumentCommand{\rbrprog}{}{P}
\NewDocumentCommand{\typedproc}{}{\mi{Proc}}
\NewDocumentCommand{\emptymap}{}{\epsilon}
\NewDocumentCommand{\emptyseq}{}{\epsilon}
\NewDocumentCommand{\ar}{}{\irstate}
\newcommand{\rrderiv}{\leadsto}
\newcommand{\arb}{ar^\alpha}
\newcommand{\abstractSep}{|}
\newcommand{\p}{\it P}
\newcommand{\palpha}{\it P^\alpha}
\newcommand{\bca}{\bc^{\alpha}}
\newcommand{\bsa}{\bcs^{\alpha}}
\newcommand{\rrabsderiv}{\leadsto_{\alpha}}
\newcommand{\irstate}{\it C}
\newcommand{\airstate}{\ensuremath{\mathit{AC}}}
\newcommand{\type}[1]{\ensuremath{\mathit{type}(#1)}}
\newcommand{\typer}[2]{\ensuremath{\mathit{type}_{#2}(#1)}}
\newcommand{\rulearrow}{\leftarrow}
\newcommand{\rrassigns}[2]{{#1}{:}{=}{#2}}
\newcommand{\expr}{{\it e}}
\newcommand{\typednorm}[2]{\ensuremath{{\parallel}{#1}{\parallel_{#2}}}}
\newcommand{\typednormann}[3]{\ensuremath{{\parallel}{#1}{\parallel^{#3}_{#2}}}}
\newcommand{\typednormorig}[2]{\ensuremath{{\parallel}{#1}{\parallel^{+}_{#2}}}}
\newcommand{\tsnorm}[1]{\ensuremath{{\parallel}{#1}{\parallel_{\mathit{ts}}}}}
\newcommand{\inttype}{\ensuremath{\mathtt{Int}}}
\newcommand{\nattype}{\ensuremath{\mathtt{Int}^+}}
\newcommand{\strtype}{\ensuremath{\mathtt{String}}}
\newcommand{\deptypes}[1]{\ensuremath{\mi{Constituents}({#1})}}
\newcommand{\tttype}[1]{\ensuremath{\mathtt{#1}}}
\newcommand{\typednorms}[0]{\ensuremath{\mi{rtypes}}}
\newcommand{\vars}[0]{\ensuremath{\mathit{vars}}}
\newcommand{\nonneg}[1]{\ensuremath{non\_neg({#1})}}
\newcommand{\rSec}[1]{Section~\ref{#1}}
\newcommand{\rDef}[1]{Definition~\ref{#1}}
\newcommand{\rFig}[1]{Figure~\ref{#1}}
\newcommand{\nmodels}{\not\models}
\newcommand{\slot}{\underline{\hspace{.5em}}}
\newtheorem{definition}{Definition}
\newtheorem{theorem}{Theorem}
\newtheorem{lemma}{Lemma}
\newtheorem{prop}{Proposition}
\newtheorem{example}{Example}
\newcommand*\circled[1]{\tikz[baseline=(char.base)]{
            \node[shape=circle,draw,inner sep=1pt, fill=black!10] (char) {\sffamily\textnormal#1};}}
\newcolumntype{L}[1]{>{\raggedright\let\newline\\\arraybackslash\hspace{0pt}}m{#1}}
\title[A Transformational Approach to Resource Analysis with Typed-norms
Inference]{A Transformational Approach to Resource Analysis with Typed-norms
Inference\thanks{This work was funded partially by the 
Spanish MICINN/FEDER, UE projects
RTI2018-094403-B-C31 and 
RTI2018-094403-B-C32,
MINECO projects 
TIN2015-69175-C4-2-R and 
TIN2015-69175-C4-1-R, 
by the CM project 
S2018/TCS-4314,
the GV project 
PROMETEO/2019/098, 
and the UPV project 
SP20180225.
Ra\'ul Guti\'errez was also supported by INCIBE program ``\emph{Ayudas para la excelencia de los equipos de investigaci\'on avanzada en ciberseguridad}''.}}
\author[E. Albert, S. Genaim, R. Guti\'errez and E. Martin-Martin]
{Elvira Albert, Samir Genaim\\
Dep. Sistemas Inform\'aticos y Computaci\'on, Universidad Complutense de Madrid \\ C/ Prof. Jos\'e Garc\'ia Santesmases 9, 28040 Madrid, Spain\\
\email{elvira@fdi.ucm.es, samir.genaim@fdi.ucm.es}
\and Ra\'ul Guti\'errez\\
Dep. Sistemes Inform\`atics i Computaci\'o, Universitat Polit\`ecnica de Val\`encia \\ Camino de Vera S/N, 46022 Val\`encia, Spain\\
\email{rgutierrez@dsic.upv.es}
\and Enrique Martin-Martin\\
Dep. Sistemas Inform\'aticos y Computaci\'on, Universidad Complutense de Madrid \\ C/ Prof. Jos\'e Garc\'ia Santesmases 9, 28040 Madrid, Spain\\
\email{emartinm@ucm.es}}
\begin{document}

\label{firstpage}

\maketitle

%
%
%
%
%
%
%
%

\begin{abstract}

  In order to automatically infer the resource consumption of
  programs, analyzers track how \emph{data sizes} change along
  program's execution.  Typically, analyzers measure the sizes of data
  by applying \emph{norms} which are mappings from data to natural
  numbers that represent the sizes of the corresponding data. When
  norms are defined by taking type information into account, they are
  named \emph{typed-norms}.  This article presents a transformational
  approach to resource analysis with typed-norms that are inferred by a
  data-flow analysis. The analysis is based on a transformation of the
  program into an \emph{intermediate abstract program} in which each
  variable is abstracted with respect to all considered norms which
  are valid for its type.  We also present the data-flow analysis to
  automatically infer the required, useful, typed-norms from programs.
  Our analysis is formalized on a simple rule-based representation to
  which programs written in different programming paradigms (e.g.,
  functional, logic, imperative) can be automatically
  translated. Experimental results on standard benchmarks used by
  other type-based analyzers show that our approach is both efficient
  and accurate in practice.
  
  \emph{Under consideration in Theory and Practice of Logic Programming (TPLP).}

\end{abstract}

\begin{keywords}
resource analysis, typed-norms, data-flow analysis, program transformation


\end{keywords}



\section{Introduction}
\label{sec:intro}

Automated resource analysis \cite{DBLP:journals/cacm/Wegbreit75} needs
to infer how the sizes of data are modified along program's
execution. Size is measured using so-called norms
\cite{DBLP:conf/tapsoft/BossiCF91} which define how the size of a term is
computed. Examples of norms are \emph{list-length} which counts the
number of elements of a list, \emph{tree-depth} which counts the depth
of a tree, \emph{term-size} which counts the number of constructors,
etc.  Basically, in order to infer the resource consumption of
executing a loop that traverses a data-structure, the analyzer tries
to infer how the size of such data-structure decreases at each
iteration w.r.t.\ the chosen norm. Given a tree \lst{t}, using a
term-size norm, we infer that a loop like ``\textsf{while (t!=leaf)
  t=t.right;}'' performs at most \lst{nodes(t)} iterations, where
function \lst{nodes} returns the number of nodes in the tree. This is
because size analysis infers that  at each iteration the instruction \lst{t=t.right}
decreases \lst{nodes(t)}.  However, by using the tree-depth norm, we will
infer that \lst{depth(t)} is an upper bound on the number of
iterations. The latter is obviously more precise than the former
bound as \lst{depth(t)}$\leq$\lst{nodes(t)}.

The last two decades have witnessed a wealth of research on using
norms in termination analysis, especially in the context of logic
programming
\cite{DBLP:conf/tapsoft/BossiCF91,BCGGV:TOPLAS:2007,GenaimCGL02}. Early
work pointed out that the choice of norm affects the precision such
that the analyzer may only succeed to prove termination if a certain
norm is used, while it cannot prove it with others. Later on, there
has been further investigation on applying multiple norms, i.e., using
two or more norms by applying them simultaneously
\cite{DBLP:conf/tapsoft/BossiCF91}. This means that the same data in the original
program is replaced by two or more abstract data each one specifying
its size information w.r.t.\ the corresponding
norm.
%
%
Even a further step has been taken on using \emph{typed-norms} which
allow defining norms based on type information (namely on recursive
types)~\cite{BCGGV:TOPLAS:2007}.  Inferring norms from type
information makes sense as recursive types represent recursive
data-structures and thus, in termination analysis, they identify some
potential sources of infinite recursion and, in resource analysis,
they might influence the number of iterations that the loops
perform. Besides, typed-norms allow that the same term can be measured
differently depending on its type. As pointed out
in~\cite{GenaimCGL02}, this is particularly useful when the same
function symbol may occur in different type contexts.

In the context of resource analysis, we found early work that already
pointed out that the combination of norms affects the precision of
lower-bound time analysis \cite{lowb-time-andy-ilps97}. Sized-types
provide a way to consider more than one norm for each type. They have
been used in the context of functional
\cite{phd:pedro_vasconcelos,vh-03} and recently in logic
programming \cite{sized-types-iclp2013,SerranoLH14}.
In the former case, they are
inferred by a type analysis and in the latter via abstract
interpretation. 
In contrast, we propose a transformational approach which provides a
simple and accurate way to use multiple typed-norms in resource
analysis as follows: (1) we first transform the program into an
\emph{intermediate abstract program} in which each variable is
abstracted with respect to all considered norms valid for its type,
(2) such intermediate program is then transformed into upper and lower
resource bounds automatically. As regards the first phase, we
formalize the transformation assuming that the input programs are
given in a simple rule-based representation. The rule-based
representation contains program rules, pattern matching and assignment
using a compact syntax.
Programs written in \cbstart first-order \cbend functional or imperative programming languages can
be represented by means of this representation in a straightforward
way (since this representation can model control-flow graphs with
procedure calls). Logic programs can be represented as well by
replacing matching (and assignment) by unification, without any
further change in our analysis.
As regards the second phase, 
\cbstart note that we are interested in relying on
existing techniques and using them as a black-box without modifying
them. This is important since they receive abstract programs that come
from different sources, and we do not want to make any change that is
particular to our transformation that might break the functionality of
other parts. \cbend
Thus, formalizing our framework focuses only on the first
step.

While allowing multiple norms might lead to more accurate bounds than
adopting one norm, the efficiency of the analysis can be degraded
considerably. This is because the process of finding resource bounds
from abstractions that have more arguments (due to the use of multiple
norms) is more costly. Thus, an essential aspect for the practical
applicability of our method is to obtain the smallest sets for the
relevant typed-norms, i.e., eliminate those abstractions that will not
lead to further precision. For this purpose, we present a new
algorithm for the inference of typed-norms which, by inspecting the
program, can detect which norms are useful to later infer the resource
consumption, and discard norms that are useless for this purpose. Our
inference is formalized as a data-flow analysis which is applied as a
pre-process, such that once the relevant norms are inferred, the
transformation into the abstract program is carried out w.r.t.\ the
inferred norms.

\subsection{Summary of Contributions}

The main contributions of this article can be summarized as follows:
\begin{enumerate}
\item We introduce a transformation from the rule-based representation
  to an abstract representation in which each variable is abstracted
  with respect to all considered norms valid for its type, and prove
  soundness of the process.

\item We present to the best of our knowledge the first algorithm for
  the inference of typed-norms that are relevant to infer the resource
  consumption, and prove soundness of the type inference step.

\item We extend our approach to handle polymorphic types and context-sensitive norms.

\item We perform an experimental evaluation and compare the results
  with those obtained using other systems \cite{HAH12,SerranoLH14}.
\end{enumerate}
This article is an extended and revised version of a conference paper
that was published in the proceedings of LOPSTR
2013~\cite{AlbertGG13}. The main extensions w.r.t.\ the conference
paper affect all points above. As regards (1), we now provide a
semantics for the rule-based representation and for the abstract
representation and prove soundness of the transformation process,
while~\cite{AlbertGG13} did not have soundness results. \cbstart (2) The
formalization of the algorithm for the inference of typed-norms and its soundness are new
contributions of this article. In~\cite{AlbertGG13} the inference algorithm was informally presented without any theoretical result, but in this extended revised version we present a completely formalized data-flow algorithm for inferring typed-norms, prove its termination and also prove that the detected typed-norms cover those that may affect the program executions, i.e., the inference algorithm is correct. \cbend
(3) Also, in~\cite{AlbertGG13}, we had
considered only monomorphic types. (4) The experiments
of~\cite{AlbertGG13} have also been improved to deal with the same
benchmarks as in related work \cite{HAH12,SerranoLH14} and a
comparison with these systems has been included. We also have analyzed
an industrial case study to show the performance of our approach when
handling larger programs.

\subsection{Organization of the Article}

The article is organized as follows. In
Section~\ref{sec:intermediate-representation} we describe the syntax
and the semantics of the rule-based representation.
Section~\ref{sec:abstraction} presents our transformational approach
to resource analysis with typed-norms. We start by reviewing the
concept of typed-norm in Section~\ref{sec:prel-typed-norms}. It is
then extended to symbolic typed-norm and used to define the program
abstraction in Section~\ref{sec:our-transf-appr}. Soundness of the
transformation is proven in Section~\ref{sec:soundness}.
 Section~\ref{sec:inference} presents a typed-norms inference
 algorithm that is essential for the scalability of our approach. It
 infers the smallest sets for the typed-norms that are relevant for
 the inference of upper
 bounds. Section~\ref{sec:inference_formalization} formalizes the
 inference process and Section~\ref{sec:inference_soundness} proves
 its soundness. In Section~\ref{sec:polymorphic} we describe the
 extension of our approach to handle polymorphic
 types. Section~\ref{sec:experiments} contains our experimental
 evaluation, Section~\ref{related} compares our approach to related
 work, and Section~\ref{conclusions} concludes. \cbstart Finally, \ref{sec:proofs} contains the proofs of the theoretical results. \cbend




\section{A Rule-based Language}\label{sec:intermediate-representation}

To simplify the presentation, we formalize our approach
on a compact program syntax called \emph{rule-based representation}
(RBR) that contains program rules, pattern matching, and assignment.
%
%
It already 
incorporates \emph{static single
	assignment}~\cite{DBLP:journals/toplas/CytronFRWZ91} (each variable
is assigned exactly once).
\emph{Recursion} is the
only iterative mechanism and \emph{rule guards} are the only
conditional constructions in the RBR.
Although simple, the RBR syntax can represent programs from different programming languages by means of an intermediate translation.
For example, the RBR can be obtained from Java programs~\cite{AlbertAGPZ12}, from the functional part of
\emph{Abstract Behavioral Specification}~\cite{johnsen10fmco} (ABS) programs~\cite{AlbertACGGPR15}, and from the imperative part
of ABS~\cite{AlbertACGGPR15}.
This RBR can handle core-Prolog programs as well simply by
interpreting the pattern matching as unification. Interestingly this
does not require any further change in our size abstraction since our
abstract programs are actually constraint logic programs. However, we
note that analyzing abstract programs that originate from logic
programs for cost should be done by an analyzer that takes failure
into account~\cite{SerranoLH14}.

\subsection{Syntax of the rule-based language}

In order to present typed-norms and its impact on termination and resource analyses in a clear way, for now we will consider only \emph{monomorphic types}, although in Section~\ref{sec:polymorphic} we will present the extension to \emph{polymorphic types}.

\begin{definition}[Monomorphic types]\label{def:monomorphicTypes} A \emph{monomorphic type} $T$ can be a built-in data type as \tint{} or an algebraic data type $\DT$ defined as:
	\begin{flushleft}
		$\begin{array}{l@{~~~~~}lll}
		& \mi{Dd} & ::= & \mi{data}~\DT = \cons ~[\overline{\mid \cons{}}]\\
		& \cons   & ::= & \dcon[(\overline{T})] \\
		\end{array}$
	\end{flushleft}
where $\DC$ represents a data constructor and the notation $[\overline{X}]$ represents an \emph{optional} sequence of elements $X$. For simplicity, we assume that recursive types are in \emph{direct recursive form}, otherwise, we could consider mutually recursive types to be the same type.
\end{definition}

\begin{example}[List of integer numbers]~\label{ex:monomorphicTypes}
	Using the syntax presented in Def.~\ref{def:monomorphicTypes} we can define the data type of integer lists (\TYPE{IntList}) as follows:
	\begin{flushleft}
      ~~~~\lstinline@data IntList = Nil | Cons($\tint$, IntList)@
	\end{flushleft}
  In this case, the type of the nullary data constructor \code{Nil} is \TYPE{IntList}, and the type of the binary data constructor \code{Cons} is \tint{} $\times$ \TYPE{IntList} $\to$ \TYPE{IntList}.
\end{example}


We define programs in \emph{rule-based representation} (RBR programs in the sequel) as a set of data declarations followed by
typed procedures:

\begin{definition}[RBR syntax]\label{def:syntax}
	\[
	\begin{array}{lll}
	\rbrprog & ::= & [\overline{\mi{Dd}}] ~ \overline{\typedproc} \\
	\typedproc & ::= & p :: T_1 \times \cdots \times T_k ~ \overline{r}\\
	r & ::= &  p(\bar{x},\bar{y}) \rulearrow g, b_1, \ldots , b_n\\
	b & ::= & \rrassigns{x}{t} \mid\ p(\bar{x},\bar{y})\\
	g & ::= & {\it true}  \mid  g \wedge g  \mid e > e \mid e = e \mid e \ge e \mid \match(x,p)  \mid  \notmatch(x,p) \\
	p & ::= & \DC(\bar{x}) \\
	t & ::= & \expr \mid \DC(\bar{t}) \\
	\expr & ::= & x \mid  n \mid \expr {+} \expr \mid \expr {-} \expr \\
	\end{array}\]
	
RBR programs \rbrprog{} are formed by an optional set of data declarations ($\mi{Dd}$) followed by a set of typed procedures ($\typedproc{}$).
A \emph{typed procedure} begins with a type declaration $p :: T_1 \times \cdots \times T_n \times T_{n+1} \times \cdots \times
T_{n + m}$ stating the types $T_1, \ldots, T_n$ of its $n$ input arguments $\bar{x}$ ($n \ge 0$) and the types $T_{n+1}, \ldots T_{n + m}$ of its $m$ output arguments $\bar{y}$ ($m \ge 0$). After the type declarations there is a set of \emph{guarded rules} ($r$), where $p(\bar{x},\bar{y})$ is the
\emph{head} of the rule, the guard $g$ specifies the conditions for the rule to
be applicable and $b_1,\dots,b_n$ are the statements in the rule's \emph{body}.
For clarity, we sometimes enclose input and output arguments with angles
``$\langle$'' and ``$\rangle$'', i.e., $p :: \langle T_1 \times \cdots \times
T_n\rangle \times \langle T_{n+1} \times \cdots \times T_{n + m}\rangle$
and $p(\langle\bar{x}\rangle,\langle\bar{y}\rangle)$.
If a program \rbrprog{} has $n$ rules, we say that
$\nrules{\rbrprog{}} = n$ and $\getrule{i}{\rbrprog{}}$ represents the
i-th rule of $\rbrprog{}$.
%
Guards $\match(x,p)$ and $\notmatch(x,p)$, where $x\not\in {\it vars}(p)$ and $x$ and $p$ are of the same type, check if the value stored in variable $x$ matches with pattern $p$. Patterns ($p$) are data constructors $\DC$ applied to variables.
%
Terms ($t$) can be expressions (variables, integer numbers or arithmetic operations over expressions) 
or data constructors $\DC$ applied to properly typed terms
(e.g., \lst{Cons}$(6,y)$, where $6$ has type \tint{} and $y$ has type \TYPE{IntList}).
Terms not containing variables are called \emph{closed terms} (a.k.a. \emph{ground} terms).
\cbstart We assume that RBR programs are well-typed, i.e., every term and subterm in the program (including variables) have a type assigned that is coherent with procedure type declarations and data constructor types, considering a standard monomorphic type system~\cite{Pierce:types}\cbend.
%
%

%
%

\end{definition}


\begin{figure}[tbp]
%
%
%
\begin{lstlisting}[frame=none, multicols=2]
data IntList = Nil | Cons($\tint$, IntList) (*\label{abs:intlist}*)

fact :: $\langle\tint\rangle \times \langle\tint\rangle$
(*\hspace{-0.85cm}$\circled{1}\hspace{0.45cm}$*)fact((*$\langle$*)n(*$\rangle$*), (*$\langle$*)prod'(*$\rangle$*)) $\rulearrow$ true,
  prod := 1,
  while_0((*$\langle$*)n, prod(*$\rangle$*), (*$\langle$*)prod'(*$\rangle$*)) (*\label{rbr:fact_while0_invocation}*)

while_0 :: $\langle\tint \times \tint\rangle \times \langle\tint\rangle$
(*\hspace{-0.85cm}$\circled{2}\hspace{0.45cm}$*)while_0((*$\langle$*)n, prod(*$\rangle$*), (*$\langle$*)prod'(*$\rangle$*)) $\rulearrow$ 0 >= n (*\label{rbr:fact_while_1}*)
  prod' := prod
(*\hspace{-0.85cm}$\circled{3}\hspace{0.45cm}$*)while_0((*$\langle$*)n, prod(*$\rangle$*), (*$\langle$*)prod'(*$\rangle$*)) $\rulearrow$ 0 < n,  (*\label{rbr:fact_while_2b}*)
  prod1 := prod * n,
  n1 := n - 1,
  while_0((*$\langle$*)n1, prod1(*$\rangle$*), (*$\langle$*)prod'(*$\rangle$*)) (*\label{rbr:fact_while_2e}*)

factSum :: $\langle\TYPE{IntList}\rangle \times \langle\tint\rangle$
(*\hspace{-0.85cm}$\circled{4}\hspace{0.45cm}$*)factSum((*$\langle$*)l(*$\rangle$*), (*$\langle$*)sum'(*$\rangle$*)) $\rulearrow$ true, (*\label{fig:runnning:factSum0}*)
  sum := 0, (*\label{fig:runnning:factSum1}*)
  while_1((*$\langle$*)l, sum(*$\rangle$*), (*$\langle$*)sum'(*$\rangle$*))(*\label{fig:runnning:factSum2}*)

while_1 :: $\langle\TYPE{IntList} \times \tint\rangle \times \langle\tint\rangle$
(*\hspace{-0.85cm}$\circled{5}\hspace{0.45cm}$*)while_1((*$\langle$*)l, sum(*$\rangle$*), (*$\langle$*)sum'(*$\rangle$*)) $\rulearrow$ (*\label{fig:runnning:while1_0}*)
  match(l, Nil),(*\label{fig:runnning:while1_1}*)
  sum' := sum(*\label{fig:runnning:while1_2}*)
(*\hspace{-0.85cm}$\circled{6}\hspace{0.45cm}$*)while_1((*$\langle$*)l, sum(*$\rangle$*), (*$\langle$*)sum'(*$\rangle$*)) $\rulearrow$ (*\label{fig:runnning:while1_3}*)
  match(l, Cons(e,l1)), (*\label{rbr:while1_match}*)(*\label{fig:runnning:while1_4}*)
  fact((*$\langle$*)e(*$\rangle$*), (*$\langle$*)prod(*$\rangle$*)), (*\label{rbr:while1_fact_invocation}*)(*\label{fig:runnning:while1_5}*)
  sum1 := sum + prod,(*\label{fig:runnning:while1_6}*)
  while_1((*$\langle$*)l1, sum1(*$\rangle$*), (*$\langle$*)sum'(*$\rangle$*))(*\label{fig:runnning:while1_7}*)

main :: $\langle\rangle \times \langle\tint\rangle$
(*\hspace{-0.85cm}$\circled{7}\hspace{0.45cm}$*)main((*$\langle$*)(*$\rangle$*) $\times$ (*$\langle$*)r'(*$\rangle$*)) $\rulearrow$ true,(*\label{fig:runnning:main_0}*)
  l := Cons(2, Nil),(*\label{fig:runnning:main_1}*)
  factSum((*$\langle$*)l(*$\rangle$*), (*$\langle$*)r'(*$\rangle$*))(*\label{fig:runnning:main_2}*)
\end{lstlisting}

\caption{RBR program.\label{fig:running}}
\end{figure}

\begin{example}[RBR program]\label{ex:rbr}

Figure~\ref{fig:running} contains the RBR of a program with three principal procedures: \code{fact} computes the factorial of an integer number, \code{factSum} traverses a list of integer numbers and adds the factorial value for each element, and \code{main} is the entry point of the program that invokes \code{factSum} with the unitary list \code{Cons(2,Nil)}.
%
Both \code{fact} and \code{factSum} have only one rule each, which establishes the value of the accumulator and invokes \code{while_n}.
Each loop is a procedure with two rules: one for finishing the loop and other for computing one iteration. For example, the first rule of \code{while_0} has the guard \code{0 >= n} (Line~\ref{rbr:fact_while_1}) to check that the loop has finished, therefore returning the input accumulator as output value. On the other hand, the second rule of \code{while_0} (lines~\ref{rbr:fact_while_2b}--\ref{rbr:fact_while_2e}) contains the guard \code{0 < n} that checks that the loop has not finished yet. In that case, updated values of \code{prod} and \code{n} are stored in \code{prod1} and \code{n1} (recall the use of \emph{static single assignment}) and those values are used in the recursive call. Finally, \code{main} is a procedure with one rule that simply invokes \code{factSum}.
\end{example}

\subsection{Semantics of the rule-based language}


\begin{center}
\begin{figure}[tbp]
\begin{center}
\(
\small
\begin{array}{|lc|}
  \hline 
  (1) & \semrule
  {\bc \equiv \rrassigns{x}{t}~~ \fevalt(t,\tv)=v}
  {
    \tuple{p,\bc{\cdot}\stkbc,\tv}{\cdot} \ar 
    \rrderiv
    \tuple{p,\stkbc,\tv[x\mapsto v]}{\cdot} \ar
  }\\[0.5cm]
  (2) & \semrule
  {
   \bc \equiv m(\bar{x},\bar{y})~~~~ m(\bar{x'},\bar{y'}) \rulearrow g, \bc_1 \cdot \cdot \cdot \bc_k \in \rbrprog{} ~\mi{fresh}\\
   \tv_1 \equiv [\overline{x' \mapsto \tv(x)}] ~~ \fevalg(g,\tv_1) = \tv_2  	
  	}
  {
  	\tuple{p,\bc{\cdot}\stkbc,\tv}{\cdot} \ar 
  	\rrderiv
  	\tuple{m,\bc_1 \cdot \cdot \cdot \bc_k,\tv_1 \uplus \tv_2} {\cdot} \tuple{p[\overline{y' \sim y}],\stkbc,\tv}{\cdot} \ar
  }\\[0.5cm]
  (3) & \semrule
  {}
  {
    \tuple{m,\epsilon,\tv_1}{\cdot}\tuple{p[\overline{y' \sim y}],\stkbc,\tv}{\cdot} \ar 
     \rrderiv
     \tuple{p,\stkbc,\tv[\overline{y \mapsto \tv_1(y')}]}{\cdot} \ar
  }\\[0.3cm]
  \hline
\end{array}
\)
\end{center}
\caption{Operational semantics of rule-based programs}\label{fig:rrsem}
\end{figure}
\end{center}

The rule-based language is evaluated using an operational semantics based on \emph{variable mappings} and \emph{configurations}, which are defined as:

\begin{definition}[Variable mappings]\label{def:variableMapping}
	A \emph{variable mapping} $\tv \in \TV$ is a mapping $[\overline{x \mapsto v}]$ that associates values $\overline{v}$ (namely integer numbers or ground constructed terms) to variables $\overline{x}$. We use the symbol $\emptymap{}$ for empty mappings and $\tv_1 \uplus \tv_2$ for the union of variable mappings with disjoint domain. The notation $\tv[\overline{x \mapsto v}]$ represents the extension of $\tv$ with the new mappings $[\overline{x \mapsto v}]$ (this operation redefines the previous mappings for variables $\overline{x}$ if they appear in the domain of $\tv$). The application $\tv(x)$ returns the value $v$ associated to variable $x$, and $\tv(t)$ returns the term resulting of replacing every variable in $t$ by its value in $\tv$ (similarly for patterns $p$).
\end{definition}

\begin{example}[Variable mappings]\label{ex:variableMapping}
	Consider two variable mappings $\tv_1 \equiv [x \mapsto 3, y \mapsto \TYPE{Nil}]$ and $\tv_2 \equiv [z \mapsto \TYPE{Cons}(3,\TYPE{Nil})]$. Since $\tv_1$ and $\tv_2$ have disjoint domains its union $\tv_1 \uplus \tv_2$ is defined, with result $[x \mapsto 3, y \mapsto \TYPE{Nil}, z \mapsto \TYPE{Cons}(3,\TYPE{Nil})]$. On the other hand, $\tv_1[x \mapsto  \TYPE{Nil}, z \mapsto 0] = [y \mapsto \TYPE{Nil},x \mapsto  \TYPE{Nil}, z \mapsto 0]$  because the mapping $x\mapsto3$ has been redefined in the extension of $\tv_1$.
\end{example}

\begin{definition}[Configurations]\label{def:configurations}
	A \emph{configuration} \cbstart(or \emph{call stack}\cbend), denoted as $\ar$, is a sequence of \emph{activation records}. An activation record is a triple of the form $\tuple{p,\bc{\cdot}\stkbc,\tv}$ where $p$ is a procedure name,\footnote{While
	procedure names are not really needed in the operational semantics, they increase clarity and
	simplify the proofs in~\ref{sec:proofs}.} $\bc$ is the next statement to execute, $\stkbc$ is the sequence of  statements after $\bc$, and
	$\tv$ is the variable mapping that stores the values of the procedure parameters and local variables. Elements in sequences are separated by dots, where the last element of the sequence can represent the rest of the sequence (for example $\bc_1 \cdot \bc_2 \cdot \stkbc$ or $\tuple{p,\bc{\cdot}\stkbc,\tv} \cdot \ar$),
	and we overload the symbol $\emptyseq$ to denote empty sequences.
\end{definition}

Figure~\ref{fig:rrsem} shows the rules of the operational semantics ($\rrderiv$) that evaluates RBR programs. We consider a function $\fevalt(t,\tv)$ that evaluates a term $t$ under a variable mapping $\tv$, returning a value $v$, and a function $\fevalg(g,\tv)$ that checks if a guard $g$ is satisfied under a variable mapping $\tv$, returning a new (possibly empty) mapping that performs pattern matching.\footnote{The straightforward definition of these functions can be found in~\ref{sec:proofs}.}
Figure~\ref{fig:rrsem} contains 3 rules. Rule $(1)$ evaluates activation records where the next statement is an assignment. In that case the value obtained from the right-hand side is introduced in the variable mapping of the activation record. Rule $(2)$ evaluates an activation record where the next statement is a procedure call $m(\bar{x},\bar{y})$. The first step is to obtain a \emph{fresh} version of a rule of $m$ with all its variables renamed to avoid any collision. Then a variable mapping $\tv_1$ is created for parameter passing and this mapping is used to evaluate the rule guard $g$. If the guard is evaluated to $\mi{true}$, a new activation record with the rule body and the variable mapping\footnote{\emph{Static single assignment} guarantees that $x$ does not appears in $\tv_1$, so we could use mapping union instead of extension.} $\tv_1 \uplus	\tv_2$ (where $\tv_2$ is generated during the evaluation of the guard) is inserted in front of the configuration. Note that the activation record of the caller stores the relation $\overline{y' \sim y}$ between the output values of the rule and the output parameters of the call. Finally, rule $(3)$ handles empty activation records, which are removed after the output values of the callee are stored in the variable mapping of the caller.

When needed, we decorate steps with two values $\rrderiv^{a \cdot b}$: the semantic rule $a$ applied---(1), (2) or (3)---and the program rule number $b$ used, considering the whole program. If the step is not a procedure call i.e., it uses semantic rules (1) or (3), the program rule number is set to $\epsilon$. Examples of these decorations can be seen in the next example:

\begin{example}[Evaluation of an RBR program]\label{ex:rbr_evaluation}
	The evaluation of the \code{main} procedure in Figure~\ref{fig:running} proceeds as shown below. For simplicity, when obtaining fresh names for variables---rule (2) in Figure~\ref{fig:rrsem}---we simply use subscripts ($_n$) with the same number as the current configuration $\ar_n$, we have underlined the statement that controls each step, and 
	we write $\rrderiv^*$ for several $\rrderiv$-steps. Note that the program rule of \TYPE{factSum} is the $4^{th}$ rule in the program, and the second rule of \TYPE{while\_1} is the $6^{th}$ rule in the program.
    \[
    \begin{array}{l@{}l}
    \ar_0 \equiv & \tuple{\TYPE{main},\underline{\TYPE{l$_0$:=Cons(2, Nil)}} \cdot \TYPE{factSum($\langle$l$_0$$\rangle$,$\langle$r'$_0$$\rangle$)},\emptymap} \rrderiv^{(1)\cdot \epsilon} \\

    \ar_1 \equiv & \tuple{\TYPE{main},\underline{\TYPE{factSum($\langle$l$_0$$\rangle$,$\langle$r'$_0$$\rangle$)}},[\TYPE{l$_0$} \mapsto \TYPE{Cons(2, Nil)}]} \rrderiv^{(2)\cdot 4} \\

    \ar_2 \equiv & \tuple{\TYPE{factSum},\underline{\TYPE{sum$_2$:=0}} \cdot \TYPE{while\_1($\langle$l$_2$, sum$_2$$\rangle$, $\langle$sum'$_2$$\rangle$)},[\TYPE{l$_2$} \mapsto \TYPE{Cons(2, Nil)}]} \cdot \\
                 & \tuple{\TYPE{main}[\TYPE{sum'$_2$} \sim \TYPE{r'$_0$}],\emptyseq,[\TYPE{l$_0$} \mapsto \TYPE{Cons(2, Nil)}]} \rrderiv^{(1)\cdot \epsilon} \\

	\ar_3 \equiv & \tuple{\TYPE{factSum},\underline{\TYPE{while\_1($\langle$l$_2$, sum$_2$$\rangle$, $\langle$sum'$_2$$\rangle$)}},[\TYPE{l$_2$} \mapsto \TYPE{Cons(2, Nil)},\TYPE{sum$_2$} \mapsto 0]} \cdot \\
	& \tuple{\TYPE{main}[\TYPE{sum'$_2$} \sim \TYPE{r'$_0$}],\emptyseq,[\TYPE{l$_0$} \mapsto \TYPE{Cons(2, Nil)}]} \rrderiv^{(2)\cdot 6} \\

	\ar_4 \equiv & \tuple{\TYPE{while\_1}, \underline{\TYPE{fact($\langle$e$_4$$\rangle$,$\langle$prod$_4$$\rangle$)}} \cdots \TYPE{while\_1($\langle$l1$_4$,sum$_4$$\rangle$,$\langle$sum'$_4$$\rangle$)},[\TYPE{l$_4$} \mapsto \TYPE{Cons(2, Nil)},\\
		         & ~~~\TYPE{sum$_4$} \mapsto 0, \TYPE{e$_4$} \mapsto 2, \TYPE{l1$_4$} \mapsto \TYPE{Nil}]} \cdot \\
	             & \tuple{\TYPE{factSum}[\TYPE{sum'$_4$} \sim \TYPE{sum'$_2$}],\emptyseq,[\TYPE{l$_2$} \mapsto \TYPE{Cons(2, Nil)},\TYPE{sum$_2$} \mapsto 0]} \cdot \\
                 & \tuple{\TYPE{main}[\TYPE{sum'$_2$} \sim \TYPE{r'$_0$}],\emptyseq,[\TYPE{l$_0$} \mapsto \TYPE{Cons(2, Nil)}]} \rrderiv^* \\

	\ar_5 \equiv & \tuple{\TYPE{while\_1}, \underline{\emptyseq},[\TYPE{l$_4$} \mapsto \TYPE{Cons(2, Nil)},\TYPE{sum$_4$} \mapsto 0,\TYPE{e$_4$} \mapsto 2, \TYPE{l1$_4$} \mapsto \TYPE{Nil},\TYPE{sum'$_4$} \mapsto 2]} \cdot \\
	& \tuple{\TYPE{factSum}[\TYPE{sum'$_4$} \sim \TYPE{sum'$_2$}],\emptyseq,[\TYPE{l$_2$} \mapsto \TYPE{Cons(2, Nil)},\TYPE{sum$_2$} \mapsto 0]} \cdot \\
	& \tuple{\TYPE{main}[\TYPE{sum'$_2$} \sim \TYPE{r'$_0$}],\emptyseq,[\TYPE{l$_0$} \mapsto \TYPE{Cons(2, Nil)}]} \rrderiv^{(3)\cdot \epsilon} \\

	\ar_6 \equiv & \tuple{\TYPE{factSum},\underline{\emptyseq},[\TYPE{l$_2$} \mapsto \TYPE{Cons(2, Nil)},\TYPE{sum$_2$} \mapsto 0,\TYPE{sum'$_2$} \mapsto 2]} \cdot \\
	& \tuple{\TYPE{main}[\TYPE{sum'$_2$} \sim \TYPE{r'$_0$}],\emptyseq,[\TYPE{l$_0$} \mapsto \TYPE{Cons(2, Nil)}]} \rrderiv^{(3)\cdot \epsilon} \\

	\ar_7 \equiv & \tuple{\TYPE{main},\emptyseq,[\TYPE{l$_0$} \mapsto \TYPE{Cons(2, Nil)}, \TYPE{r'$_0$} \mapsto 2]} \\
    \end{array}
    \]
\end{example}

\begin{definition}[Traces and sequences of steps]
  A \emph{trace} $\trace = \ar_0 \rrderiv^{r_1} \ar_1 \rrderiv^{r_2} \ldots \rrderiv^{r_n} \ar_n$ is a sequence of $\rrderiv$-steps from an initial configuration $\ar_0$. Given a trace $\trace$, its \emph{steps} are defined as $\trsteps{\trace} = \tuple{r_1, r_2, \ldots, r_n}$, i.e., the sequence of step decorations of the $\rrderiv$-steps in \trace. 
  Finally, the set of \emph{trace steps} combines the steps of all the possible traces starting from a given configuration. Formally, $\traces(\ar_0) =  \{ \trsteps{\ar_0 \rrderiv^{r_1} \ar_1 \rrderiv^{r_2} \ldots \rrderiv^{r_n} \ar_n} \mid n \in \mathbb{N}\}$. 
  \cbstart These notions will play an important role when defining the soundness of the abstraction (Section~\ref{sec:soundness}) and the soundness of the typed-norms inference in Section~\ref{sec:inference_soundness}. \cbend
\end{definition}


\cbstart 
Let us informally discuss how the operational semantics is typically
instrumented to account for cost of traces.
We first assume that our language includes a special instruction
\lst{tick($n$)}, where $n$ is a number, which is used to simulate a
consumption of $n$ resources. Note that $n$ can be negative to
simulate release of resources as well. In practice, we also allow the
use any arithmetic expression instead of $n$, but for simplicity we
assume it is a constant.
Next, we instrument activation records with a \emph{global} resource
consumption counter called $\texttt{cost}$, and add a corresponding
semantic rule for \lst{tick($n$)} that simply sets $\texttt{cost}$ to
$\texttt{cost}-n$.
A trace $\trace$ is said to be valid if $\texttt{cost}$ never takes
negative values, i.e., $\texttt{cost}$ is initialized with an amount
of resources that is enough to carry out this particular execution.
The cost of a trace is the minimal initial value for \texttt{cost}
that makes it valid.
Finally, a function $f$ that maps initial configurations to
nonnegative values is called an upper bound on the worst-case cost if,
for any initial configuration $\ar_0$, setting the initial value of
$\texttt{cost}$ to $f(\ar_0)$ guarantees generating valid traces
only.
It is called a lower bound on the best-case cost if, for any initial
configuration $\ar_0$, setting the initial value of $\texttt{cost}$ to
a value smaller than $f(\ar_0)$ generates an invalid trace.
\cbend




\section{Size Abstraction Using Typed-Norms}\label{sec:abstraction}

The resource analysis framework we rely
on~\cite{AlbertAGPZ07,AlbertACGGPR15} performs two phases: (1) the
program is transformed into an abstract representation where data are
replaced by their sizes, and (2) such intermediate program is then
analyzed to obtain upper/lower bounds on the resource consumption. In
this section we will present how to abstract the size of program data
using typed-norms and we will show that this abstraction is sound
wrt.\ the original semantics of RBR programs. The second phase,
performed by cost relation solvers like PUBS~\cite{AlbertGM12},
CoFloCo~\cite{DBLP:conf/aplas/Flores-MontoyaH14,Flores-Montoya16} or
the solver in CiaoPP~\cite{SerranoLH14}, is independent of the technique
applied to abstract data sizes and therefore will not be covered in
this paper.

\subsection{Preliminaries on Typed-Norms}\label{sec:prel-typed-norms}

In order to obtain the abstract representation of a program, we first replace
data with numbers representing their sizes and then transform each instruction
to linear constraints that reflect how these sizes change.
The mapping from data structures to sizes is done by means of size functions
(usually called \emph{norms}). The most well-known norm used in the literature
is \emph{term-size}~\cite{DBLP:conf/tapsoft/BossiCF91,BCGGV:TOPLAS:2007}, which
counts the number of constructors in a given data structure:

\begin{definition}[Term-size norm]\label{def:term-size} The size of a term $t$ using the term-size norm is defined as:
$\label{eq:normscheme:ts}
\tsnorm{\DC(t_1,\ldots, t_n)} = 1 + \sum_{i=1}^n \tsnorm{t_i}$.
\end{definition}
Notice that term-size is  defined for terms containing only data
constructors and cannot handle integer numbers.

\begin{example}[Term-size norm]\label{ex:term-size}
Consider the following program that uses binary trees (\TYPE{BT}), list of
binary trees (\TYPE{BTL}), and lists of lists of binary trees (\TYPE{BTLL}).
{\normalfont
\begin{lstlisting}[frame=none, multicols=2]
data BT = E | B(BT, BT)
data BTL = N | C(BT, BTL)
data BTLL = NL | CL(BTL,BTLL)

trees :: $\langle\TYPE{BTL}\rangle \times \langle\TYPE{Int}\rangle$
trees($\langle$l$\rangle$,$\langle$n'$\rangle$) $\rulearrow$  match(l, N),
  n' := 0
trees($\langle$l$\rangle$,$\langle$n'$\rangle$) $\rulearrow$ match(l, C(h,t)),
   trees($\langle$t$\rangle$, $\langle$nt$\rangle$),
   n' := 1 + nt

sumtrees :: $\langle\TYPE{BTLL}\rangle \times \langle\TYPE{Int}\rangle$
sumtrees($\langle$l$\rangle$, $\langle$n'$\rangle$) $\rulearrow$  match(l, NL),
  n' := 0
sumtrees($\langle$l$\rangle$,$\langle$n'$\rangle$) $\rulearrow$ match(l, CL(h,t)),
   trees($\langle$h$\rangle$, $\langle$nh$\rangle$),
   sumtrees($\langle$t$\rangle$, $\langle$nt$\rangle$),
   n' := nh + nt
\end{lstlisting}
}
The \code{trees} function counts the number of binary trees in a list of type
\TYPE{BTL}, and \code{sumtrees} counts the number of binary trees
in all the lists inside a list of type \TYPE{BTLL}. Using the term-size norm,
an empty list \code{N} or \code{NL} has size $1$ (one data constructor),
whereas the list
\code{C(E,C(E,N))} has size $5$ (three
list constructors plus two empty binary tree constructors).
The function
\code{sumtrees} traverses every element of the list for computing its
number of trees. Using the term-size norm, a static analysis obtains a
complexity of $O(n^2)$ \cbstart for the function \code{sumtrees}\cbend, where $n$ is the total size of the original list (i.e.,
the number of list, list of lists, and binary tree constructors). The size of
each inner \TYPE{BTL} list is bounded by $n$, so each call to \code{trees} will
contribute
$O(n)$ to the overall complexity. This is an example where term-size is not
very precise, as it does not keep separate the information about the length of
the \TYPE{BTLL} list ($l$), the length of the inner \TYPE{BTL} lists ($s$), and
the size of the binary trees ($b$) to obtain a more accurate complexity of $O(l
\times s)$. Note that the size of the binary trees does not play any role in
the actual complexity of \code{sumtrees} because the binary trees are not
traversed.
\end{example}

In order to overcome the mentioned imprecision we use typed-norms, which distinguish data constructors according to their types. Using this kind of norms we can measure the length of a list and the size of its elements separately, similarly to what has been done in the context of termination analysis of logic programs~\cite{BCGGV:TOPLAS:2007}. Before introducing typed-norms, we present some notation about types that will be used throughout the paper:

\begin{definition}\label{def:types}
	Given two types $T_1$ and $T_2$, we say that $T_2$ depends on $T_1$, written $T_1 \preceq T_2$, if the definition of type
	$T_2$ uses (either directly or transitively) type $T_1$. Relying on this notion, we define the \emph{set of constituent types of T} as $\deptypes{T} = \{ T' \mid T' \preceq T\} \cup \{T\}$, i.e., all the types involved in the definition of $T$ including $T$ itself. If $T \preceq T$ we say that $T$ is a \emph{recursive type}.
	%
	\cbstart We use $\typer{t}{i}$ to refer to the type of a term $t$ in the i-th rule of the program, or simply $\type{t}$ if the rule is clear from the context (note that the same variable can have different types in different rules). As the program is well-typed then every term $t$ has a monomorphic type assigned, so $\typer{t}{i}$ simply returns that type.\cbend
	%
\end{definition}
	By definition of \tttype{IntList}---Line~\ref{abs:intlist}) in Figure~\ref{fig:running}---we have that $\inttype \preceq \tttype{IntList}$ and $\tttype{IntList} \preceq \tttype{IntList}$, therefore \tttype{IntList} is a recursive type and $\deptypes{\tttype{IntList}} = \{\inttype, \tttype{IntList}\}$.
\begin{definition}[Typed-norms for closed terms]\label{def:typed-norms}
We consider two typed-norms for computing the size of a closed term $t$
regarding a type $T$: \typednormorig{t}{T} and \typednorm{t}{T}. We will not
contemplate integer values for the first norm but non-negative integer numbers
(denoted as \nattype) as we explain later.
\[
\typednormorig{t}{T} =
\begin{cases}
t & \mbox{ if } T=\nattype \mbox{ and } \type{t} = \nattype\\
0 & \mbox{ if } T \neq \nattype \mbox{ and } \type{t} = \nattype\\
%
%
1 + \sum_{i=1}^n \typednormorig{t_i}{T} &
\mbox{ if } t = \DC(t_1,\ldots, t_n) \mbox{ and } \type{t} = T\\
\sum_{i=1}^n \typednormorig{t_i}{T} &
\mbox{ if } t = \DC(t_1,\ldots, t_n) \mbox{ and } \type{t} \neq T\\
\end{cases}
\end{equation*}

\begin{equation*}
\typednorm{t}{T} =
\begin{cases}
t & \mbox{ if } T=\inttype \mbox{ and } \type{t} = \inttype\\
0 & \mbox{ if } T \neq \inttype \mbox{ and } \type{t} = \inttype\\
1 + \sum_{i=1}^n \typednorm{t_i}{T} &
  \mbox{ if } t = \DC(t_1,\ldots, t_n) \mbox{ and } \type{t} = T\\
\mi{max}_{i=1}^n \typednorm{t_i}{T} &
  \mbox{ if } t = \DC(t_1,\ldots, t_n) \mbox{ and } \type{t} \neq T\\
\end{cases}
\]
%
\end{definition}
Note that, in the last case of the definition of $\typednorm{t}{T}$, the
$\mi{max}$ of
an empty set is $-\infty$ if $T$ is \inttype, and $0$ otherwise. In
principle, this depends on the domain of the elements whose
$\mi{max}$ we are taking.
The intuition behind these typed-norms is to count the number of data constructors of type $T$ that appear in term $t$. For integer numbers 
they simply return their value, 
whereas for constructed terms they check whether the term has type $T$ or not
in order to count the constructor in the head. The difference is how to handle
those nested subterms of type $T$ that occur in a bigger term of type different
from $T$: \typednormorig{t}{T} sums the sizes of those subterms, whereas
$\typednorm{t}{T}$ just keeps the size of the maximal subterm.
The reason to constrain \typednormorig{t}{T} to \nattype{} instead of
\inttype{} is
that adding the integers inside a data structure provides a size that is not
sound when negative values are involved. For example, $\typednormorig{\TYPE{Cons(3,
    Cons(-3, Nil))}}{\inttype}$ would be $0$.
On the other hand,
$\typednorm{\TYPE{Cons(3,
Cons(-3, Nil))}}{\inttype} = 3$, so we know that any integer inside the list is
smaller than or equal to $3$.
We claim that $\typednorm{t}{T}$ suits better in the static analysis framework
we consider, as we explain in Example~\ref{ex:typed-norms}.

In our rule-based language we only consider the basic type of integer numbers, so the rest of values are constructed using data constructors $\DC$. However, typed-norms could be easily extended to support more basic types by providing a suitable case in the definition of \typednormorig{\cdot}{T} and \typednorm{\cdot}{T}. For example, if \strtype{} were a basic type, we could define the size of strings as their length: $\typednormorig{\cdot}{T} = \mathit{length}(t) \mbox{ if } T=\strtype \mbox{ and } \type{t} = \strtype$ (similarly for \typednorm{\cdot}{T}).
\begin{example}[Typed-norms for closed terms]\label{ex:typed-norms}
%
%
%
\cbstart
Consider the program involving lists of binary trees shown in
Example~\ref{ex:term-size}. The program does not contain integer values but only constructed data, so in this case both typed-norms $\typednorm{\cdot}{T}$ and $\typednormorig{\cdot}{T}$ are sound. \cbend
Regarding the
list types \TYPE{BTL} and \TYPE{BTLL}, both typed-norms obtain the same values:
$\typednormorig{\tttype{N}}{\tttype{BTL}} =
\typednorm{\tttype{N}}{\tttype{BTL}} = 1$ and
$\typednormorig{\tttype{NL}}{\tttype{BTLL}} =
\typednorm{\tttype{NL}}{\tttype{BTLL}} = 1$ because they contain exactly one
list constructor.
They also coincide in the list of type \code{BTLL} with two
elements $t \equiv$
\code{CL( C(E,C(E,N)), CL(N,NL) )} regarding the type
\TYPE{BTLL}---$\typednormorig{t}{\tttype{BTLL}} =
\typednorm{t}{\tttype{BTLL}} = 3$---as both count the number of list
constructors: two constructors \code{CL} and one \code{NL}. However, the two
norms differ wrt.\ type \TYPE{BTL}:
$\typednormorig{t}{\TYPE{BTL}} = 4$ because it is the sum of all the \code{BTL}
constructors in the list, 3 in the first element and one in the second element,
but $\typednorm{\cdot}{\TYPE{BTL}} = 3$ as it is the
maximum number of \code{BTL} constructors that appear in some element of the
list. This difference has an impact on the
concrete upper bounds
obtained by static analysis. For example, the function \code{sumtrees} defined
in Example~\ref{ex:term-size} has an asymptotic complexity of $O(l \times s)$
in both cases, where $l$ is the size of the list of lists \TYPE{BTLL} and $s$
is the size of its inner lists \TYPE{BTL}. The size of any list $l$ is the same
using both typed-norms
($\typednormorig{l}{\tttype{BTLL}} = \typednorm{l}{\tttype{BTLL}}$),
however, the size of its elements of type \TYPE{BTL} will differ
(\typednormorig{l}{\tttype{BTL}}
$\geq$ \typednorm{l}{\tttype{BTL}}). Since the static analysis framework we are
considering is compositional and assumes worst-case scenarios for each
iteration, the upper bound obtained using \typednorm{\cdot}{T}
\cbstart will in general be tighter.\cbend
\end{example}

As a final comment, note that we could also define a typed-norm analogous to \typednorm{\cdot}{T} that
estimates the minimum value, by replacing ``$\max$'' with ``$\min$''. This
would be useful in situations where the upper bounds depend on the minimum
value that an inner element of type $T$ can take, for example in recursive
definitions where the value increments in every invocation.
Similarly, if we replace the sum in the definition of
\typednorm{\cdot}{T} by ``$\max$'', then we estimate the depth of
terms instead of the number of their constructs. This is useful in
cases like the  example in Section~\ref{sec:intro}.
Note that all these norms can be used at the same time, so the size of
some elements could be measured using minimum, maximum, depth, etc.,
if these values are relevant for the cost.
%

\subsection{Our Transformational Approach}\label{sec:our-transf-appr}

Next we describe how to use typed-norms to translate RBR programs to abstract programs that only contain procedure calls and constraints between sizes. From this information the resource analysis framework can produce cost relations~\cite{AlbertAGP11a} to obtain the desired bounds.
The main feature of our approach is that we allow the use of several abstractions for the same
variable at the same time, as in~\cite{BCGGV:TOPLAS:2007}. Thus, we can estimate the size of a term using different measures, and even
relations between sizes of different measures which might be crucial for precision as claimed in~\cite{BCGGV:TOPLAS:2007}.

This is important since two different parts of the program might traverse two different parts of the same data structure, so having both
measures allows us to provide tighter bounds.
%
%
%
Since rules will usually contain variables, we need \emph{symbolic} versions of typed-norms. These versions extend the typed-norms presented in Definition~\ref{def:typed-norms} with new cases for handling variables. In the following definition we only show the symbolic typed-norm \typednorm{\cdot}{T}, but it is analogous for \typednormorig{\cdot}{T}. Notice that we use the same notation to represent typed-norms for closed terms and their symbolic counterparts because the version used is clear from the context.

\clearpage
\begin{definition}[Symbolic typed-norms]\label{def:symb-typed-norms}
The symbolic typed-norm to compute the size of a term $t$ (possibly with variables) regarding a type $T$ is defined as:
	\[
	\typednorm{t}{T} =
	\begin{cases}
	X_T & \mbox{ if } t \equiv x \mbox{ and } T \in \deptypes{\type{x}}\\
    - \infty  & \mbox{ if } t \equiv x, T = \tint \mbox{ and } \tint \notin \deptypes{\type{x}}\\

	n & \mbox{ if } t \equiv n \mbox{ and } T = \inttype\\
	\typednorm{e_1}{T} \pm \typednorm{e_2}{T} & \mbox{ if } t \equiv e_1 \pm e_2 \mbox{ and } T=\inttype 
	\\
%
%
	1 + \sum_{i=1}^n \typednorm{t_i}{T} & \mbox{ if } t = \DC(t_1,\ldots, t_n) \mbox{ and } \type{t} = T\\
	\mi{max}_{i=1}^n \typednorm{t_i}{T} & \mbox{ if } t = \DC(t_1,\ldots, t_n) \mbox{ and } \type{t} \neq T\\
	0 & \mbox{ in other case} \\
	\end{cases}
	\]

\end{definition}

    Note that, as in the closed case, the max of an empty set is $-\infty$ if
    $T = \inttype$, and $0$ otherwise.
	When a variable $x$ is abstracted using a type $T \in \deptypes{\type{x}}$, it generates a variable $X_T$.
	On the other hand, abstracting variables not related to \inttype{} using the type $T = \inttype{}$ generates the size $-\infty$. This special case is useful to obtain precise sizes in terms containing integer numbers and variables like pairs $(\inttype, \TYPE{Seq})$. For example, we get  $\typednorm{\TYPE{Pair}(-5, x)}{\inttype} = max(\typednorm{-5}{\inttype}, \typednorm{x}{\inttype}) = max(-5, -\infty) = -5$ because $\tint \notin \deptypes{\type{x}} = \{ \TYPE{Seq} \}$, which is more precise than a size of $0$ obtained if $\typednorm{x}{\inttype} = 0$.
	Integer numbers are abstracted to themselves if $T$ is $\inttype$, and
	in integer expressions with $T = \inttype$ both subexpressions are abstracted recursively.
	The cases for data constructors are similar to Definition~\ref{def:typed-norms}.
	Finally, if the type $T \neq \inttype$ is not valid wrt.\ the type of $t$, i.e., $t$ cannot include any subterm of type $T$, then the symbolic typed-norm simply returns $0$.
	Notice that the symbolic version is equivalent to the typed-norm defined in Definition~\ref{def:typed-norms} for closed terms.

	As an example, consider the term $3 + x$ where both $3$ and $x$ have type \inttype. The abstractions wrt.\ $\inttype$ and $\tttype{IntList}$ are 
	$\typednorm{3+x}{\inttype} = 3 + X_\inttype$  and $\typednorm{3+x}{\tttype{IntList}} = 0$. If we consider a more complex term like $t \equiv$ \cbstart \code{Cons(z,Cons(5,Nil))} \cbend we have that $\typednorm{t}{\inttype} = \max(Z_\inttype, max(5, 0))$  and $\typednorm{t}{\tttype{IntList}} = 1+1+1 =3$.

\begin{figure}[tbp]
\begin{center}
	\begin{tabular}{l@{~~~~}l}
		\toprule
		Guard ($g$) & Guard abstraction ($g^\alpha$) \\
		\midrule
		true & $\top$\\
		$g_1 \wedge g_2$ & $g_1^\alpha \wedge g_2^\alpha$ \\
		$e_1~\mathit{op}~e_2$ & \cbstart $(e_1~\mathit{op}~e_2)[\overline{y\mapsto Y_\inttype}]$, \cbend where $\mi{op} \in\{>, =, \geq\}$ \\
		                      & ~~~and $\{\overline y\} = \mi{vars}(e_1~\mathit{op}~e_2)$\\
		$\match(x,p)$ & $\bigwedge\{ X_T = \typednorm{p}{T} \mid T\in\typednorms(x) \}$ \\
		$\notmatch(x,p)$ & $\top$ \\
		~\\
		\toprule
		Statement ($b$) & Statement abstraction ($b^\alpha$) \\
		\midrule
		$p(\bar{x},\bar{y})$ & $p(\bar{X},\bar{Y})$ \\
		$x := t$ & $\bigwedge\{ X_T = \typednorm{t}{T} \mid T\in\typednorms(x) \}$ \\
		~\\
		\toprule
		Rule ($r$) & Rule abstraction ($r^\alpha$) \\
		\midrule
		$p(\bar{x},\bar{y}) \rulearrow g, b_1, \ldots , b_n$ & $p(\bar{X},\bar{Y}) \rulearrow g^\alpha \wedge \nonneg{S}, b_1^\alpha, \ldots, b_n^\alpha$\\
				                        &  ~~~where $S = \bigcup_{i=1}^n \vars(b_i) \cup \vars(g) \cup \{\bar{y}\} \cup \{\bar{x}\}$   \\
		~\\
		\toprule
		Procedure ($\mi{Proc}$) &  Procedure abstraction ($\mi{Proc}^\alpha$) \\
		\midrule
		$p :: T_1 \times \cdots \times T_k ~ r_1 \ldots r_m$ & $r_1^\alpha \ldots r_m^\alpha$\\
	\end{tabular}
\end{center}
	\caption{Size abstraction for guards, statements, rules, and procedures}
	\label{fig:sizeabst}
\end{figure}

In some cases it is inevitable to measure the size of a variable $x$ wrt.\ all its dependent types---$\deptypes{\type{x}}$---to bound the resource consumption of a program. However, \cbstart often many types \cbend do not play any role for termination or resource consumption and therefore can be safely ignored. In order to consider only those important types we use the notion of \emph{relevant types}.
For every variable $x$ and rule $i$ in the program, we write $\typednorms_i(x)$ to refer to the set of types wrt.\ which we want to measure the size of that variable. In Section~\ref{sec:inference} we explain how to automatically infer these types.
If the rule is clear from the context, we usually omit the subscript.
Note that the set $\typednorms(x)$ must be a subset of all dependent types of a variable $x$ in a program rule, i.e., $\typednorms_i(x) \subseteq \deptypes{\typer{x}{i}}$.

Using the above symbolic typed-norms and the notion of relevant type-norms we can define our transformation of RBR programs.
Figure~\ref{fig:sizeabst} contains the definition of the function $\cdot^\alpha$ that abstracts guards, statements, rules, and procedures into \cbstart procedure calls and conjunctions of arithmetic constraints between sizes\footnote{Relations ($=, >, \geq$) between linear expressions containing integer variables representing sizes. }. \cbend In all these cases we assume that given a type $T\in\typednorms(x)$ the variable $X_T$ is an
integer valued variable representing the size of (the value of) $x$ w.r.t.\ the typed-norm $\typednorm{.}{T}$.
If $T\neq\inttype$, then we implicitly assume $X_T\geq 0$, as constructed terms cannot have negative size.
For a sequence of variables $\bar{x}$, we consider $\bar{X}$ is a sequence
that results from replacing each variable $x_i$ by
$X_{T_1}^i,\ldots,X_{T_n}^i$, where $\typednorms(x_i) =
\{T_1,\ldots,T_n\}$.
%
%

As shown in Figure~\ref{fig:sizeabst}, the trivial guard \code{true} is transformed into the truth value $\top$, the identity element of conjunction. Conjunction of guards ($g_1 \wedge g_2$) is abstracted into the conjunction of their size abstractions. Arithmetic guards are abstracted by replacing their variables $y$ with $Y_\inttype$.
For example, the expression $5$ is directly abstracted as $5$, whereas $x + y + 8$ is abstracted as $X_\inttype + Y_\inttype + 8$.
Match guards---$\match(x,t)$---generate a conjunction with as many elements as types in $\typednorms(x)$. Each element in the conjunction will be an equality between the variable $X_T$ (the size of $x$ wrt.\ $T$) and $\typednorm{t}{T}$ (the size of the term $t$ wrt.\ $T$, computed using the symbolic typed-norm). For example, if $\typednorms(z) = \{\inttype, \tttype{IntList}\}$, the abstraction of $\match(z,\mi{Cons(6, Nil)})$ will be the conjunction $Z_\inttype = max(6,0) \wedge Z_\tttype{IntList} = 2$. On the other hand, the $\notmatch(x,t)$ guard is abstracted to the truth value $\top$, as it does not provide any information relating the sizes of $x$ and $t$. Regarding statements, the abstraction of a procedure call $p(\bar{x},\bar{y})$ simply replaces each variable with variables related to its different sizes\footnote{When needed, we use angles ``$\langle$'' and ``$\rangle$'' to enclose input and output variables, as we do in Section~\ref{sec:intermediate-representation}.} (for example, $p(x,y)$ is abstracted to $p(X_\inttype, X_\tttype{IntList}, Y_\inttype)$ considering $\typednorms(x) = \{\inttype, \tttype{IntList}\}$ and $\typednorms(y) = \{\inttype\}$). Likewise, an assignment $x := t$ is abstracted
by generating a conjunction of equalities between the different
variables $X_T$ and their sizes. The abstraction of a rule $p(\bar{x},\bar{y}) \rulearrow g, b_1, \ldots , b_n$ proceeds compositionally by abstracting its head, the guard and all the statements.  As mentioned before, non-integer variables in a rule are assumed to be non-negative. The abstraction inserts these constraints explicitly using function $\nonneg{S}$, where $S$ is the set of variables occurring in the rule. The definition of $\nonneg{S}$ is the following:
\[
\nonneg{S} = \bigwedge \{ X_T \geq 0 ~|~ x \in S, T \in \typednorms(x), T \neq \inttype \}
\]
Finally, the abstraction of a procedure is the abstraction of all its rules.

\begin{definition}[Program abstraction]\label{def:programAbstraction}
	Given a program $\rbrprog \equiv [\overline{\mi{Dd}}] ~ \overline{\typedproc}$, its size abstraction $\rbrprog^\alpha$ is obtained by abstracting each procedure, i.e., $\rbrprog^\alpha \equiv \overline{\typedproc^\alpha}$.
\end{definition}
 Figure~\ref{fig:abstraction} shows
the abstraction for a fragment of the running example \cbstart given in Figure~\ref{fig:running}\cbend.
When using the typed-norm \typednorm{\cdot}{T} in Definition~\ref{def:symb-typed-norms},
$P^\alpha$ might include constraints of the form $X_T=E$ where
$E$ is an arithmetic expression that involves ``$\max$''.
If the second phase of the resource analysis framework does not support this kind of constraints, they can be approximated using linear constraints as follows:
(1) replace the sub-expression $\max(B_1,\ldots,B_n)$ by a new auxiliary variable $A$, (2) add the constraints $A \geq B_1\wedge\cdots\wedge A \geq B_n$; and (3) if $B_i \geq 0$ for all $1\le i\le n $, then add the constraint $B_1 + \ldots + B_n \geq A$ as well.
%
This process might be applied repeatedly in case of nested or multiple
occurrences of $\max$. In the case of nested ``$\max$'' expression,
we try to flatten them first.
Notice that this approximation is not the only
approach to handle $\max$ constraints, as they could also be
approximated as in~\cite{AlonsoAG11}. Moreover, in practice, the
constant $-\infty$ (that we used in the definition of norms) can
safely be replaced by any other constant --- replacing it by the
minimum value that \emph{syntactically} appears in the program (or by
$0$) works well in practice.

\begin{figure}[tbp]
\begin{lstlisting}[frame=none,breaklines=false]
(*\hspace{-0.85cm}$\circled{4}\hspace{0.32cm}$*) factSum((*$\langle$*)L$_\inttype$, L$_\tttype{IntList}$$\rangle$, $\langle$Sum'$_\inttype$$\rangle$) $\rulearrow$ $\top$ (*\emph{\textcolor{gray}{// RBR line \ref{fig:runnning:factSum0}}}*)
  $\wedge$ L$_\tttype{IntList} \geq 0$, (*\cbstart \cbend*) (*\emph{\textcolor{gray}{// non-negativity constraints}}*)
  Sum$_\inttype$ = 0, (*\emph{\textcolor{gray}{// RBR line \ref{fig:runnning:factSum1}}}*)
  while_1($\langle$L$_\inttype$, L$_\tttype{IntList}$, Sum$_\inttype$$\rangle$, $\langle$Sum'$_\inttype$$\rangle$) (*\emph{\textcolor{gray}{// RBR line \ref{fig:runnning:factSum2}}}*)

(*\hspace{-0.85cm}$\circled{5}\hspace{0.32cm}$*) while_1($\langle$L$_\inttype$, L$_\tttype{IntList}$, Sum$_\inttype$$\rangle$,(*\label{fig:abstraction:while_1:1}*) $\langle$Sum'$_\inttype$$\rangle$) $\rulearrow$ (*\emph{\textcolor{gray}{// RBR line \ref{fig:runnning:while1_0}}}*)
  L$_\inttype$ = $-\infty$ $\wedge$ L$_\tttype{IntList}$ = 1 (*\emph{\textcolor{gray}{// RBR line \ref{fig:runnning:while1_1}}}*)
  $\wedge$ L$_\tttype{IntList} \geq 0$, (*\emph{\textcolor{gray}{// non-negativity constraints}}*)
  Sum'$_\inttype$ = Sum$_\inttype$ (*\emph{\textcolor{gray}{// RBR line \ref{fig:runnning:while1_2}}}*)

(*\hspace{-0.85cm}$\circled{6}\hspace{0.32cm}$*) while_1($\langle$L$_\inttype$, L$_\tttype{IntList}$, Sum$_\inttype$$\rangle$,(*\label{fig:abstraction:while_1:2}*)$\langle$Sum'$_\inttype$$\rangle$) $\rulearrow$ (*\emph{\textcolor{gray}{// RBR line \ref{fig:runnning:while1_3}}}*)
	L$_\inttype$ = max(E$_\inttype$, L1$_\inttype$) $\wedge$ L$_\tttype{IntList}$ = 1 + L1$_\tttype{IntList}$ (*\emph{\textcolor{gray}{// RBR line \ref{fig:runnning:while1_4}}}*)
	$\wedge$ L$_\tttype{IntList} \geq 0$	$\wedge$ $L1_\tttype{IntList} \geq 0$, (*\emph{\textcolor{gray}{// non-negativity constraints}}*)
	fact($\langle$E$_\inttype$$\rangle$, $\langle$Prod$_\inttype$$\rangle$), (*\emph{\textcolor{gray}{// RBR line \ref{fig:runnning:while1_5}}}*)
	Sum1$_\inttype$ = Sum$_\inttype$ + Prod$_\inttype$, (*\emph{\textcolor{gray}{// RBR line \ref{fig:runnning:while1_6}}}*)
	while_1($\langle$L1$_\inttype$, L1$_\tttype{IntList}$, Sum1$_\inttype$$\rangle$,  $\langle$Sum'$_\inttype$$\rangle$) (*\emph{\textcolor{gray}{// RBR line \ref{fig:runnning:while1_7}}}*)

(*\hspace{-0.85cm}$\circled{7}\hspace{0.32cm}$*) main($\langle$$\rangle$, $\langle$R'$_\inttype$$\rangle$) $\rulearrow$ $\top$ (*\emph{\textcolor{gray}{// RBR line \ref{fig:runnning:main_0}}}*)
	$\wedge$ L$_\tttype{IntList} \geq 0$,(*\emph{\textcolor{gray}{// non-negativity constraints}}*)
	L$_\inttype$ = 2 $\wedge$ L$_\tttype{IntList}$ = 2,(*\emph{\textcolor{gray}{// RBR line \ref{fig:runnning:main_1}}}*)
	factSum($\langle$L$_\inttype$, L$_\tttype{IntList}$$\rangle$, $\langle$R'$_\inttype$$\rangle$)(*\emph{\textcolor{gray}{// RBR line \ref{fig:runnning:main_2}}}*)
\end{lstlisting}
\caption{Abstraction of a fragment of the RBR program from Figure~\ref{fig:running}.\label{fig:abstraction}}
\end{figure}

\begin{example}[Program abstraction]\label{ex:programAbstraction}
  Figure~\ref{fig:abstraction} shows the abstraction using the typed-norm $\typednorm{\cdot}{T}$ of the procedures \code{factSum}, \code{while_1} and \code{main} from Figure~\ref{fig:running}. Each line contains a comment indicating if it involves non-negativity constraints or is the abstraction of a particular line of the original RBR program.
  We assume that list variables ($x \in \{$ \code{l}, \code{l1}$\}$) have $\typednorms(x) = \{\inttype, \tttype{IntList}\}$, whereas integer variables ($x \in \{$ \code{sum}, \code{sum'}, \code{e}, \code{prod}, \code{sum1}, \code{r'}$\}$) have $\typednorms(x) = \{\inttype\}$. The abstraction proceeds rule by rule, abstracting rule heads, guards, and statements; \cbstart so rule numbers (gray circles in the left margin) does not change from the concrete program. \cbend Notice that the constraints from the guard are combined with the non-negativity of all the non-integer variables of each rule.
  %
  The most interesting part is the abstraction of the guard \code{match(l, Cons(e,l1))} in the second rule of \code{while_1}. For the type \inttype, the guard generates L$_\inttype$ = max(E$_\inttype$, L1$_\inttype$)---which could be approximated by linear constraints---and for the type \tttype{IntList} the guard generates the constraint $L_\tttype{IntList} = 1 + L1_\tttype{IntList}$, stating that the new list \code{l1} is one constructor smaller than \code{l}.

\end{example}

\cbstart
Note that deterministic programs can be abstracted to non-deterministic programs if different terms of the same type have the same size. For example, consider the simple data type \code{data Dir = Up \| Down}, and the following \lstinline@price :: $\langle \mathsf{Dir}\rangle \times \langle\tint\rangle$@ procedure:
\vspace*{-0.35cm}
\begin{lstlisting}[frame=none, multicols=2, numbers=none]
price($\langle$x$\rangle$,$\langle$y$\rangle$) $\leftarrow$ match(x, Up), y := 10
price($\langle$x$\rangle$,$\langle$y$\rangle$) $\leftarrow$ match(x, Down), y := 0
\end{lstlisting}
\vspace*{-0.35cm}
Since $\typednorms(x) = \mathsf{Dir}$ and $\typednorms(y) = \mathsf{Int}$ in both rules, and $\typednorm{\mathsf{Up}}{\mathsf{Dir}} = \typednorm{\mathsf{Down}}{\mathsf{Dir}} = 1$, the program abstraction results in the nondeterministic version:
\vspace*{-0.35cm}
\begin{lstlisting}[frame=none, multicols=2, numbers=none]
price($\langle X_{Dir}\rangle$,$\langle Y_\inttype\rangle$) $\leftarrow$
    $X_{Dir}$ = 1, $Y_\inttype$ = 10
price($\langle X_{Dir}\rangle$,$\langle Y_\inttype\rangle$) $\leftarrow$
    $X_{Dir}$ = 1, $Y_\inttype$ = 0
\end{lstlisting}
\vspace*{-0.35cm}
However, the possible nondeterminism introduced by the abstraction is not a problem from the point of view of soundness, as any trace in the deterministic program can be performed in the nondeterministic abstraction. This result is presented in the next section.
\cbend

\subsection{Soundness} \label{sec:soundness}
The abstraction presented in the previous section transforms RBR
programs into abstract programs, which the resource analysis framework
can take as input for computing bounds. Therefore, we ensure that the size of variables that are observed in $\rrderiv$ traces of RBR programs can be observed in traces of their abstractions. In order to prove this soundness result, we need an abstract operational semantics for abstract programs, which is an adaptation of the operational semantics for RBR programs in Figure~\ref{fig:rrsem}. This abstract operational semantics evaluates \emph{abstract configurations} step by step:

\begin{definition}[Abstract configurations]\label{def:abstractConfigurations}
An \emph{abstract configuration}, denoted as $\airstate \equiv \arb \abstractSep \psi$, is a sequence $\arb$ of
\emph{abstract activation records} followed by a conjunction $\psi$ of constraints. Similarly, an abstract activation record, denoted as $\tuple{\bsa}$, is a sequence of constraint conjunctions and abstract procedure calls $p(\bar{X}, \bar{Y})$. We use Greek letters ($\psi$, $\varphi$, \ldots) to denote constraint conjunctions, $\bca$ to denote one constraint conjunction or one abstract procedure call, and $\bsa$ to denote a sequence $\bca_1 \cdots \bca_n$.
\end{definition}


\begin{center}
\begin{figure}[t]
\begin{center}
\(
\begin{array}{|lc|}
  \hline 
  (1) & \semrule
  {\psi{\wedge}\varphi \not\models {\it false}
  }
  {
  	\tuple{\varphi \cdot \bsa} \cdot \arb \abstractSep \psi
  	\rrabsderiv
  	\tuple{\bsa} \cdot \arb \abstractSep \psi{\wedge}\varphi\\
  }\\[0.5cm]
  (2) & \semrule
  {
  	\cbstart p(\bar{X'},\bar{Y'})  \rulearrow \varphi, \bca_1,\ldots, \bca_n \cbend 
  	\in \palpha \mi{~fresh}
    ~~~~~~
     \overline{X = X'} \wedge\varphi{\wedge}\psi \not\models {\it false}
  }
  {
  	\tuple{p(\bar{X},\bar{Y}) \cdot \bsa} \cdot \arb \abstractSep \psi
    \rrabsderiv
    \tuple{\bca_1 \cdots \bca_n}^{\overline{Y = Y'}} \cdot \tuple{\bsa} \cdot \arb \abstractSep \overline{X = X'} \wedge\varphi{\wedge}\psi
  }\\[0.5cm]  
  (3) & \semrule
    {\psi{\wedge}\overline{Y = Y'} \not\models {\it false}
    }
    {
    	\tuple{\epsilon}^{\overline{Y = Y'}} \cdot \arb \abstractSep \psi
    	\rrabsderiv
    	\arb \abstractSep \psi{\wedge}\overline{Y=Y'}
    }\\[0.3cm]
    \hline
\end{array}
\)
\end{center}
\caption{Operational semantics of abstract rule-based programs}\label{fig:absrrsem}
\end{figure}
\end{center}

Figure~\ref{fig:absrrsem} contains the operational semantics $\rrabsderiv$ that evaluates abstract configurations regarding an abstract program $\rbrprog^\alpha$. Rule $(1)$ handles abstracted assignments, which are translated into conjunctive constraints $\varphi$. If $\varphi$ is consistent wrt.\ the global set of constraints ($\psi\wedge\varphi \not\models {\it false}$) then it is added to the global constraints and the evaluation continues with the next element of the abstract activation record. Rule $(2)$ handles procedure invocation. First, we obtain a fresh rule
\cbstart from the abstract program $P^\alpha$\cbend.
Then, if the parameter passing, the guard and the global constraints are consistent \cbstart ($\overline{X = X'} \wedge\varphi{\wedge}\psi \not\models {\it false}$),\footnote{We use the notation $\overline{X = X'}$ to denote the conjunction of constraints $\bigwedge \overline{X= X'}$.}
\cbend a new abstract activation record containing the body of the procedure is inserted in the abstract configuration.

Similarly to the $\rrderiv$ semantics, the relation $\overline{Y = Y'}$ between the output variables and the parameters is stored as a mark. Finally, rule $(3)$ removes an empty abstract activation record if the output variables are consistent with the global constraints. Notice that global constraints accumulate variables from all the abstract activation records, even those whose execution has finished and therefore have been removed.
\cbstart
Similarly to $\rrderiv$, we use the notion of \emph{abstract trace} $\trace^\alpha = \airstate_0 \rrabsderiv^{r_1} \airstate_1 \rrabsderiv^{r_2} \ldots \rrabsderiv^{r_n} \airstate_n$ where its \emph{steps} are defined as $\trsteps{\trace^\alpha} = \tuple{r_1, r_2, \ldots, r_n}$, and the set of \emph{abstract trace steps} as the combination of the steps of all the possible abstract traces starting from a given configuration ($\traces(\airstate_0) =  \{ \trsteps{\airstate_0 \rrderiv^{r_1} \airstate_1 \rrderiv^{r_2} \ldots \rrderiv^{r_n} \airstate_n} \mid n \in \mathbb{N}\}$.
\cbend

\begin{example}[Evaluation of an abstract program]\label{ex:abs_evaluation}
	The abstract evaluation of the \code{main} procedure in Figure~\ref{fig:running} proceeds similarly to Example~\ref{ex:rbr_evaluation}. When obtaining fresh names for variables---rule ($2$) in Figure~\ref{fig:absrrsem}---we use superscripts with the same number as the current abstract configuration $\airstate_n$. For simplicity, the statement or constraint that controls each step is underlined, the constraints in each abstract configuration $\airstate_i$ are denoted as $\psi_i$ and $\top$ is omitted. \cbstart Each step $\rrabsderiv^{a\cdot b}$ is decorated with the semantic rule $a$ and abstract program rule $b$ used, \cbend and $\rrabsderiv^*$ refers to many $\rrabsderiv$-steps.

\[
	\begin{array}{l@{}l}
	\airstate_0 \equiv & \tuple{\underline{L^0_\inttype = 2 \wedge L^0_\tttype{IntList} = 2} \cdot \TYPE{factSum}(\langle L^0_\inttype, L^0_\tttype{IntList}\rangle, \langle R'^0_\inttype\rangle)} ~|~ L^0_\tttype{IntList} \geq 0 \rrabsderiv^{(1)\cdot \epsilon} \\

	\airstate_1 \equiv & \tuple{\underline{\TYPE{factSum}(\langle L^0_\inttype, L^0_\tttype{IntList}\rangle, \langle R'^0_\inttype\rangle)}} ~|~ L^0_\tttype{IntList} \geq 0 \land L^0_\inttype = 2 \wedge L^0_\tttype{IntList} = 2 \rrabsderiv^{(2)\cdot 4} \\

	\airstate_2 \equiv & \tuple{\underline{\mi{Sum}^2_\inttype = 0} \cdot \TYPE{while}\_1(\langle L^2_\inttype, L^2_\tttype{IntList}, \mi{Sum}^2_\inttype\rangle, \langle\mi{Sum}'^2_\inttype\rangle)}^{R'^0_\inttype = \mi{Sum}'^2_\inttype} ~\cdot \\
	& \tuple{\epsilon} ~|~ \psi_1 \land L^0_\inttype = L^2_\inttype \wedge L^0_\tttype{IntList} = L^2_\tttype{IntList} \wedge L^2_\tttype{IntList} \geq 0 \cbstart \rrabsderiv^{(1)\cdot \epsilon} \cbend \\

	\airstate_3 \equiv & \tuple{\underline{\TYPE{while}\_1(\langle L^2_\inttype, L^2_\tttype{IntList}, \mi{Sum}^2_\inttype\rangle,  \langle\mi{Sum}'^2_\inttype\rangle)}}^{R'^0_\inttype = \mi{Sum}'^2_\inttype} ~\cdot \\
	& \tuple{\epsilon} ~|~ \psi_2 \land \mi{Sum}^2_\inttype = 0
	\rrabsderiv^{(2)\cdot 6} \\

	\airstate_4 \equiv & \tuple{\underline{\TYPE{fact}(\langle E^4_\inttype\rangle, \langle Prod^4_\inttype\rangle)} \cdots}^{\mi{Sum}'^2_\inttype = \mi{Sum}'^4_\inttype} ~\cdot \\
	& \tuple{\epsilon }^{R'^0_\inttype = \mi{Sum}'^2_\inttype} ~\cdot \\
    & \tuple{\epsilon} ~|~ \psi_3 \land L^2_\inttype = L^4_\inttype \wedge L^2_\tttype{IntList} = L^4_\tttype{IntList} \wedge \mi{Sum}^2_\inttype = \mi{Sum}^4_\inttype \wedge \\
	& L^4_\inttype = \max(E^4_\inttype, L1^4_\inttype) \wedge L^4_\tttype{IntList} = 1 + L1^4_\tttype{IntList} \wedge \\
	& L^4_\tttype{IntList} \geq 0 \wedge L1^4_\tttype{IntList} \geq 0 \rrabsderiv^* \\

	\airstate_5 \equiv & \tuple{\underline{\epsilon}}^{\mi{Sum}'^2_\inttype = \mi{Sum}'^4_\inttype} \cdot \\
	& \tuple{\epsilon }^{R'^0_\inttype = \mi{Sum}'^2_\inttype} ~\cdot \\
	& \tuple{\epsilon} ~|~ \psi_5 \rrabsderiv^{(3)\cdot \epsilon} \\

	\airstate_6 \equiv & \tuple{\underline{\epsilon}}^{R'^0_\inttype = \mi{Sum}'^2_\inttype} ~\cdot \\
	& \tuple{\epsilon} ~|~ \psi_5 \land \mi{Sum}'^2_\inttype = \mi{Sum}'^4_\inttype \rrabsderiv^{(3)\cdot \epsilon} \\

	\airstate_7 \equiv & \tuple{\epsilon} ~|~ \psi_6 \land R'^0_\inttype = \mi{Sum}'^2_\inttype \\

	\end{array}
	\]

	\noindent In $\airstate_3$ the first program rule of \code{while_1} (Line~\ref{fig:abstraction:while_1:1} in Figure~\ref{fig:abstraction}) cannot be used to evaluate the call because the guard is not compatible with the global set of constraints, namely:
    $L^2_\tttype{IntList} = L^4_\tttype{IntList} \wedge
    L^4_\tttype{IntList} = 1 \land
    L^0_\tttype{IntList} = L^2_\tttype{IntList} \land
    L^0_\tttype{IntList} = 2
    \models \mi{false}$.
	Therefore, the evaluation can only proceed with the second rule of \code{while_1}
	(Rule~6, Line~\ref{fig:abstraction:while_1:2}
	in Figure~\ref{fig:abstraction}).
\end{example}

The soundness result relates $\rrderiv$-traces starting from a configuration $\irstate$ with abstract $\rrabsderiv$-traces starting from the abstraction of $\irstate$. The following definitions present the abstraction of configurations, which is based on \typednorm{\cdot}{T} and the abstraction of statements defined in Figure~\ref{fig:sizeabst}.

\begin{definition}[Variable mapping abstraction]\label{def:varmap_abstr}
	The abstraction of a variable mapping $\tv$	is defined as $\atv =
	\bigwedge \{ X_T = \typednorm{\tv(x)}{T} ~|~ x\in \dom(\tv), T \in \typednorms(x) \}$.
	Notice that $\atv$ is a conjunction of equalities between distinct $X_T$ variables and integer values, since $\tv(x)$ are concrete  values and therefore
	\typednorm{\tv(x)}{T} generates integer numbers.
\end{definition}

\begin{definition}[Configuration abstraction]\label{def:state_abstr}
	Let $\irstate = \tuple{p_1,\stkbc_1,\tv_1}{\cdot}{\cdot}{\cdot} \tuple{p_n,\stkbc_n,\tv_n}$ be a configuration. Its abstraction is defined as:
	$\irstate^\alpha = \tuple{\astkbc{1}}{\cdot}{\cdot}{\cdot} \tuple{\astkbc{n}} \abstractSep \psi$ where $\psi = \atv_1 \wedge \ldots \wedge \atv_n$.
	Notice that $\psi$ is a conjunction of equalities between distinct $X_T$ variables and integer values, since every activation record uses
	fresh variables and $\psi$ is the conjunction of abstracted variable mappings.
\end{definition}

The following \cbstart theorem establishes the soundness of our translation using
typed-norms: for any trace $\irstate_0 \rrderiv^* \irstate_n$ it is
possible to create an abstract trace $\irstate_0^\alpha \rrabsderiv^* \arb
\abstractSep \psi$ with the same steps.\cbend

\cbstart
\begin{theorem}[Soundness]\label{teo:soundness}
	If $\trace \equiv \irstate_0 \rrderiv^*
	\irstate_n$ then there is an abstract trace
	$\trace^\alpha \equiv \irstate_0^\alpha \rrabsderiv^* \arb \abstractSep \psi$
	such that $\trsteps{\trace} = \trsteps{\trace^\alpha}$, $\irstate_n^\alpha = \arb \abstractSep \widetilde{\psi}$ and $\psi \wedge \widetilde{\psi} \not\models \false$.
\end{theorem}
\cbend

Intuitively, the above theorem states that the sizes of the variables
of the concrete configuration $\irstate_n$, w.r.t.\ the corresponding
norm, define a model of the abstract state configuration.


\cbstart 

Let us now informally explain how abstract programs preserve cost. The
idea is to simulate the same process that we have described at the end
of Section~\ref{sec:intermediate-representation}.
First, during the abstraction phase the instruction \lst{tick($n$)} is
kept in the abstract program.
Next, we instrument abstract activation records with a cost counter
$\mathtt{cost}^\alpha$, and add a corresponding abstract semantic
rule for \lst{tick($n$)} that simply sets $\mathtt{cost}^\alpha$ to
$\mathtt{cost}^\alpha-n$.
An abstract trace $\trace^\alpha$ is said to be valid if
$\texttt{cost}^\alpha$ never takes negative values, i.e.,
$\texttt{cost}$ is initialized with an amount of resources that is
enough to carry out this particular abstract execution.
Finally, a function $f^\alpha$ that maps initial abstract
configurations to nonnegative values is called an upper bound on the
(abstract) worst-case cost if, for and $\airstate_0$, setting the
initial value of $\mathit{cost}^\alpha$ to $f^\alpha(\airstate_0)$
guarantees generating valid abstract traces only.
It is called a lower bound on the best-case cost if, for any
$\airstate_0$, setting the initial value of $\texttt{cost}^\alpha$ to a
value smaller than $f^\alpha(\airstate_0)$ generates an invalid
abstract trace.
The function $f^\alpha$ is typically given in terms of the input abstract
variables, i.e., in terms of the sizes of the corresponding data.
The black-box component that infer the cost of the abstract program,
infers such functions.

Given the statement of Theorem~\ref{teo:soundness}, it is easy to see
that if $f^\alpha$ is is an upper (resp. lower) bound on the abstract
worst-case (resp. best-case) cost, then it is also an upper
(resp. lower) bound on the concrete cost (up to rewriting it in terms
of typed-norms instead of corresponding abstract variables).

As a final remark, let us mention that although the paper and the experiments
focus on upper bounds inference, our transformation is valid also to
infer lower bounds as it ensures that the cost of every trace is
preserved by the transformed program.
\cbend 


\section{Inference of Relevant Types}\label{sec:inference}

As explained in the previous section, in order to abstract the size of a variable $x$ in the rule number $i$ we consider a set $\typednorms_i(x)$ containing its \emph{relevant types}. It is safe to assume that $\typednorms_i(x)$ contains all the constituent types of $x$. However, the \cbstart complexity \cbend of the solving phase that obtains bounds grows exponentially with the number of variables involved, so it is very important to obtain the smallest sets for the relevant typed-norms. In this section we will present the inference algorithm for relevant types as well as its soundness result.

As it was observed in~\cite{AlbertAGPZ08}, variables that do not
affect the cost can be removed from the abstract program and the
bounds obtained do not change. In our setting, the cost of a program
depends primarily on the number of recursive calls performed, which is
affected by the guards in the rules. Therefore, any variable that does
not affect directly or indirectly the value of a guard can be
ignored. We push this idea further \cbstart and detect, from each variable, those types that do not affect \cbend a guard evaluation directly or
indirectly. These types are useless from the point of view of resource
analysis, so they should be discarded from the set of relevant types
of the variable.

The main intuition behind the algorithm for inferring relevant types
is detecting those constituent types of a guard variable that are
involved in the guard evaluation. For example, in \code{match(l,
  Cons(x,xs))} we say that the type \TYPE{IntList} of variable
\code{l} is involved in the guard evaluation because the pattern
matched is of type \TYPE{IntList}. On the other hand, the type \tint{}
of the same variable is not involved in the guard evaluation because
it can succeed or fail regardless of the possible \tint{} values stored in \code{l}.
Once we have this information from the guards, we propagate it backwards to include those relevant types in the rest of variables that can affect the value of the guard variable. 
\cbstart Relevant types will be propagated to the rules formal parameters, where they will be combined \cbend with the relevant types from the rest of rules of the same procedure. 
Similarly, when invoking a procedure, the relevant types of the formal parameters will be included in the variables of the actual arguments.

\begin{example}[Intuition of relevant types inference]
\label{ex:inference_intuition}
Consider the RBR program in Figure~\ref{fig:running}.
From Line~\ref{rbr:fact_while_1} (Rule \circled{2}) we discover that \tint{} is involved in the arithmetic guard \code{0 >= n} for procedure \code{while_0}, so $\tint \in \typednorms_2($\code{n}$)$.
The procedure \code{while_0} is invoked in
Line~\ref{rbr:fact_while0_invocation} of Rule \circled{1}, so that
this relevant type is propagated to the \code{fact} rule, i.e., $\tint \in \typednorms_1($\code{n}$)$. Similarly, \code{fact} is invoked in Line~\ref{rbr:while1_fact_invocation} of the Rule \circled{6} corresponding to the \code{while_1} predicate, so \tint{} will be a relevant type for variable \code{e}. That rule contains a guard \code{match(l, Cons(e,l1))} in Line~\ref{rbr:while1_match}, so \tint{} will be propagated from variable \code{e} to \code{l}, i.e., $\tint \in \typednorms_6($\code{l}$)$. After another step of propagation in Rule \circled{4} (procedure \code{factSum}) we will obtain that $\tint \in \typednorms_4(\mathsf{l})$. In summary, the relevant type \tint{} detected in Rule \circled{2} has been propagated to Rule \circled{4} following the path \circled{2} $\to$ \circled{1} $\to$ \circled{6} $\to$ \circled{4}. 
A similar but shorter process would produce that $\TYPE{IntList} \in
\typednorms_4($\code{l}$)$ by propagating it in the path \circled{5} $\to$ \circled{4}. In this case all the constituent types of \TYPE{IntList} are relevant types for the parameter \code{l} in \code{factSum} (Rule \circled{4}), as the resource usage depends both in the length of the list and its stored numbers.
\end{example}

\subsection{Formalization of relevant type inference}\label{sec:inference_formalization}

We formalize the relevant types inference algorithms as a data-flow analysis that constructs a mapping for the complete program. This mapping relates a set of relevant types to every variable in the program, and is defined as follows:

\begin{definition}[Rule and program mappings]\label{def:normsSet}
	A \emph{rule mapping} $\rulemaps \ni \rulemap = [\overline{x \mapsto P(\types)}]$ maps variables $\bar{x}$ in a rule to sets of types. The domain of $\rulemap$ is denoted as $\dom(\rulemap) = \{\bar{x}\}$, and $\rulemap(x)$ returns the set of relevant types related to variable $x$. We use $\epsilon$ to denote an empty rule mapping. 
	
	A \emph{program mapping} $\prmaps{\rbrprog{}} \ni \prmap = \langle \rulemap_1, \rulemap_2, \ldots, \rulemap_n\rangle$ aggregates the mappings of all the rules of a program $\rbrprog{}$, considering that $P$ has $n$ rules. We use the notation $\prmap(i) = \rulemap_i$ to refer to the i$^{th}$ rule mapping in $\prmap$, and $\prmap(i)(x)$ to refer to the set of relevant types of variable $x$ in the $i^{th}$ rule.
\end{definition}	

Rule and program mappings support the following set of standard operations (extension, ordering, combination, restriction and renaming) that we use in the formalization of the inference algorithm.

\begin{definition}[Operations on rule and program mappings]\label{def:normsSet_operations}	
	A rule mapping can be \emph{extended} to a program mapping by assigning it to the $i^{th}$ rule and considering the empty mapping for the rest of rules. The extension of a rule mapping is denoted as $\extrmap{\rulemap}{n}{i} = \langle \epsilon_1, \ldots, \epsilon_{i-1}, \rulemap, \epsilon_{i+1}, \ldots, \epsilon_n \rangle$.
	
	We consider the natural order of rule mappings based on subset inclusion. We say that $\rulemap \sqsubseteq \rulemap'$ iff.\ $\forall x \in \dom(\rulemap).~\rulemap(x) \subseteq \rulemap'(x)$. This order is extended to program mappings as follows: $\prmap \sqsubseteq \prmap'$ iff.\ $\prmap(i) \sqsubseteq \prmap'(i)$ for all rule $i$.
	
	Rule mappings can be combined using the commutative and associative operator $\addr$, defined as: 
	\[
	(\rulemap \addr \rulemap')(x) = \left\{
	  \begin{array}{rl}
        \rulemap(x) \cup \rulemap'(x) & \text{if } x \in \dom(\rulemap) \cap \dom(\rulemap') \\
	    \rulemap(x)                   & \text{if } x \in \dom(\rulemap) \smallsetminus \dom(\rulemap') \\
        \rulemap'(x)                  & \text{if } x \in \dom(\rulemap') \smallsetminus \dom(\rulemap) \\
        \text{undefined}              & i.o.c.
	\end{array} \right.
	\]
	If a variable $x$ appears in both rule mappings, its typed-norms are combined, otherwise it takes the typed-norms from the mapping where \cbstart it appears. \cbend Notice that variables not appearing in $\dom(\rulemap_1)$ \cbstart nor \cbend $\dom(\rulemap_2)$ are undefined in $\rulemap_1 \addr \rulemap_2$. 
	We also consider the combination of program mappings of the same length by proceeding element-wise. We overload the symbol $\addp$:
	\[
	\langle \rulemap_1, \ldots, \rulemap_n\rangle \addp \langle \rulemap'_1, \ldots, \rulemap'_n\rangle = \langle \rulemap_1 \addr \rulemap'_1,  \ldots, \rulemap_n \addr \rulemap'_n\rangle
	\]
	When combining sequences of mappings we use the notation $\baddr_{i=1}^n \rulemap_i = \rulemap_1 \addr \ldots \addr \rulemap_n$, and $\baddr_{\prmap \in C} \prmap = \prmap_1 \addp \ldots \addp \prmap_n$ for combining sets $C = \{\prmap_1, \ldots, \prmap_n\}$ of program mappings.
	
	Rule mappings can be restricted to a set of variables $C$, formally: 
	\[
	(\restrmap{\rulemap}{C})(x) = \left\{
	\begin{array}{rl}
	\rulemap(x) & \text{if } x \in C \\
	\text{undefined}              & i.o.c.
	\end{array} 
	\right.
	\]
	
	Finally, we use $\renmap{\rulemap}{\bar{x}}{\bar{y}}$ to denote the renaming of a rule mapping $\rulemap$, where the variables $\bar{x} = x_1, \ldots, x_n$ are renamed to $\bar{y} = y_1, \ldots, y_n$ (we consider that $\{\bar{x}\} \subseteq \dom(\rulemap)$ and $\{\bar{y}\} \cap \dom(\rulemap) = \emptyset$):
	\[
	(\renmap{\rulemap}{\bar{x}}{\bar{y}})(z) = \left\{
	\begin{array}{rl}
	\rulemap(x_i) & \text{if } z = y_i\\
	\rulemap(z) & \text{if } z \in \dom(\rulemap) \smallsetminus \{\bar{x}\} \\
	\text{undefined}              & i.o.c.
	\end{array} 
	\right.
	\]
\end{definition}

\begin{example}[Operations on rule and program mappings]
Consider the following rule mappings:
\[
\begin{array}{rcl}
\mu_1  & = & [x \mapsto \{\tint\}, y \mapsto \{\tint, \tintl\} ] \\ 
\mu_2  & = & [y \mapsto \{\tintl\} ] \\ 
\mu_3  & = & [y \mapsto \{\tint\} ]
\end{array}
\]
We have that $\mu_2 \sqsubseteq \mu_1$ ($\{\tintl\} \subseteq \{\tint, \tintl\}$ for variable $y$) and $\mu_3 \sqsubseteq \mu_1$ ($\{\tint\} \subseteq \{\tint, \tintl\}$ for variable $y$) but $\mu_1 \not \sqsubseteq \mu_2$ ($x \notin \dom(\mu_2)$) and $\mu_2 \not\sqsubseteq \mu_3$ ($\{\tintl\} \not\subseteq \{\tint\}$ for variable $y$). Regarding combination, we have that $\mu_2 \addr \mu_3 = [y \mapsto \{\tint, \tintl\}]$ and $\mu_1 \addr \mu_2 = \mu_1 \addr \mu_3 = \mu_1$. The restriction of rule mappings simply reduces the domain, so $\restrmap{\mu_1}{\{x\}} = [x \mapsto \{\tint\} ]$ and \cbstart $\restrmap{\mu_1}{\{y\}} = [y \mapsto \{\tint, \tintl\} ]$. \cbend Finally, the renaming changes the variables in the domain, for example $\renmap{\rulemap_1}{x,y}{a,b} = [a \mapsto \{\tint\}, b \mapsto \{\tint, \tintl\} ]$ because $x$ is renamed by $a$ and $y$ is renamed by $b$.
\end{example}

\begin{figure}[tb]
\[
  \begin{array}{rrl}
  \genf{\rbrprog{}} & : & \prmaps{\rbrprog{}} \mapsto \prmaps{\rbrprog{}} \\
  \genf{\rbrprog{}}(\prmap) & = & \prmap \addp \baddp_{i=1}^n \genf{\rbrprog{}}_i(\prmap) \\
  \end{array}
  \]
  
  \[
  \begin{array}{rrl}
  \genf{\rbrprog{}}_i & : & \prmaps{\rbrprog{}} \mapsto \prmaps{\rbrprog{}} \\
  \genf{\rbrprog{}}_i(\prmap) & = & \extrmap{\genG{\rbrprog{}}_i(g, \prmap)}{n}{i} \addp \genS{\rbrprog{}}_i(\{\bar{b}\}, \prmap) \addr \extrmap{\rulemap_i}{n}{i}, \text{ where } \\  
  & & ~~\getrule{i}{\rbrprog{}}=p(\bar{x},\bar{y}) \rulearrow g, \bar{b},~~~\rulemap_i = 
  \baddr_{\rulemap \in S_i} \rulemap \text{~and} \\
  & & ~~S_i =  \{ \renmap{(\restrmap{\prmap(j)}{\bar{w}\bar{z}})}{\bar{w}\bar{z}}{\bar{x}\bar{y}} \mid j \in [1..n], \getrule{j}{\rbrprog{}} = p(\bar{w},\bar{z})\rulearrow g', \bar{b'}\}
  \end{array}
  \]
    
  \[
  \begin{array}{rrl}
  \genG{\rbrprog{}}_i                     & : &  \text{Guard} \times \prmaps{\rbrprog{}} \mapsto \rulemaps\\
  \genG{\rbrprog{}}_i({\it true}, \prmap) & = & \epsilon \\
  \genG{\rbrprog{}}_i(g_1 \wedge g_2, \prmap) & = &\genG{\rbrprog{}}_i(g_1, \prmap) \addp \genG{\rbrprog{}}_i(g_2, \prmap)\\
  \genG{\rbrprog{}}_i(e_1~op~e_2, \prmap) & = &[ x \mapsto \{\inttype\} \mid x \in \vars(e_1) \cup \vars(e_2)] \\
  \genG{\rbrprog{}}_i(\match(x,p), \prmap) & = & [ x \mapsto \{ T \} \mid T = \typer{x}{i}, T \text{ is recursive}] ~\addp \\
                               &   & [ x \mapsto \{ T \mid y \in \vars(p),~ T \in \prmap(i)(y) \} ] \\
  \genG{\rbrprog{}}_i(\notmatch(x,p), \prmap) & = & [ x \mapsto \{ T \mid y \in \vars(p),~ T \in \prmap(i)(y) \} ] \\
  \end{array}
  \]
  
  \[
  \begin{array}{rrl}
  \genS{\rbrprog{}}_i         & : & P(\text{Statement}) \times \prmaps{\rbrprog{}} \mapsto \prmaps{\rbrprog{}} \\
  \genS{\rbrprog{}}_i(\{\bar{b}\}, \prmap) & = & \baddp_{b \in \{\bar{b}\}} \genS{\rbrprog{}}_i(b, \prmap) \\
  \end{array}
  \]  
  
  \[
  \begin{array}{rrl}
  \genS{\rbrprog{}}_i           & : & \text{Statement} \times \prmaps{\rbrprog{}} \mapsto \prmaps{\rbrprog{}} \\
  \genS{\rbrprog{}}_i(\rrassigns{x}{t}, \prmap) & = & \extrmap{[ y \mapsto T_{x,y,i} \mid y \in \vars(t)]}{n}{i}, \text{ where } \\
  	                            & & \cbstart ~~T_{x,y,i} = \{ T \mid T \in \prmap(i)(x) 
  	                            \cap  \deptypes{\typer{y}{i}} \} \cbend \\
  \genS{\rbrprog{}}_i(p(\bar{x},\bar{y}), \prmap) & = & \baddp_{\prmap \in A \cup B} \prmap, \text{ where }\\ 
    & & ~~A = \{ \extrmap{\renmap{(\restrmap{\prmap(j)}{\bar{w}})}{\bar{w}}{\bar{x}}}{n}{i} \mid j \in [1..n], \getrule{j}{\rbrprog{}} = p(\bar{w},\bar{z})\rulearrow g, \bar{b} \}\\
   & & ~~B = \{ \extrmap{\renmap{(\restrmap{\prmap(i)}{\bar{y}})}{\bar{y}}{\bar{z}}}{n}{j} \mid j \in [1..n], \getrule{j}{\rbrprog{}} = p(\bar{w},\bar{z})\rulearrow g, \bar{b}\}
  \end{array}
  \]
\caption{Functions for inferring typed-norms.\label{fig:infer}}
\end{figure}

%
Next we explain the inference algorithm, which is based on the definition
of $\genf{\rbrprog{}}$ that is given in Figure~\ref{fig:infer}.
  Let $\rbrprog{}$ be a program with $n$ rules ($\nrules{\rbrprog{}} = n$). Then the relevant types inferred for the program is the least fixed point of the function $\genf{\rbrprog{}}$,
  i.e., $\mi{lfp}(\genf{\rbrprog{}}) \in \prmaps{\rbrprog}$. 
  \cbstart Note that $\genf{\rbrprog{}}$ is a monotone function and $\prmaps{\rbrprog}$ is a finite complete lattice, so the least fixed point can be computed exactly. \cbend
  Considering this fixed point, the notation $\typednorms_i(x)$ 
  is defined as $\mi{lfp}(\genf{\rbrprog{}})(i)(x)$.
The function $\genf{\rbrprog{}}$ takes a program mapping $\prmap$ and extends it with the new information from the different rules of the program. For each program rule, it computes $\genf{\rbrprog{}}_i(\prmap)$ and combines them with the current program mapping.

  The function $\genf{\rbrprog{}}_i$ takes the current program mapping and extends it using the information in rule number $i$. This function processes a rule $p(\bar{x},\bar{y}) \rulearrow g, \bar{b}$ by combining the information from the guard ($\genG{\rbrprog{}}_i$), the body ($\genS{\rbrprog{}}_i$) and also collecting the relevant types from the parameters of all the rules of the same procedure (mapping $\rulemap_i$). This last step, which requires a renaming of the parameters $\bar{w}\bar{z}$ to $\bar{x}\bar{y}$, forces rules of the same predicate to have the same relevant types and therefore be abstracted to rules with the same name and number of abstracted parameters.

  The function \cbstart $\genG{\rbrprog{}}_i$ \cbend takes a guard $g$ and a program mapping $\prmap$ and generates new relevant types in a rule mapping. This function proceeds by combining the new relevant types obtained in every fragment of the guard. A \emph{true} guard always succeeds, so it does not impose any relevant type. Arithmetic guards $e_1~op~e_2$ requires \tint{} as a relevant type for any variable in the expression. If there is a $\match(x,p)$ guard and $x$ has a recursive type $T$ then 
  this type is included as a relevant type for $x$.
  Non-recursive types are ignored because they cannot directly affect the number of recursions. However, if it contains an inner recursive type it will be detected and propagated when computing the fixed point. 
  Finally, in $\match(x,p)$ and $\notmatch(x,p)$ guards all relevant types already detected in the variables of the pattern $p$ are propagated to the matching variable $x$.
    \cbstart 
    Note that in $\match(x,p)$ guards, the recursive type $T$ of $x$ will be only relevant if the procedure is recursive, otherwise it will not affect the execution. Therefore, in practice we can infer relevant types for recursive procedures and then transfer that information to non-recursive ones.
    \cbend
  
  The function $\genS{\rbrprog{}}_i$ is overloaded. When it takes a set of statements $\{\bar{b}\}$ and a program mapping \prmap{} then it traverses all the statements, invoking $\genS{\rbrprog{}}_i$ for each one and combining the results. When $\genS{\rbrprog{}}_i$ is invoked with only one statement $b$ then it propagates the information from $\prmap$ according to the statement processed. If $b \equiv \rrassigns{x}{t}$ then all the relevant types detected for $x$ that are constituents of the type of some variable $y$ of the right-hand side $t$ (set $T_{x,y,i}$) are propagated to that variable $y$. Finally, if the statement is a procedure call $p(\bar{x},\bar{y})$ then the function propagates the relevant types of the parameters. The set $A$ propagates all the relevant types for the input parameters $\bar{w}$ of the rules $j$ for the same procedure to the input parameters $\bar{x}$ of the current rule $i$. Similarly, the set $B$ propagates the relevant types of the output parameters $\bar{y}$ of the 
  current rule $i$ to the output parameters $\bar{z}$ of the rules $j$ of the same predicate. Sets $A$ and $B$ contain program mappings that are combined.

The result of the relevant types inference always exists, and it can be effectively computed by iterating $\genf{\rbrprog{}}$ starting from the empty program typed-norms mapping, as we state below.

\begin{theorem}\label{teo:lfp}
Consider a program $\rbrprog{}$ such that $\nrules{\rbrprog{}} = n$. 
Then $\mi{lfp}(\genf{\rbrprog{}})$ exists and is the supremum of the ascending Kleene chain starting from 
$\langle \epsilon_1, \ldots, \epsilon_n \rangle$.
\end{theorem}


\begin{example}[Relevant types inference]
Here we explain how the functions in Figure~\ref{fig:infer} produce the result intuitively explained in Example~\ref{ex:inference_intuition}. We consider $\rbrprog$ as the RBR program in Figure~\ref{fig:running}.
%
The process starts with an empty program mapping $\prmap_0 = \langle \epsilon,\epsilon,\epsilon,\epsilon,\epsilon,\epsilon,\epsilon\rangle$. In the first iteration $\prmap_1 = \genf{\rbrprog{}}(\prmap_0)$, when processing Rule \circled{2} we obtain that $\tint \in \prmap_1(2)(n)$ because of the call $\genG{\rbrprog{}}_2($\code{0 >= n}$, \prmap_0)$. 
In the next iteration, $\prmap_2 = \genf{\rbrprog{}}(\prmap_1)$, so that information is propagated to Rule \circled{1}, i.e., $\tint \in \prmap_2(1)(n)$, thanks to the invocation
\cbstart $\genS{\rbrprog{}}_1($\code{while_0($\langle$n, prod$\rangle$, $\langle$prod'$\rangle$)}$, \prmap_1)$. \cbend This information is collected in the set $A$ from Rule \circled{2}. 
Similarly, that information is propagated to \cbstart $\tint \in \prmap_3(6)(e)$ from the procedure call \code{fact($\langle$e$\rangle$, $\langle$prod$\rangle$)} \cbend in Line~\ref{rbr:while1_fact_invocation} by invoking $\genS{\rbrprog{}}_6($\code{fact($\langle$e$\rangle$, $\langle$prod$\rangle$)}$, \prmap_2)$. 
In the next step, the relevant type of \cbstart \code{e} \cbend is propagated to \code{l}---$\tint \in \prmap_4(6)(l)$---when processing the guard with $\genG{\rbrprog{}}_6($\code{match(l, Cons(e,l1))}$, \prmap_3)$. 
Finally, the relevant type in the input parameter \cbstart \code{l} \cbend of Rule \circled{6} is propagated to Rule \circled{4} when processing the procedure call with \cbstart $\genS{\rbrprog{}}_4($\code{while_1($\langle$l, sum$\rangle$, $\langle$sum'$\rangle$)}$, \prmap_4)$, \cbend so $\tint \in \prmap_5(4)(l)$. As the function \genf{\rbrprog{}} is monotone, all this information will be kept in the least fixed point. 

\end{example}


%

\subsection{Soundness of relevant types inference}\label{sec:inference_soundness}

The inference of relevant types previously presented obtains a set of \emph{interesting} types for every variable in every rule in the program. In order to state that those inferred types are enough to guide the traces, we need to introduce a new notion: \emph{value variation}.

\begin{definition}[Value variation]\label{def:valueVariation}
	We say that $v'$ is a \emph{variation} of $v$ wrt.\ a type $T$ (written $\variation{v}{v'}{T}$) if $v'$ results from replacing some components of $v$ with type $T$ for other components of the same type.
    For example, 
      $\variation{\tttype{Cons(1, Nil)}}{\tttype{Cons(4, Nil)}}{\inttype}$---replacing
      $1$ by $4$---and 
      $\variation{\tttype{Cons(0, Cons(1,Nil))}}{\tttype{Cons(1, Nil)}}{\tttype{IntList}}$---replacing the list completely.
\end{definition}

Using the notion of variation, we can define when a type is \emph{useful} for a variable: $T$ will be useful for a variable $x$ in the i-th rule of a program if changing the value of that variable changes the set of trace steps starting from a configuration of rule $i$. Formally:

\begin{definition}[Useful type]\label{def:usefulTypedNorm} We say that a type $T$ is \emph{useful} for variable $x$ in the i-th rule of a program $\rbrprog{}$---written $\useful{i}{T}{x}$---if there are configurations $\ar_0$ and $\ar'_0$ such that:
\begin{enumerate}
	\item $\ar_0 = \tuple{p,\bc{\cdot}\stkbc,\tv} \cdot \ar_1$, where $\bc{\cdot}\stkbc$ are statements from the i-th rule of $\rbrprog{}$
\item $x \in \dom(\tv)$
\item $\ar'_0 = \tuple{p,\bc{\cdot}\stkbc,\tv'} \cdot \ar_1$ with $\tv' = \tv[x \mapsto v]$ and $\variation{\tv(x)}{v}{T}$
\item $\traces(\ar_0) \neq \traces(\ar'_0)$
\end{enumerate}
\end{definition}

Finally, we can state the soundness result of the inference of relevant types: if changing some components of type $T$ of a variable $x$ in rule $i$ affects the possible trace steps---i.e. $\useful{i}{T}{x}$---then that type $T$ will be inferred by our process---i.e. $T \in \typednorms_i(x)$. 

\begin{theorem}[Soundness]\label{teo:soundnessInference}
If $\useful{i}{T}{x}$ then $T \in \typednorms_i(x)$. 
\end{theorem}

\section{Extensions}\label{sec:extensions}

The purpose of this section is to propose two extensions, which are very
useful in practice, to the previous analysis. This, moreover,
demonstrates that extending our framework is very easy.

\subsection{Extension to Polymorphic Types}\label{sec:polymorphic}

Let us now describe the extension of our approach to handle
polymorphic types. First, we extend the data type definition in
\rSec{sec:intermediate-representation} by adding \emph{polymorphic
  types}.

\begin{definition}[Polymorphic types]\label{def:polymorphicTypes}
  Given a countable set of type variables $V_T$, a \emph{polymorphic
    type} $T(V_T)$ can be either a monomorphic type $T$, a type
  variable $\alpha \in V_T$, or an algebraic data type
  $\DT\langle \Gamma\rangle$ defined as:
  \begin{flushleft}
    $\begin{array}{lll}
       \mi{pDd} & ::= & \mi{data}~\DT\langle \Gamma\rangle = p\cons ~[\overline{\mid p\cons{}}]\\
       p\cons   & ::= & \dcon[(\overline{T(\Gamma)})] \\
     \end{array}$
   \end{flushleft}
   where $\Gamma \subseteq V_T$ is a finite subset of variables and
   any type variable in the right-hand side of a data type definition
   must be in the finite subset of variables in the left-hand side.
\end{definition}

\begin{example}[Polymorphic list]~\label{ex:polymorphicTypes} Using
  the syntax presented in \rDef{def:polymorphicTypes} we can
  define the data type of a polymorphic list
  (\TYPE{List$\langle$A$\rangle$}) as follows:
	\begin{flushleft}
      ~~~~\lstinline@data List$\langle$A$\rangle$ = Nil | Cons(A, List$\langle$A$\rangle$)@
	\end{flushleft}
        The type \TYPE{IntList} in \rDef{def:monomorphicTypes} can
        be represented by the type \TYPE{List$\langle$Int$\rangle$},
        where the parametric type \TYPE{A} is instantiated to \tint{}.
\end{example}

The RBR program syntax in \rDef{def:syntax} can also be
extended by adding \emph{polymorphic typed procedures} in the
following way:

	\[\begin{array}{lll}
            \typedproc & ::= & 
                               p\langle\Gamma\rangle :: T(\Gamma) \times \cdots \times T(\Gamma) ~
                               \overline{r}\\
	\end{array}\]
\newcommand{\sepl}[0]{{\tiny<}}
\newcommand{\sepr}[0]{{\tiny>}}

In this context, properly typed terms must be understood as the
natural extension to polymorphic types. \rFig{fig:polrunning} contains a RBR
program with polymorphic types. In order to avoid confusion with the angles used
to represent polymorphic types, in the examples of this section we use $\sepl$
and $\sepr$ to separate input and output arguments.



\begin{figure}[tb]
\begin{minipage}{5.4cm}



\begin{lstlisting}[frame=none]
data List$\langle$A$\rangle$ =
  Nil
  | Cons(A, List$\langle$A$\rangle$)

head$\langle$A$\rangle$ :: $\sepl\TYPE{List}\langle\TYPE{A}\rangle\sepr \times \sepl\TYPE{A}\sepr$
(*\hspace{-0.85cm}$\circled{1}\hspace{0.45cm}$*)head($\sepl$l$\sepr$,$\sepl$e$\sepr$) $\rulearrow$
  match(l,Cons(e,l'))

tail$\langle$A$\rangle$ :: $\sepl\TYPE{List}\langle\TYPE{A}\rangle\sepr \times \sepl\TYPE{List}\langle\TYPE{A}\rangle\sepr$
(*\hspace{-0.85cm}$\circled{2}\hspace{0.45cm}$*)tail($\sepl$l$\sepr$,$\sepl$l'$\sepr$) $\rulearrow$
   match(l,Cons(e,l'))
\end{lstlisting}
\end{minipage}
\begin{minipage}{7cm}
\begin{lstlisting}[frame=none,firstnumber=12]
factSum :: $\sepl\TYPE{List}\langle\tint\rangle\sepr \times \sepl\tint\sepr$
(*\hspace{-0.85cm}$\circled{3}\hspace{0.45cm}$*)factSum($\sepl$l$\sepr$, $\sepl$sum'$\sepr$) $\rulearrow$ true,
  sum := 0,
  while_1($\sepl$l, sum$\sepr$, $\sepl$sum'$\sepr$)

while_1 :: $\sepl\TYPE{List}\langle\tint\rangle \times \tint\sepr \times \sepl\tint\sepr$
(*\hspace{-0.85cm}$\circled{4}\hspace{0.45cm}$*)while_1($\sepl$l, sum$\sepr$,  $\sepl$sum'$\sepr$) $\rulearrow$ match(l, Nil)
  sum' := sum
(*\hspace{-0.85cm}$\circled{5}\hspace{0.45cm}$*)while_1($\sepl$l, sum$\sepr$, $\sepl$sum'$\sepr$) $\rulearrow$
  nonmatch(l, Nil),
  head($\sepl$l$\sepr$, $\sepl$e$\sepr$),
  fact($\sepl$e$\sepr$, $\sepl$prod$\sepr$),
  sum1 := sum + prod,
  tail($\sepl$l$\sepr$, $\sepl$l1$\sepr$),
  while_1($\sepl$l1, sum1$\sepr$, $\sepl$sum'$\sepr$)
\end{lstlisting}
\end{minipage}
\caption{RBR program with polymorphic types.\label{fig:polrunning}}
\end{figure}

Analyzing RBR programs with polymorphic types can be done by
transforming them into equivalent programs with monomorphic types
only. This basically creates monomorphic versions of procedures (and
types) for every use of the polymorphic types.
Note that all possible uses (or instantiations) of the polymorphic
types can be inferred using standard type inference algorithms.
The following list of steps describes how such a transformation can be
carried out:

\begin{itemize}

\item For every polymorphic type definition $\DT\langle \Gamma\rangle$
  and every needed instantiation of the polymorphic type, we replace
  the polymorphic type definition by new copies using fresh
  monomorphic types and constructors.

\item For every polymorphic procedure $p\langle\Gamma\rangle$ and
  every needed instantiation of its polymorphic type declarations, we
  replace the polymorphic procedure by new copies using fresh
  monomorphic types. Calls to procedures are also replaced by their
  properly typed monomorphic translations.

\end{itemize}

Note that, in addition, in some situations the user might be
interested in analyzing a polymorphic procedure without providing any
calling context, i.e., the polymorphic type cannot be instantiated to
a monomorphic one in such cases. To handle these cases, for every type
variable \TYPE{A} in the program, we create a corresponding
\emph{fresh} monomorphic type \TYPE{A} together with a fresh constant
\TYPE{a}, and then use those types to instantiate the corresponding
procedures.

\begin{example}
  Consider the RBR program in \rFig{fig:polrunning}. We can analyze
  the program and extract that $\TYPE{head}$ and $\TYPE{tail}$ are
  called in the body of $\TYPE{while\_1}$ with the type $\TYPE{A}$
  instantiated to $\inttype$. Therefore, we obtain the transformed
  program presented in \rFig{fig:polrunningtr} and its abstraction in
  \rFig{fig:polabstraction}.
  Note also that we have instantiated procedures $\TYPE{head}$ and
  $\TYPE{tail}$ with respect to the auxiliary monomorphic type
  $\TYPE{A}$. This allows analyzing these procedures directly without
  considering the calling context of $\TYPE{while\_1}$.

\begin{figure}[htb]
\begin{minipage}{5.4cm}
\begin{lstlisting}[frame=none]
data A = a

data ListA =
  NilA
  (*\cbstart*)| ConsA(A, ListA)(*\cbend*)

data ListInt =
  NilInt
  | ConsInt(Int, ListInt)

headA :: $\sepl$$\TYPE{ListA}\sepr \times \sepl\TYPE{A}\sepr$
(*\hspace{-0.85cm}$\circled{1}\hspace{0.45cm}$*)headA($\sepl$l$\sepr$,$\sepl$e$\sepr$) $\rulearrow$
  (*\cbstart*)match(l,ConsA(e,l'))(*\cbend*)

headInt :: $\sepl\TYPE{ListInt}\sepr \times \sepl\TYPE{Int}\sepr$
(*\hspace{-0.85cm}$\circled{2}\hspace{0.45cm}$*)headInt($\sepl$l$\sepr$,$\sepl$e$\sepr$) $\rulearrow$
  (*\cbstart*)match(l,ConsInt(e,l'))(*\cbend*)

tailA :: $\sepl\TYPE{ListA}\sepr \times \sepl\TYPE{ListA}\sepr$
(*\hspace{-0.85cm}$\circled{3}\hspace{0.45cm}$*)tailA($\sepl$l$\sepr$,$\sepl$l'$\sepr$) $\rulearrow$
   (*\cbstart*)match(l,ConsA(e,l'))(*\cbend*)
\end{lstlisting}
\end{minipage}
\begin{minipage}{7cm}
\begin{lstlisting}[frame=none,firstnumber=22]
tailInt :: $\sepl\TYPE{ListInt}\sepr \times \sepl\TYPE{ListInt}\sepr$
(*\hspace{-0.85cm}$\circled{4}\hspace{0.45cm}$*)tailInt($\sepl$l$\sepr$,$\sepl$l'$\sepr$) $\rulearrow$
   (*\cbstart*)match(l,ConsInt(e,l'))(*\cbend*)

factSum :: $\sepl\TYPE{ListInt}\sepr \times \sepl\tint\sepr$
(*\hspace{-0.85cm}$\circled{3}\hspace{0.45cm}$*)factSum($\sepl$l$\sepr$, $\sepl$sum'$\sepr$) $\rulearrow$ true,
  sum := 0,
  while_1($\sepl$l, sum$\sepr$, $\sepl$sum'$\sepr$)

while_1 :: $\sepl\TYPE{ListInt} \times \tint\sepr \times \sepl\tint\sepr$
(*\hspace{-0.85cm}$\circled{5}\hspace{0.45cm}$*)while_1($\sepl$l, sum$\sepr$, $\sepl$sum'$\sepr$) $\rulearrow$ match(l, NilInt)
  sum' := sum
(*\hspace{-0.85cm}$\circled{6}\hspace{0.45cm}$*)while_1($\sepl$l, sum$\sepr$, $\sepl$sum'$\sepr$) $\rulearrow$
  (*\cbstart*)nonmatch(l, NilInt),(*\cbend*)
  headInt($\sepl$l$\sepr$, $\sepl$e$\sepr$),
  fact($\sepl$e$\sepr$, $\sepl$prod$\sepr$),
  sum1 := sum + prod,
  tailInt($\sepl$l$\sepr$, $\sepl$l1$\sepr$),
  while_1($\sepl$l1, sum1$\sepr$, $\sepl$sum'$\sepr$)
\end{lstlisting}
\end{minipage}
\caption{Transformed RBR program.\label{fig:polrunningtr}}
\end{figure}

\begin{figure}[tbh]
\begin{minipage}{6.25cm}
{\small
\begin{lstlisting}[frame=none]
headA($\sepl$L$_{\tttype{ListA}}$, L$_{\tttype{A}}$$\sepr$, $\sepl$E$_{\tttype{A}}$$\sepr$) $\rulearrow$
  L$_{\tttype{ListA}} = ~$L'$_{\tttype{ListA}} + 1$
  $\wedge~$L'$_{\tttype{ListA}} \geq 0~\wedge~$L$_{\tttype{A}} = $max(E$_{\tttype{A}}$,L'$_{\tttype{A}}$)
  $\wedge~$E$_{\tttype{A}} \geq 0~\wedge~$L'$_{\tttype{A}} \geq 0$,

headInt($\sepl$L$_{\tttype{ListInt}}$, L$_{\tttype{Int}}$$\sepr$, $\sepl$E$_{\tttype{Int}}$$\sepr$) $\rulearrow$
  L$_{\tttype{ListInt}} = ~$L'$_{\tttype{ListInt}} + 1$
  $\wedge~$L'$_{\tttype{ListInt}} \geq 0$
  $\wedge~$L$_{\tttype{Int}} = $max(E$_{\tttype{Int}}$,L'$_{\tttype{Int}}$),

tailA($\sepl$L$_{\tttype{ListA}}$, L$_{\tttype{A}}$$\sepr$, $\sepl$L'$_{\tttype{ListA}}$,L'$_{\tttype{A}}$$\sepr$) $\rulearrow$
  L$_{\tttype{ListA}} = ~$L'$_{\tttype{ListA}} + 1$
  $\wedge~$L'$_{\tttype{ListA}} \geq 0~\wedge~$L$_{\tttype{A}} = $max(E$_{\tttype{A}}$,L'$_{\tttype{A}}$)
  $\wedge~$E$_{\tttype{A}} \geq 0~\wedge~$L'$_{\tttype{A}} \geq 0$,

tailInt($\sepl$L$_{\tttype{ListInt}}$, L$_{\tttype{Int}}$$\sepr$, $\sepl$L'$_{\tttype{ListInt}}$,L'$_{\inttype}$$\sepr$)
  $\rulearrow$ L$_{\tttype{ListInt}} = ~$L'$_{\tttype{ListInt}} + 1$
  $\wedge~$L'$_{\tttype{ListInt}} \geq 0$
  $\wedge~$L$_{\tttype{Int}} = $max(E$_{\tttype{Int}}$,L'$_{\tttype{Int}}$),
\end{lstlisting}
}
\end{minipage}
\begin{minipage}{6.25cm}
{\small
\begin{lstlisting}[frame=none,firstnumber=20]
factSum($\sepl$L$_{\tttype{ListInt}}$, L$_{\tttype{Int}}$$\sepr$, $\sepl$Sum'$_\inttype$$\sepr$) $\rulearrow$
  $\top~\wedge$ L$_{\tttype{ListInt}} \geq 0~\wedge~$L'$_{\tttype{ListInt}} \geq 0$,
  Sum$_\inttype$ = 0,
  while_1($\sepl$L$_{\tttype{ListInt}}$, L$_{\tttype{Int}}$, Sum$_\inttype$$\sepr$,
            $\sepl$Sum'$_\inttype$$\sepr$)

while_1($\sepl$L$_{\tttype{ListInt}}$, L$_{\tttype{Int}}$, Sum$_\inttype$$\sepr$, (*\label{fig:polabstraction:while_1:1}*)
          $\sepl$Sum'$_\inttype$$\sepr$) $\rulearrow$
  L$_\inttype$ = $-\infty$ $\wedge$ L$_{\tttype{ListInt}}$ = 1
  $\wedge$ L$_{\tttype{ListInt}} \geq 0$
  Sum'$_\inttype$ = Sum$_\inttype$

while_1($\sepl$L$_{\tttype{ListInt}}$, L$_{\tttype{Int}}$, Sum$_\inttype$$\sepr$, (*\label{fig:polabstraction:while_1:2}*)
          $\sepl$Sum'$_\inttype$$\sepr$) $\rulearrow$ $\top$,
  headInt($\sepl$L$_{\tttype{ListInt}}$, L$_{\tttype{Int}}$$\sepr$,$\sepl$E$_{\tttype{Int}}$$\sepr$),
  fact($\sepl$E$_\inttype$$\sepr$, $\sepl$Prod$_\inttype$$\sepr$),
  Sum1$_\inttype$ = Sum$_\inttype$ + Prod$_\inttype$,
  tailInt($\sepl$L$_{\tttype{ListInt}}$, L$_{\tttype{Int}}$$\sepr$,$\sepl$L1$_{\tttype{ListInt}}$,L1$_{\tttype{Int}}$$\sepr$),
  while_1($\sepl$L1$_{\tttype{ListInt}}$, L1$_\inttype$, Sum1$_\inttype$$\sepr$,
            $\sepl$Sum'$_\inttype$$\sepr$)
          \end{lstlisting}
}
\end{minipage}
	\caption{Abstraction of the polymorphic RBR program.\label{fig:polabstraction}}
\end{figure}

\cbstart Notice that, in this example, the number of arguments of $\TYPE{head}$
and $\TYPE{tail}$ procedures in the different instantiated versions of
Figure~\ref{fig:polabstraction} are the same, but they can differ depending on
the instantiations of the type variables. In this way, we can keep the size
relations between input and output instantiated abstractions (this is not
possible if we have only one version of $\TYPE{head}$ and $\TYPE{tail}$ and a
fixed number of arguments). \cbend
\end{example}

It is easy to check that any evaluation step given in the original RBR
program is translated into an evaluation step in the transformed RBR
program. The rest of the analysis follows from previous sections.

Another possibility for handling polymorphism could be to analyze every method once and instantiate that information in every particular method invocation. However, this is not an straightforward approach because different uses of the same polymorphic method can involve a different number of norms whose sizes must be related. Our approach is not completely modular and cannot be applied in every scenario (for example, polymorphic recursion cannot be handled), but provides a simple and effective way of supporting standard parametric polymorphism in our setting.

\subsection{Context-Sensitive Norms}\label{sec:cs-norms}

\cbstart In this section, we propose a way to improve the precision of
typed-norms using annotations.
%
When we define a data type, we can use the same type in different positions and
for different purposes. \cbstart \cbend In such cases, if the type is not part
of the recursive data structure, we can distinguish the different uses of the
same type. The corresponding upper bounds are typically more precise and provide
insights on the complexity of processing each part of the data.
Technically, to achieve this, we \cbstart \cbend need to annotate
non-recursive types with their positions in the data type structures.

\begin{definition}[Annotated Monomorphic types]\label{def:AnnMonomorphicTypes} An
\emph{annotated monomorphic type} $T$ can be a built-in data type as $\tint{}$ or an
algebraic data type $\DT$ defined as: \begin{flushleft}
$\begin{array}{l@{~~~~~}lll} & \mi{Dd} & ::= & \mi{data}~\DT = \cons_{D}
~[\overline{\mid \cons_{D}}]\\ & \cons_{D}   & ::= & \dcon[(T_{D,\mi{Co},1},\ldots,T_{D,\mi{Co},n})] \\
& T_{D,\mi{Co},1} & ::= & \mbox{$
\begin{cases}
T & \mbox{ if $T$ is recursive in $D$}\\
T_{\mi{Co},i} & \mbox{ if $T$ is non-recursive in $D$}
\end{cases}$}
\end{array}$
\end{flushleft}
\end{definition}

By unrolling annotated type definitions as trees
we can extend \deptypes{T} to
consider the annotations in the path from the root to the types.
Figure~\ref{fig:unrolledTerm} shows the type definition $\TYPE{
IntListPair = Pair(IntList, IntList)}$ as a tree.
From these annotations we obtain $\deptypes{\tttype{IntListPair}} =
\{(\tttype{IntListPair}, \Lambda), (\tttype{IntList}, (\tttype{Pair},1)), (\tttype{Int},(\tttype{Pair},1) \cdot
(\tttype{Cons},1)), (\tttype{IntList}, (\tttype{Pair},2)),\allowbreak
(\tttype{Int},(\tttype{Pair},2) \cdot (\tttype{Cons},1))\}$, where
each element of the set is formed by a pair (type, chain of positions) that
represents its annotations in the tree, and the empty chain is represented by
$\Lambda$. Notice that, if we do not have duplicated non-recursive types, we
obtain the same number of norms (but annotated).

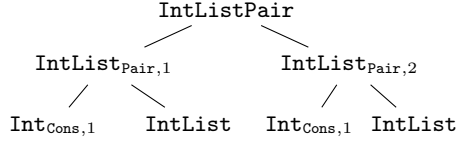
\begin{figure}
\begin{center}
\begin{tikzpicture}[node distance = 6cm, auto]

\node [] (tH) {\tttype{IntListPair}};
\node [below left = 0.5cm and 0.6cm] (tH1) {\tttype{IntList}$_{\tttype{Pair},1}$};
\node [below left = 1.3cm and 1.6cm] (tH11) {\tttype{Int}$_{\tttype{Cons},1}$};
\node [below right = 1.3cm and -1.2cm] (tH12) {\tttype{IntList}};
\node [below right = 0.5cm and 0.6cm] (tH2) {\tttype{IntList}$_{\tttype{Pair},2}$};
\node [below right = 1.3cm and 0.4cm] (tH21) {\tttype{Int}$_{\tttype{Cons},1}$};
\node [below right = 1.3cm and 1.8cm] (tH22) {\tttype{IntList}};

\draw [] (tH) [bend right=0] to (tH1) ;
\draw [] (tH) [bend right=0] to (tH2) ;
\draw [] (tH1) [bend right=0] to (tH11) ;
\draw [] (tH1) [bend right=0] to (tH12) ;
\draw [] (tH2) [bend right=0] to (tH21) ;
\draw [] (tH2) [bend right=0] to (tH22) ;
\end{tikzpicture}\\[0.2cm]
\end{center}
\caption{Unrolled type definition tree view of type \tttype{IntListList}}
\label{fig:unrolledTerm}
\end{figure}

Symbolic typed norms are extended to deal with annotated types. The
intuition in $\typednorm{t}{(T,A)}$ is that we traverse the annotated data
structure using $A$ and when the annotation is empty ($\Lambda$), we have
reached the type-norm we want to compute and we can use a definition similar to
Definition~\ref{def:symb-typed-norms}.

\begin{definition}[Symbolic annotated typed-norms]\label{def:ann-symb-typed-norms}
The symbolic typed-norm to compute the size of a term $t$ (possibly with variables) regarding a type $T$ is defined as $\typednorm{t}{(T,A)} = \typednormann{t}{(T,A)}{A}$ where:
	\[
	\typednormann{t}{(T,A)}{\Lambda} =
	\begin{cases}
	X_{(T,A)} & \mbox{ if } t \equiv x \mbox{ and } {(T,A)} \in \deptypes{\type{x}}\\
    - \infty  & \mbox{ if } t \equiv x, T = \tint \mbox{ and } (\tint,A) \notin \deptypes{\type{x}}\\

	n & \mbox{ if } t \equiv n \mbox{ and } T = \inttype\\
	\typednormann{e_1}{(T,A)}{\Lambda} \pm \typednormann{e_2}{(T,A)}{\Lambda} & \mbox{ if } t \equiv e_1 \pm e_2 \mbox{ and } T=\inttype 
	\\
%
%
	1 + \sum_{i=1}^n \typednormann{t_i}{(T,A)}{\Lambda} & \mbox{ if } t = \DC(t_1,\ldots, t_n) \mbox{ and } \type{t} = T\\
	\mi{max}_{i=1}^n \typednormann{t_i}{(T,A)}{\Lambda} & \mbox{ if } t = \DC(t_1,\ldots, t_n) \mbox{ and } \type{t} \neq T\\
	0 & \mbox{ in other case} \\
	\end{cases}
	\]
  \[
	\typednormann{t}{(T,A)}{(\DC,i) \cdot \mi{rest}} =
	\begin{cases}
	\typednormann{t_i}{(T,A)}{\mi{rest}} & \mbox{ if } t = \DC(t_1,\ldots, t_n)\\
	- \infty & \mbox{ if } t \neq \DC(t_1,\ldots, t_n) \mbox{ and } T = \tint\\
	0 & \mbox{ in other case} \\
	\end{cases}
	\]

\end{definition}

By using $\typednorm{t}{(T,A)}$, we obtain a sound abstraction in the sense of
Theorem~\ref{teo:soundness}, and the relevant type inference presented in Section~\ref{sec:inference} could be easily adapted to consider
annotated type-norms.
In the following, we present an example that shows how this technique is applied
in practice.

\begin{example}
  Consider the program in \rFig{fig:cspolrunning} (left)
  where the rule $\TYPE{lengthP}$ takes a non-recursive
  $\TYPE{IntListPair}$ structure and computes the length of the second
  list if the first list is empty; otherwise, returns the length of
  the first list. Since both lists are of the same type
  ($\TYPE{IntList}$), we are interested in considering
  their norms separately to obtain a better upper
  bound. \rFig{fig:cspolrunning} \cbstart (right) \cbend shows this abstraction\footnote{For simplicity, we use $\TYPE{IntList}_1$
  and $\TYPE{IntList}_2$ instead of $\TYPE{IntList}_{(\TYPE{Pair},1)}$
  and $\TYPE{IntList}_{(\TYPE{Pair},2)}$, and  $\TYPE{Int}_1$
  and $\TYPE{Int}_2$ instead of $\TYPE{Int}_{(\TYPE{Pair},1) \cdot (\TYPE{Cons},1)}$
  and $\TYPE{Int}_{(\TYPE{Pair},2) \cdot (\TYPE{Cons},1)}$.}.
  It basically distinguishes the uses of $\TYPE{IntList}$ by their position in
  the definition of $\TYPE{IntListPair}$---note the use of $\TYPE{IntList}_1$
  and $\TYPE{IntList}_2$.
  Using
  this improvement, we can obtain the upper bound
  $1+max(6+6*\TYPE{P}_{\TYPE{IntList}_1},5+6*\TYPE{P}_{\TYPE{IntList}_2})$
  for the function
  $\TYPE{lengthP}(\sepl\TYPE{P}_{\tttype{IntListPair}},
  \TYPE{P}_{\tttype{IntList}_1},\TYPE{P}_{\tttype{IntList}_2},\TYPE{P}_{\tttype{Int}_1},\TYPE{P}_{\tttype{Int}_2}\sepr,\sepl\TYPE{Res}_\inttype\sepr)$
  instead of $7+6*\TYPE{P}_{\tttype{IntList}}$
  $\TYPE{lengthP}(\sepl\TYPE{P}_{\tttype{IntListPair}},
  \TYPE{P}_\tttype{IntList},\TYPE{P}_\tttype{Int}\sepr,$ $\sepl\TYPE{Res}_\inttype\sepr)$. Note that in this case
  $\TYPE{P}_{\tttype{IntList}} = \TYPE{P}_{\tttype{IntList}_1} +
  \TYPE{P}_{\tttype{IntList}_2}$.

\begin{figure}[tbp]
\begin{minipage}{6cm}
{\small
    \begin{lstlisting}[frame=none]
data IntListPair = Pair(IntList, IntList)

lengthP :: $\TYPE{IntListPair} \times \tint$
(*\hspace{-0.85cm}$\circled{1}\hspace{0.45cm}$*)lengthP(p, res) $\rulearrow$
  match(p,Pair(Nil,l2)),
  length(l2,res)
(*\hspace{-0.85cm}$\circled{2}\hspace{0.45cm}$*)lengthP(p, res) $\rulearrow$
  nomatch(p,Pair(Nil,l2)),
  match(p,Pair(l1,l2)),
  length(l1,res)
\end{lstlisting}
}
\end{minipage}
\begin{minipage}{6.5cm}
{\small
  \begin{lstlisting}[frame=none,firstnumber=11]
lengthP($\sepl$P$_{\tttype{IntListPair}}$, P$_{\tttype{IntList}_1}$, P$_{\tttype{IntList}_2}$, P$_{\tttype{Int}_1}$, P$_{\tttype{Int}_2}\sepr$,$\sepl$Res$_\inttype\sepr$) $\rulearrow$
  P$_{\tttype{IntList}_1} = 1$
  $\wedge$ P$_{\tttype{IntList}_1} \geq 0$
  $\wedge$ P$_{\tttype{IntList}_2} \geq 0$,
  length($\sepl$P$_{\tttype{IntList}_2}$, P$_{\inttype_2}\sepr$, $\sepl$Res$_\inttype\sepr$)

lengthP($\sepl$P$_{\tttype{IntListPair}}$, P$_{\tttype{IntList}_1}$, P$_{\tttype{IntList}_2}$, P$_{\tttype{Int}_1}$, P$_{\tttype{Int}_2}\sepr$,$\sepl$Res$_\inttype\sepr$) $\rulearrow$
  P$_{\tttype{IntList}_1} > 1$
  $\wedge$ P$_{\tttype{IntList}_1} \geq 0$
  $\wedge$ P$_{\tttype{IntList}_2} \geq 0$,
  length($\sepl$P$_{\tttype{IntList}_1}$, P$_{\inttype_1}\sepr$, $\sepl$Res$_\inttype\sepr$)
\end{lstlisting}
}
\end{minipage}
\caption{Abstraction of RBR program with context-sensitive norms.\label{fig:cspolrunning}}
\end{figure}
\end{example}
\cbend

The proposed approach using annotated types to handle context-sensitive norms  generates abstract programs without any type information where every parameter represents an integer value. Therefore, these abstract programs can be solved automatically to upper and lower resource bounds using the existing techniques with no additional adaptations.


\section{Experiments}\label{sec:experiments}

\begin{figure}[tbp]
\begin{tabular}{L{2.05cm}rrrrr}
\toprule
\multirow{2}{*}{\bf Method} &  \multirow{2}{*}{\bf TS} & \multicolumn{2}{c}{\bf TN sum} & \multicolumn{2}{c}{\bf TN max}\\  \cmidrule(r){3-4}\cmidrule(r){5-6}
& & \small Inference & \small w/o Inference & \small Inference & \small w/o Inference 
\\ \midrule
\csvreader[head to column names]{examples/RAML_Ciao_cost.csv}{}{\small \method & $\tsMEDIAN$ & $\tnsiMEDIAN$ & $\tnsMEDIAN$ & $\tnmiMEDIAN$ & $\tnmMEDIAN$\\}
\\[-0.45cm]  \bottomrule
\end{tabular}
\caption{Median number of steps when evaluating the upper bounds for the list manipulation examples using 10 random input parameters.\label{fig:table_RAML_Ciao_ub}}
\end{figure}

\begin{figure}[tbp]
\begin{tabular}{L{2.85cm}rrrrr}
\toprule
\multirow{2}{*}{\bf Method} & \multirow{2}{*}{\bf TS} & \multicolumn{2}{c}{\bf TN sum} & \multicolumn{2}{c}{\bf TN max}\\  \cmidrule(r){3-4}\cmidrule(r){5-6}
& & \small Inference & \small w/o Inference & \small Inference & \small w/o Inference 
\\ \midrule
\csvreader[head to column names]{examples/Chat_cost.csv}{}{\small \method & $\tsMEDIAN$ & $\tnsiMEDIAN$ & $\tnsMEDIAN$ & $\tnmiMEDIAN$ & $\tnmMEDIAN$\\}
\\[-0.45cm] \bottomrule
\end{tabular}
\caption{Median number of steps when evaluating the upper bounds for Chat methods using 10 random input parameters.\label{fig:table_Chat_ub}}
\end{figure}

\begin{figure}[tbp]
\begin{tabular}{L{2.85cm}rrrrr}
\toprule
\multirow{2}{*}{\bf Method} &  \multirow{2}{*}{\bf TS} & \multicolumn{2}{c}{\bf TN sum} & \multicolumn{2}{c}{\bf TN max}\\  \cmidrule(r){3-4}\cmidrule(r){5-6}
& & \small Inference & \small w/o Inference & \small Inference & \small w/o Inference 
\\ \midrule
\csvreader[head to column names]{examples/ETICS_cost.csv}{}{\small \method & $\tsMEDIAN$ & $\tnsiMEDIAN$ & $\tnsMEDIAN$ & $\tnmiMEDIAN$ & $\tnmMEDIAN$\\}
\\[-0.45cm] \bottomrule
\end{tabular}
\caption{Median number of steps when evaluating the upper bounds for ETICS methods using 10 random input parameters.\label{fig:table_ETICS_ub}}
\end{figure}


We have implemented our approach in the SACO
system~\cite{AlbertAGGP11,AlbertAFGGMPR14}. \cbstart SACO is a Static Analyzer
for Concurrent Objects that in addition to the resource analyzer also
includes a deadlock analyzer. Although in this paper we have focused
on sequential programs, concurrent objects \cite{agha93actorspace}
provide a formalism to model concurrent and distributed systems using
the ABS language mentioned in Section~\ref{sec:intermediate-representation}. SACO carries out the
transformation of ABS programs into the intermediate representation
defined in Section~\ref{sec:intermediate-representation} so that new
extensions and analyses can be integrated in SACO at this point in the
implementation.  The whole system is implemented in Prolog and can be
used from an online web interface at
\url{https://costa.fdi.ucm.es/saco/web/}, where in addition to the
examples that will be described in this section, the user can type her
own programs and run the available analyses. \cbend

In this section we perform an experimental comparison of the different norms to evaluate both the precision of the obtained upper bounds and the time needed by our resource analysis framework to obtain them.
In order to create a set of benchmarks as complete as possible, we have collected tests from different sources. First, we have taken some examples from the Resource Aware ML~\cite{Hoffmann2017,HAH12,Hofmann03} (RAML) set of examples, which are also used in the resource usage analysis using sized types presented in~\cite{SerranoLH14}. These examples are mainly focused on list manipulation, and include: 
\begin{itemize}
\item \emph{append} and \emph{appendAll2} for appending lists and compound lists,
\item \emph{coupled}, a pair of mutually recursive functions that generate lists
of a given size,
\item \emph{dyade} for combining and multiplying two lists,
\item \emph{eratos} for detecting prime numbers in a range using the \emph{sieve of Eratosthenes},
\item \emph{fib} for computing the $n^{th}$ number of the Fibonacci sequence,
\item \emph{isort}, \emph{isortlist} and \emph{minSort}, for sorting integer lists and lists of integer lists,
\item \emph{listnum}, that creates a list of decreasing integer numbers,
\item \emph{nub} for removing duplicates in a list of lists,
\item \emph{partition} for splitting a list given a pivot element, and
\item \emph{zip3} for combining 3 lists in a list of triples.
\end{itemize}
Additionally, we have included in this first set of list manipulation examples the \emph{factSum} method presented in Figure~\ref{fig:running}, the \emph{listfact} running example used in~\cite{SerranoLH14} for adding all the factorial numbers in a list, and two extra methods \emph{traverse1} and \emph{traverse2} for traversing lists of lists.

As the second set of examples we have used 11 different methods of the \emph{Chat} program, a well-known distributed ABS~\cite{johnsen10fmco} program used for testing and evaluating static analyses. This program provides a chat server that allows different clients to exchange messages. Finally, we have also considered a third set of 4 methods from the industrial case study ETICS\footnote{\url{http://envisage-project.eu/wp-content/uploads/2015/04/D4_4_1r.pdf}}. This case study models in ABS the process of exploiting on-demand virtual machines to satisfy service requests trying to maximize the profit considering cost and penalty terms specified in service level agreements (SLAs). The complete code of all the examples used in this section
can be found in \url{https://costa.fdi.ucm.es/saco/web/} inside the ``\emph{Typed-Norms}'' folder. 

Figures~\ref{fig:table_RAML_Ciao_ub}--\ref{fig:table_ETICS_ub} contain upper bounds on the number of steps obtained by our resource analysis framework with different norms and enabling/disabling the relevant types inference algorithm presented in Section~\ref{sec:inference}.
The resource analysis framework obtains upper bounds as arithmetic expressions on the input parameters of the methods, which are difficult to compare (when possible). In order to easily compare the upper bounds obtained for the different norms, we have evaluated them using 10 sets of random input parameters.\footnote{Note that we evaluate the obtained upper bounds, so in this comparison we are considering the worst case of every method for different random sizes of its parameters.} This approach generates 10 different numeric values for each method and norm, so in the Figures~\ref{fig:table_RAML_Ciao_ub}--\ref{fig:table_ETICS_ub} we have included the median of those numbers.\footnote{We have used the median value instead of the average to avoid any influence of extreme outliers. These values may appear, as we are evaluating upper bounds using random input parameters.} If the framework is not able to obtain an upper bound for a method using a particular norm we mark it with the symbol ``--'', and if the framework requires more than 2 hours to obtain the upper bound we use the term ``\emph{timeout}''. These figures contain a row for each method, whose upper bounds have been computed using the different norms presented in Definitions~\ref{def:term-size} and~\ref{def:typed-norms}: term-size norm $\tsnorm{\cdot}$ (\textbf{TS} in the figures), typed-norm $\typednormorig{\cdot}{T}$ (\textbf{TN sum} in the figures), and typed-norm $\typednorm{\cdot}{T}$ (\textbf{TN max} in the figures). In the case of typed-norms, the upper bound has been computed inferring the useful types (column \emph{Inference}) and without inferring that information (column \emph{w/o Inference}).

The main result extracted from Figures~\ref{fig:table_RAML_Ciao_ub}--\ref{fig:table_ETICS_ub} is that, for all  methods, the resource analysis framework obtains more precise results using typed-norms ($\typednormorig{\cdot}{T}$ or $\typednorm{\cdot}{T}$) than using term-size ($\tsnorm{\cdot}$). Concretely, for the list manipulation examples in Figure~\ref{fig:table_RAML_Ciao_ub} we observe that the ratio $\frac{\textnormal{steps typed-norm}}{\textnormal{steps term-size}}$ ranges from $0.002$ (\emph{isortlist}, \emph{nub}) to $0.93$ (\emph{listnum}), although the vast majority of the examples show a very small ratio: considering the 17 methods, the $53\%$ have a ratio smaller than $0.1$ and $82\%$  have a ratio smaller than $0.3$.
As shown in Figure~\ref{fig:table_Chat_ub}, the examples of the \emph{Chat} program obtain more precise upper bounds than using term-size, but the gain is not as relevant as in the list manipulation examples. For these methods, the ratio ranges between $0.22$ for \emph{Main.main} and $0.80$ for \emph{ClientImpl.receive}. Finally, Figure~\ref{fig:table_ETICS_ub} shows improved upper bounds with ratio $0.01$ for \emph{Solver.utility}, $0.02$ for \emph{Solver.randomMap} and $0.14$ for \emph{Solver.contains}. In the case of \emph{Solver.bestSolution}, there is no improvement because the method has a constant cost which does no involve traversing any data structure, so in this case the chosen norm is not relevant for the upper bound.

Another important result extracted from Figures~\ref{fig:table_RAML_Ciao_ub}--\ref{fig:table_ETICS_ub} is that, for the majority of the methods, the upper bounds obtained using 
$\typednormorig{\cdot}{T}$ and $\typednorm{\cdot}{T}$ are equal. 
There are cases where one typed-norm definition allows  obtaining an upper bound, whereas the other definition cannot (namely in \emph{factSum}, \emph{listfact}, \emph{nub}, \emph{traverse1}, and \emph{traverse2} from Figure~\ref{fig:table_RAML_Ciao_ub}). This is not a surprising result, as depending on the manipulation performed on a data structure one typed-norm can detect a decreasing of the size whereas the other cannot, which has a dramatic impact on the obtained upper bound. However, this cannot be seen as a practical limitation because our resource analysis framework can support term-size norm and both types of typed-norms, so the user can choose at any moment which norm to use.

A surprising result observed in Figures~\ref{fig:table_RAML_Ciao_ub}--\ref{fig:table_ETICS_ub} is that, in some cases like \emph{appendAll2}, \emph{isortlist}, \emph{nub}, \emph{zip3}, \emph{ClientImpl.receive}, \emph{Main.main}, or \emph{Solver.utility}, the upper bound obtained for a typed-norm is different with or without the inference enabled. From the theoretical point of view both versions have the same upper bounds, but the actual implementation of the cost relation solver can produce different upper bounds. Concretely, we use CoFloCo~\cite{DBLP:conf/aplas/Flores-MontoyaH14,Flores-Montoya16}, a compositional solver 
for programs with complex execution flow and multi-dimensional ranking functions. 
This solver can find different ranking functions for recurrences (loops), and in those cases it selects one of them to generate the final upper bound. For this selection step, it takes into account which ranking functions can be maximized in terms of the input parameters, and it uses some heuristics in case of ties. When we remove some variables thanks to the inference of typed-norms, the set of ranking functions found by CoFloCo for a given recurrence can be different, altering the selected ranking function and therefore the final upper bound.

\begin{figure}[tbp]
\begin{tabular}{lrrrrr}
\toprule
\multirow{2}{*}{\bf Method} &  \multirow{2}{*}{\bf TS} & \multicolumn{2}{c}{\bf TN sum} & \multicolumn{2}{c}{\bf TN max}\\  \cmidrule(r){3-4}\cmidrule(r){5-6}
& & \small Inference & \small w/o Inference & \small Inference & \small w/o Inference 
\\ \midrule
\csvreader[head to column names]{examples/RAML_Ciao_times.csv}{}{\method & $\ts$ & $\tnsi$ & $\tns$ & $\tnmi$ & $\tnm$\\}
\\[-.45cm] \bottomrule
\end{tabular}
\caption{Time (milliseconds) to obtain upper bounds for the list manipulation examples\label{fig:table_RAML_times}}
\end{figure}

\begin{figure}[tbp]
\begin{tabular}{L{2.85cm}rrrrr}
\toprule
\multirow{2}{*}{\bf Method} &  \multirow{2}{*}{\bf TS} & \multicolumn{2}{c}{\bf TN sum} & \multicolumn{2}{c}{\bf TN max}\\  \cmidrule(r){3-4}\cmidrule(r){5-6}
& & \small Inference & \small w/o Inference & \small Inference & \small w/o Inference 
\\ \midrule
\csvreader[head to column names]{examples/Chat_times.csv}{}{\method & $\ts$ & $\tnsi$ & $\tns$ & $\tnmi$ & $\tnm$\\}
\\[-.45cm] \bottomrule
\end{tabular}
\caption{Time (milliseconds) to obtain upper bounds for Chat methods\label{fig:table_Chat_times}}
\end{figure}

\begin{figure}[tbp]
\begin{tabular}{L{2.87cm}rrrrr}
\toprule
\multirow{2}{*}{\bf Method} &  \multirow{2}{*}{\bf TS} & \multicolumn{2}{c}{\bf TN sum} & \multicolumn{2}{c}{\bf TN max}\\  \cmidrule(r){3-4}\cmidrule(r){5-6}
& & \small Inference & \small w/o Inference & \small Inference & \small w/o Inference 
\\ \midrule
\csvreader[head to column names]{examples/ETICS_times.csv}{}{\method & $\ts$ & $\tnsi$ & $\tns$ & $\tnmi$ & $\tnm$\\}
\\[-.45cm] \bottomrule
\end{tabular}
\caption{Time (milliseconds) to obtain upper bounds for ETICS methods\label{fig:table_ETICS_times}. Executions were aborted after 2 hours.}
\end{figure}

Another dimension we have measured in our benchmarks is the time needed by our resource analysis framework to obtain upper bounds using the different norms. Figures~\ref{fig:table_RAML_times}--\ref{fig:table_ETICS_times} show these results for the 3 sets of examples, measuring time in milliseconds in a CPU Intel$^{\small \textregistered}$ i5-7300U CPU with 8 GB of memory. 
As before, 
we use ``--'' to denote that the framework is not able to obtain the upper bound and ``\emph{timeout}'' if the framework does not finish in 2 hours. 
In general, obtaining upper bounds using typed-norms requires more time than using the term-size norm. 
\cbstart 
This fact was expected because of two causes: the additional stage for inferring typed-norms and the increment in variables when using typed-norms. Although the data-flow analysis is not very expensive, for small programs where solving is very fast the inference can require a similar amount of time, therefore significantly increasing the overall time. However, independently of the typed-norm inference stage, using typed-norms implies that one data-structure can be measured using different sizes, which results in more than one variable per structure in the set of cost relations that the solver processes. The time needed by the solver is proportional to the complexity of the cost relations (i.e., the number of relations and the amount of variables), so using typed-norms usually increases the overall time of the analysis. 
\cbend
As shown in Figures~\ref{fig:table_RAML_times}--\ref{fig:table_ETICS_times}, the increase of time wrt.\ term-size norms is not very pronounced and this extra time greatly compensates the gain in precision. However, there are also some situations (for example in \emph{coupled}, \emph{eratos}, \emph{zip3}, \emph{ClientGUIImpl.init2}, \emph{ServerImpl.connect}, \emph{ServerImpl.sessionClosed}, and \emph{Solver.contains}, among others) where the time required to obtain upper bounds using typed-norms with inference is faster than using term-size. The explanation of this behavior is that the type-norm inference can detect that \cbstart a parameter has not relevant types (i.e., its value is not involved in any recurrence), so that parameter disappears from the cost relations, \cbend reducing the number of variables. As explained before, the resulting set of cost relations will be simpler and the solver will require less time to obtain the upper bound. 
Similarly, we also notice that using the typed-norm inference in general produces better times than avoiding this step. Although this requires an extra analysis, the reduction on the complexity (number of variables) of the resulting cost relations makes the resolution stage faster, producing smaller overall times. An extreme example of this situation is the method \emph{Solver.utility}, where not applying inference \cbstart is \cbend 120 times slower (for $\typednormorig{\cdot}{T}$) or reaches the time limit of 2 hours (for $\typednorm{\cdot}{T}$).

\subsection{Comparison to RAML and Sized Types}

\begin{figure}[tb]
\begin{tabular}{llll}
\toprule
\textbf{Method} & RAML & Sized Types & Typed-norms \\
\midrule
\code{append(L1,L2)} & \code{L1}$_{\small\TYPE{il}}$ & \code{L1}$_{\small\TYPE{il}}$ & \code{L1}$_{\small\TYPE{il}}$\\
\code{appendAll2(L)} & \code{L}$_{\small\TYPE{il}}\cdot$\code{L}$_{\small\TYPE{ill}}\cdot$\code{L}$_{\small\TYPE{illl}}$ & \code{L}$_{\small\TYPE{il}}\cdot$\code{L}$_{\small\TYPE{ill}}\cdot$\code{L}$_{\small\TYPE{illl}}$ & \code{L}$_{\small\TYPE{il}}\cdot$\code{L}$_{\small\TYPE{ill}}\cdot$\code{L}$_{\small\TYPE{illl}}$ \\
\code{coupled(N)} & \code{N}$_{\small\TYPE{i}}$ & \code{N}$_{\small\TYPE{i}}$ & \code{N}$_{\small\TYPE{i}}$ \\
\code{dyade(L1,L2)} & \code{L1}$_{\small\TYPE{il}}\cdot$\code{L2}$_{\small\TYPE{il}}$ & \code{L1}$_{\small\TYPE{il}}\cdot$\code{L2}$_{\small\TYPE{il}}$ &
\code{L1}$_{\small\TYPE{il}}\cdot$\code{L2}$_{\small\TYPE{il}}$ \\
\code{eratos(L)} & \code{L}$_{\small\TYPE{il}}\cdot$\code{L}$_{\small\TYPE{il}}$ & \code{L}$_{\small\TYPE{il}}\cdot$\code{L}$_{\small\TYPE{il}}$ &
\code{L}$_{\small\TYPE{il}}\cdot$\code{L}$_{\small\TYPE{il}}$ \\
\code{factSum(L)} & infeasible &   \code{L}$_{\small\TYPE{il}}\cdot$\code{L}$_{\small\TYPE{i}}$ & \code{L}$_{\small\TYPE{il}}\cdot$\code{L}$_{\small\TYPE{i}}$\\
\code{fib(N)} & infeasible & \cbstart $\varphi^{{\small\TYPE{N}}_{\small\TYPE{i}}}$ \cbend & $2^{{\small\TYPE{N}}_{\small\TYPE{i}}}$\\
\code{hanoi(N,A,B,C)} & infeasible & $2^{{\small\TYPE{N}}_{\small\TYPE{i}}}$ & unknown\\
\code{isort(L)} & \code{L}$_{\small\TYPE{il}}\cdot$\code{L}$_{\small\TYPE{il}}$ & 
\code{L}$_{\small\TYPE{il}}\cdot$\code{L}$_{\small\TYPE{il}}$ & \code{L}$_{\small\TYPE{il}}\cdot$\code{L}$_{\small\TYPE{il}}$ \\
\code{isortlist(L)} & \code{L}$_{\small\TYPE{ill}}\cdot$\code{L}$_{\small\TYPE{ill}}\cdot$\code{L}$_{\small\TYPE{il}}$ & \code{L}$_{\small\TYPE{ill}}\cdot$\code{L}$_{\small\TYPE{ill}}\cdot$\code{L}$_{\small\TYPE{il}}$ & \code{L}$_{\small\TYPE{ill}}\cdot$\code{L}$_{\small\TYPE{ill}}\cdot$\code{L}$_{\small\TYPE{il}}$\\
\code{listfact(L)} & unknown & \code{L}$_{\small\TYPE{il}}\cdot$\code{L}$_{\small\TYPE{i}}$ & \code{L}$_{\small\TYPE{il}}\cdot$\code{L}$_{\small\TYPE{i}}$\\
\code{listnum(N)} & unknown & \code{N}$_{\small\TYPE{i}}$ & \code{N}$_{\small\TYPE{i}}$\\
\code{minSort(L)} & \code{L}$_{\small\TYPE{il}}\cdot$\code{L}$_{\small\TYPE{il}}$ & \code{L}$_{\small\TYPE{il}}\cdot$\code{L}$_{\small\TYPE{il}}$ &\code{L}$_{\small\TYPE{il}}\cdot$\code{L}$_{\small\TYPE{il}}$ \\
\code{nub(L)} & \code{L}$_{\small\TYPE{ill}}\cdot$\code{L}$_{\small\TYPE{ill}}\cdot$\code{L}$_{\small\TYPE{il}}$ & \code{L}$_{\small\TYPE{ill}}\cdot$\code{L}$_{\small\TYPE{ill}}\cdot$\code{L}$_{\small\TYPE{il}}$ & \code{L}$_{\small\TYPE{ill}}\cdot$\code{L}$_{\small\TYPE{ill}}\cdot$\code{L}$_{\small\TYPE{il}}$\\
\code{partition(E,L)} & \code{L}$_{\small\TYPE{il}}$ & \code{L}$_{\small\TYPE{il}}$ & \code{L}$_{\small\TYPE{il}}$\\
\code{traverse1(L)} & \code{L}$_{\small\TYPE{ill}} \cdot$ \code{L}$_{\small\TYPE{il}}$ & unknown & \code{L}$_{\small\TYPE{il}}$\\
\code{traverse2(L)} & \code{L}$_{\small\TYPE{ill}} \cdot$ \code{L}$_{\small\TYPE{il}}$ & \code{L}$_{\small\TYPE{il}}$ & \code{L}$_{\small\TYPE{il}}$\\
\code{zip3(L1,L2,L3)} & \code{L3}$_{\small\TYPE{il}}$ & min(\code{L1}$_{\small\TYPE{il}}$,\code{L2}$_{\small\TYPE{il}}$,\code{L3}$_{\small\TYPE{il}}$) & \code{L3}$_{\small\TYPE{il}}$ \\
\bottomrule
\end{tabular}
\caption{
Complexity order of the upper bounds obtained by RAML~\protect\cite{Hoffmann2017}, sized types~\protect\cite{SerranoLH14}, and typed-norms.\label{fig:comparison}
}
\end{figure}

To conclude this section we will compare the results obtained by RAML~\cite{Hoffmann2017,HAH12,Hofmann03}, sized types~\cite{SerranoLH14} and our approach using the set of list manipulation examples. To that end, we will use Table 1 from~\cite{SerranoLH14} extended with a new column containing the upper bounds obtained by our resource analysis framework using \cbstart the best typed-norm definition for each program\cbend. We will also consider a new \cbstart example \cbend \code{hanoi} that computes a list of movements needed to solve the Tower of Hanoi puzzle, as it is included in the comparison of~\cite{SerranoLH14}.
%
The results can be found in Figure~\ref{fig:comparison}, where we show the complexity order of the upper bounds instead of the concrete expression obtained. Parameters \code{L}, \code{L1}, \code{L2}, and \code{L3} are lists of different types, and parameters \code{N}, \code{A}, \code{B}, \code{C}, and \code{E} are integer numbers. We represent the integer type as ``\code{i}'', lists of integers as ``\code{il}'', lists of lists of integers as ``\code{ill}'', and lists of lists of lists of integers as ``\code{illl}''.

As shown in Figure~\ref{fig:comparison}, the three approaches obtain the same results for the majority of the methods and they differ in a small number of cases.
RAML cannot handle \code{factSum}, \code{fib}, \code{listfact}, and \code{listnum}, but both sized types and typed-norms obtain a similar upper bound. In the case of \code{fib} both approaches obtain an exponential upper bound but sized types obtain a slightly more precise bound of \cbstart $\varphi^N$, where $\varphi \approx 1.62$ \cbend is the golden ratio, whereas our approach obtains the upper bound $2^N$. 
Both RAML and typed-norms obtain upper bounds for \code{traverse1} and \code{traverse2}, but the upper bounds \code{L}$_{\small\TYPE{il}}$ computed by typed-norms are more precise than the ones from RAML (\code{L}$_{\small\TYPE{ill}} \cdot$ \code{L}$_{\small\TYPE{il}}$). 
On the other hand, sized types cannot handle \code{traverse1} but generate the same upper bound \code{L}$_{\small\TYPE{il}}$ for \code{traverse2}.
In \code{zip3} both RAML and typed-norms obtain the same upper bound \code{L3}$_{\small\TYPE{il}}$ (the length of the third list), but the upper bound  min(\code{L1}$_{\small\TYPE{il}}$,\code{L2}$_{\small\TYPE{il}}$,\code{L3}$_{\small\TYPE{il}}$) generated by sized types is more precise, as it represents the minimum length of any of the lists. Finally, RAML cannot handle the \code{hanoi} method whereas sized types obtain an upper bound of $2^N$.
Note that the fact that SACO cannot obtain an upper bound for this example is not related to typed-norms abstraction. This program creates a data-structure of exponential size in the input parameter $N$ and then traverses it. SACO is able to infer that the cost of generating the data-structure is $2^N$, and that the cost of traversing it is linear in the size of the data-structure. However, the underlying cost analysis techniques of SACO cannot track exponential input-output dependencies, and thus it fails to conclude that  the cost of traversing the data-structure is actually exponential in $N$.


\section{Related Work}\label{related}

Our work is inspired by~\cite{GenaimCGL02} where the authors introduce
the notion of typed-based norm in the context of termination analysis,
and show how types can be very useful for finding suitable norms even
for untyped languages like Prolog. They also illustrate that
typed-based norms sometimes must be combined to get a termination
proof.

\cbstart
Resource Aware ML~\cite{Hoffmann2017,HAH12,Hofmann03} 
uses automatic amortized resource analysis, where the main idea is to consider potential functions that depend on data structures. For every step the available potential must be sufficient for the cost of the evaluation and the potential of the next state. The information about the potential functions (combinations of a set of base polynomials) are annotated in the types, and the type system collects the relations among the different types, which are finally solved by linear programming. On the other hand, our transformational approach is based on Wegbreit's~\cite{DBLP:journals/cacm/Wegbreit75}: (1) generation of an abstract representation and (2) resolution of cost relations. Although originally presented for Lisp programs, this approach has been applied to other functional languages~\cite{Sands_thesis,grobauer01cost}, imperative languages like Java bytecode~\cite{AlbertAGPZ07} or actor systems~\cite{AlbertACGGPR15}.
\cbend

\cbstart
Sized types~\cite{DBLP:conf/popl/HughesPS96,HP99,Chin2001} are type expressions that incorporates annotations representing lower/upper bounds of the size of the different components of a type. They were originally proposed for guaranteeing various basic properties of reactive systems like productivity, memory leaks, or termination. In~\cite{phd:pedro_vasconcelos}, Vasconcelos proposed to use sized types to track the different sizes involved in a data structure and use them to perform resource
analysis. Unlike our approach, in this approach one can handle
multiple typed-norms on variables only by having parametric
data-structures.
\cbend
The techniques of Vasconcelos have been extended to
the context of logic programs~\cite{sized-types-iclp2013} and applied to resource
usage analysis~\cite{SerranoLH14} inside the PLAI abstract interpretation framework~\cite{ai-jlp,inc-fixp-sas} of CiaoPP~\cite{hermenegildo11:ciao-design-tplp}.

Our approach and the one presented in~\cite{sized-types-iclp2013,SerranoLH14} use the sizes of different inner parts of a data structure in order to obtain more precise bounds, but differ in the destination resource analysis framework: the analysis in~\cite{sized-types-iclp2013,SerranoLH14} must fit in the abstract interpretation framework used by CiaoPP, whereas our analysis must fit in a transformational framework based on Wegbreit's approach~\cite{DBLP:journals/cacm/Wegbreit75}.
From this point of view, the presented approach in this paper
is simpler to define and to implement because we do not need to update
the abstract interpretation theory (defining specific
concretization, abstraction functions, etc.). Instead, we have
to add explicit arguments for the sizes of data structures and define
a size abstraction which is rather straightforward. The implementation
simply requires a pre-process to add the arguments and properly
abstract them. Then, standard size analysis works on the transformed
program.
As regards accuracy, the resource analysis in~\cite{SerranoLH14} is closer to logic programming and takes into account some features like backtracking or failure to obtain more precise upper and lower bounds.
The sized types used in~\cite{SerranoLH14} can track separately the size of different components of a clause, but the only data type they use are lists of elements. Using lists and procedures, our approach could provide a similar level of precision \cbstart \cbend.
Finally, the approach used in~\cite{SerranoLH14} detects relevant Prolog variables and generates constraints between them. Once a variable is detected as relevant, its complete sized type is used to generate constraints, although some of them may not have any impact in the resource analysis. As explained in Section~\ref{sec:inference}, we define an additional step to infer the relevant typed-norms of any variable, which is an extension of the results in~\cite{AlbertAGGP11} to deal with typed-norms in addition to
useless arguments.
Using this analysis, irrelevant variables will have an empty set of relevant typed-norms, and will therefore be ignored in the resolution phase. Moreover, for relevant variables only those relevant typed-norms will be kept in the next phase. As explained in the experiments in Section~\ref{sec:experiments}, removing useless typed-norms is an optimization in the resolution phase with a great impact in the overall time when obtaining upper bounds, so this analysis is
essential to be scalable in practice. Moreover, to the best of our knowledge, it is the first time that
it is applied on norms.


\section{Conclusions and Future Work}\label{conclusions}

We have presented a transformational approach to resource analysis
with typed-norms which has the advantage that its formalization can be
done by only adapting the first phase of cost analysis in which the
program is transformed into an intermediate abstract program.  Besides
its simple formal development, the implementation easily integrates
into the previous system as a pre-phase to the existing analysis.
Additionally, we have presented, to the best of our knowledge, the
first algorithm to automatically infer typed-norms from programs. Our
analysis is formalized on a simple rule-based language and it is
therefore not tied to any particular programming language. Translating
from the standard programming languages to the rule-based form is
rather straightforward (see e.g. \cite{AlbertACGGPR15,AlbertAGPZ12}).
Finally, we have carried out a thorough experimental evaluation of our
proposal and integrated it within the SACO
system~\cite{AlbertAGGP11}.  In future work we plan to extend
our work to a concurrent setting for which the inference of
typed-norms will be more contrived.


\bibliographystyle{acmtrans}

\newpage
\appendix

\section{Proofs}\label{sec:proofs}

\begin{definition}[$\fevalt$ function for terms]\label{def:evalt}
The function $\fevalt : \mi{Terms} \times \TV \hookrightarrow \mi{Terms}$ evaluates a term $t$ based on a variable mapping $\tv$:
\[
\fevalt(t,\tv)=
\begin{cases}
\tv(x) & \mi{if~} t \equiv x, ~x \in \dom(\tv)\\
n      & \mi{if~} t \equiv n\\
\fevalt(e_1,\tv) + \fevalt(e_2,\tv) & \mi{if~} t \equiv e_1 + e_2\\
\fevalt(e_1,\tv) - \fevalt(e_2,\tv) & \mi{if~} t \equiv e_1 - e_2\\
\DC(\overline{\fevalt(t_n,\tv)}) & \mi{if~} t \equiv \DC(\overline{t_n})\\
\mi{undefined} & i.o.c
\end{cases}
\]
\end{definition}

\begin{definition}[$\fevalg$ function for guards]\label{def:evalg}
The function $\fevalg : \mi{Guards} \times \TV \hookrightarrow \TV$ checks if a guard $g$ is satisfied w.r.t.\ a variable mapping $\tv$, returning the variable mapping that instantiates the variables in the guard:
\[
\fevalg(g,\tv)=
\begin{cases}
\emptymap{}  & \mi{if~} g \equiv \mi{true}\\
\tv_1 \uplus \tv_2 & \mi{if~} g \equiv g_1 \wedge g_2, \fevalg(g_1, \tv) = \tv_1, \\
 & ~~~\fevalg(g_2, \tv \uplus \tv_1) = \tv_2 \\
\emptymap{} & \mi{if~} g \equiv e_1 \mathtt{~op~} e_2, ~ \mathtt{op} \in \{>, =, \geq\}, \\
 & ~~~\fevalt(e_1, \tv) = n_1, ~\fevalt(e_2, \tv) = n_2, ~n_1 \mathtt{~op~} n_2 \\
\tv' & \mi{if~} t \equiv \match(x,p),~ \exists \tv' \in \TV ~s.t.~ \tv'(p) = \tv(x), \\
  & ~~~\dom(\tv') = \vars(p)\\
\emptymap{} & \mi{if~} t \equiv \notmatch(x,p),~ \nexists \tv' \in \TV ~s.t.~ \tv'(p) = \tv(x), \\
  & ~~~\dom(\tv') = \vars(p)\\
\mi{undefined} & i.o.c
\end{cases}
\]
\end{definition}

\begin{prop}\label{teo:evalg_disjoint_tv}
Let $\{\overline{x}\}$ \cbstart be \cbend those variables occurring in the patterns $p$ of a guard $g$. If $\fevalg(g, \tv) = \tv'$ and $\dom(\tv) \cap \{\overline{x}\} = \emptyset$ then $\dom(\tv) \cap \dom(\tv') = \emptyset.$ \end{prop}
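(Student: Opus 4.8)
The plan is to prove the statement by structural induction on the guard $g$, following the clause analysis of Definition~\ref{def:evalg}. Before starting the induction it is convenient to record one auxiliary observation that will be reused in every case: for every guard $g$ and every $\tv$ on which $\fevalg$ is defined, the domain of the result is contained in the pattern variables of $g$, i.e.\ $\dom(\fevalg(g,\tv)) \subseteq \{\overline{x}\}$. This is immediate from the definition, since the only clause that enlarges the domain at all is the $\match(x,p)$ clause, which contributes exactly $\vars(p)$, while the conjunction clause merely takes the union of the contributions of its subguards and the remaining clauses return the empty mapping $\emptymap$.

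For the \emph{base cases} $g \equiv \mi{true}$, $g \equiv e_1 \mathtt{~op~} e_2$, and $g \equiv \notmatch(x,p)$, the function returns $\emptymap$, so $\dom(\tv') = \emptyset$ and the conclusion $\dom(\tv) \cap \dom(\tv') = \emptyset$ holds vacuously. For $g \equiv \match(x,p)$ the result satisfies $\dom(\tv') = \vars(p)$, and here the pattern variables are exactly $\{\overline{x}\} = \vars(p)$; the hypothesis $\dom(\tv) \cap \{\overline{x}\} = \emptyset$ is then literally the desired conclusion.

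The only genuine work is the \emph{inductive case} $g \equiv g_1 \wedge g_2$, where $\fevalg(g_1,\tv) = \tv_1$, $\fevalg(g_2, \tv \uplus \tv_1) = \tv_2$, and $\tv' = \tv_1 \uplus \tv_2$. Writing $\{\overline{x_i}\}$ for the pattern variables of $g_i$, so that $\{\overline{x}\} = \{\overline{x_1}\} \cup \{\overline{x_2}\}$, the hypothesis splits into $\dom(\tv) \cap \{\overline{x_1}\} = \emptyset$ and $\dom(\tv) \cap \{\overline{x_2}\} = \emptyset$. Applying the induction hypothesis to $g_1$ gives $\dom(\tv) \cap \dom(\tv_1) = \emptyset$ (which, incidentally, shows that the disjoint union $\tv \uplus \tv_1$ used in the recursive call is well defined). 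To apply the induction hypothesis to $g_2$ under the extended mapping $\tv \uplus \tv_1$ I must first discharge its precondition, namely $\dom(\tv \uplus \tv_1) \cap \{\overline{x_2}\} = \emptyset$; this reduces to $\dom(\tv) \cap \{\overline{x_2}\} = \emptyset$ (a hypothesis) together with $\dom(\tv_1) \cap \{\overline{x_2}\} = \emptyset$. The induction hypothesis then yields $\dom(\tv \uplus \tv_1) \cap \dom(\tv_2) = \emptyset$, and since $\dom(\tv) \subseteq \dom(\tv \uplus \tv_1)$ we get $\dom(\tv) \cap \dom(\tv_2) = \emptyset$. Combining the two intersections with $\dom(\tv') = \dom(\tv_1) \cup \dom(\tv_2)$ closes the case.

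The crux, and the one step I expect to require care, is the subsidiary fact $\dom(\tv_1) \cap \{\overline{x_2}\} = \emptyset$ used above. By the auxiliary observation $\dom(\tv_1) \subseteq \{\overline{x_1}\}$, so it suffices to know that the pattern variables of the two subguards are disjoint, $\{\overline{x_1}\} \cap \{\overline{x_2}\} = \emptyset$. This is exactly the linearity (freshness) well-formedness condition on guard patterns assumed by the framework, and it is already implicit in the use of the disjoint union $\uplus$ in Definition~\ref{def:evalg}, which is only defined on mappings with disjoint domains. I therefore expect making this well-formedness assumption explicit, or pointing to where it is guaranteed, to be the only delicate point; everything else is a routine propagation of disjointness through the induction.
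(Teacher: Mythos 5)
Your overall strategy (structural induction on $g$) is exactly what the paper does---its entire proof is the sentence ``By induction on the structure of the guard $g$''---and your base cases and final bookkeeping are fine. But one step's justification does not hold up: in the case $g \equiv g_1 \wedge g_2$ you need $\dom(\tv_1) \cap \{\overline{x_2}\} = \emptyset$ to discharge the precondition of the induction hypothesis for $g_2$, and you justify it by the linearity assumption $\{\overline{x_1}\} \cap \{\overline{x_2}\} = \emptyset$, which you claim is ``already implicit in the use of the disjoint union $\uplus$''. It is not. Definedness of $\tv_1 \uplus \tv_2$ only forces the variables actually \emph{bound} by the two subguards to be disjoint, not their pattern variables: take $g_1 \equiv \match(x, \DC(y))$ and $g_2 \equiv \notmatch(z, \DC(y))$. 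Here $\tv_1$ binds $y$ but $\tv_2 = \emptymap$, so $\fevalg(g,\tv)$ is defined and the proposition's hypotheses can hold, yet $\{\overline{x_1}\} \cap \{\overline{x_2}\} = \{y\} \neq \emptyset$ and the precondition $\dom(\tv \uplus \tv_1) \cap \{\overline{x_2}\} = \emptyset$ of your IH application genuinely fails. So, as written, the induction breaks on such guards, and the paper states no well-formedness condition you could cite to exclude them.

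The good news is that your own auxiliary observation already repairs this---and in fact makes the induction on the main statement unnecessary. You proved, unconditionally whenever $\fevalg$ is defined, that $\dom(\fevalg(g,\tv)) \subseteq \{\overline{x}\}$. Combined with the hypothesis $\dom(\tv) \cap \{\overline{x}\} = \emptyset$, this gives $\dom(\tv) \cap \dom(\tv') \subseteq \dom(\tv) \cap \{\overline{x}\} = \emptyset$ immediately. Alternatively, keep your induction but, in the $g_2$ subcase, replace the IH application by the observation applied to $g_2$: from $\dom(\tv_2) \subseteq \{\overline{x_2}\}$ and $\dom(\tv) \cap \{\overline{x_2}\} = \emptyset$ you get $\dom(\tv) \cap \dom(\tv_2) = \emptyset$ with no condition on $\tv_1$ needed. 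Either way the extra linearity assumption disappears, and the proof then covers every guard the paper's definition admits.
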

\begin{proof}
By induction on the structure of the guard $g$.
\end{proof}

\subsection{Results in Section~\ref{sec:abstraction}}

\begin{prop}\label{teo:satisf_rderiv}
Consider a configuration $\irstate$ and its abstraction $\irstate^\alpha = \arb \abstractSep \psi$. Then $\psi \nmodels \false$.
\end{prop}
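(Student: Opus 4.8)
The plan is to prove the (semantic) consistency of $\psi$ constructively, by exhibiting an explicit model built directly from the concrete configuration $\irstate$. This immediately yields $\psi \nmodels \false$, since a constraint admitting a model cannot entail $\false$. The guiding intuition is that $\psi$ is not an arbitrary constraint but one \emph{synthesised} by the abstraction $\abs{\irstate} = \arb \abstractSep \psi$ while traversing data that genuinely exist in $\irstate$; every concrete value stored there has a well-defined, finite, non-negative typed-norm size, and these actual sizes furnish a consistent valuation for all the (fresh) variables that the abstraction introduces in $\psi$.

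I would begin by making precise the witnessing valuation $\rho$: for each abstract variable $x$ that the abstraction associates with a concrete term, field, or future occurring in $\irstate$, set $\rho(x)$ to the typed-norm value of that concrete component (obtained via $\fevalt$ applied to the corresponding ground term under the relevant norm). Because the components of a concrete configuration are ground, $\fevalt$ is defined on them and returns a concrete number, so $\rho$ is total on the variables of $\psi$ and takes values in the intended numeric domain.

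Next I would establish $\rho \models \psi$ by induction on the definition of $\abs{\irstate}$, equivalently on the structure of the configuration (its objects, local-variable tables $\tv$, and heap). The base cases are the atomic components, where $\psi$ contributes either the trivial constraint $\true$ or an equation of the form ``variable $=$ norm-expression''; the latter holds under $\rho$ by construction of $\rho$. The inductive cases combine the sub-abstractions conjunctively, and here I would invoke \rProp{teo:evalg_disjoint_tv} to ensure that the variables introduced in disjoint parts have disjoint domains (so that $\dom(\tv)$ and the freshly generated pattern variables do not clash); this is exactly what lets the per-component valuations compose into a single well-defined $\rho$ without conflicting assignments.

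The main obstacle I expect is the bookkeeping around any constraints that originate from guards and from the $\match$/$\notmatch$ cases of \rDef{def:evalg}. For these I would argue that whatever numeric content they carry — non-negativity of a size, or a lower bound forced by the presence of an outermost constructor — is precisely the content satisfied by the true concrete size, so that each such atomic constraint evaluates to a valid (in)equality under $\rho$. Discharging this case analysis, and verifying that the disjointness from \rProp{teo:evalg_disjoint_tv} indeed prevents the compositions from overwriting earlier bindings, is the delicate part; once it is done, $\rho \models \psi$ follows, and therefore $\psi \nmodels \false$.
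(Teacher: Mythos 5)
Your core witness is the right one, and it is in fact exactly the paper's: the paper's entire proof is the observation that $\psi$ is a conjunction of equalities, each binding a distinct variable, so the valuation that reads off the equated values is a trivial model. Where your proposal goes astray is in what you believe $\psi$ contains. By the configuration abstraction (\rDef{def:state_abstr}), $\psi$ consists solely of the abstractions $\atv$ of the variable mappings in the activation records, i.e.\ conjunctions of equalities $X_T = \typednorm{\tv(x)}{T}$ whose right-hand sides are concrete integers (the values $\tv(x)$ are ground), one equality per fresh variable $X_T$. No guard abstractions, no $\match$/$\notmatch$ constraints, and no non-negativity side conditions ever occur in the abstraction of a configuration; such constraints are generated only by the abstract transition rules \emph{during a derivation}, and their satisfiability is the business of Theorem~\ref{teo:soundness} (via \rLem{teo:satisfy_guard_eval}), not of this proposition. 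Consequently, the ``delicate part'' you single out --- the case analysis over guard-originated constraints --- is vacuous, and structuring the argument as an induction over objects, heap and futures imports structure that the configurations of this language (stacks of activation records) do not have.

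This misidentification also makes your appeal to \rProp{teo:evalg_disjoint_tv} inapplicable: that proposition is a statement about the guard-evaluation function $\fevalg$, which plays no role in abstracting a configuration. The disjointness you actually need --- that the equalities in $\psi$ constrain pairwise distinct variables, so the per-component valuations never clash --- is immediate from the abstraction introducing fresh $X_T$ variables for each activation record, not from any property of $\fevalg$. Once these two detours are removed, your proof collapses to the paper's one-liner: assign each $X_T$ the integer $\typednorm{\tv(x)}{T}$; every conjunct of $\psi$ holds by construction, hence $\psi \nmodels \false$.
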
	
\begin{proof}
$\psi$ is a conjunction of equalities between distinct $X_T$ variables,
so there is a trivial model.
\end{proof}

\begin{prop}\label{teo:remove_contraints}
If $\varphi \land \psi \nmodels \false$ then $\varphi \nmodels \false$.
\end{prop}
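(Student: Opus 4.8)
The plan is to unfold the notation and reduce the claim to the elementary semantic fact that a model of a conjunction is a model of each conjunct. Recall that $\alpha \nmodels \false$ expresses that $\alpha$ is \emph{satisfiable}, i.e.\ that there is at least one assignment (model) under which $\alpha$ holds; entailing $\false$ would mean having no model. Thus the proposition amounts to stating that satisfiability of $\varphi \land \psi$ implies satisfiability of $\varphi$ on its own.

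First I would take the hypothesis $\varphi \land \psi \nmodels \false$ and extract a witnessing model $M$ with $M \models \varphi \land \psi$. By the standard semantics of conjunction, $M \models \varphi \land \psi$ holds exactly when $M \models \varphi$ and $M \models \psi$, so in particular $M \models \varphi$. Hence the very same $M$ is a model of $\varphi$, which shows that $\varphi$ is satisfiable, i.e.\ $\varphi \nmodels \false$. No induction or structural analysis of $\varphi$ or $\psi$ is required, since the argument is purely semantic and treats both formulas as opaque objects.

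The only point that deserves care — and the closest thing to an obstacle — is fixing the reading of the $\nmodels \false$ notation. One must confirm it is interpreted as ``is satisfiable'' (does not entail the contradiction) rather than as ``is not valid''; under the former reading, dropping the conjunct $\psi$ preserves the property, whereas under the latter the implication would run the opposite way. With that reading settled, the statement is just the observation that weakening a satisfiable formula by removing a conjunct keeps it satisfiable, and the proof is immediate.
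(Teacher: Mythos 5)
Your proof is correct and matches the paper's argument exactly: the paper likewise takes a solution $S$ of $\varphi \land \psi$ and observes that it is also a solution of $\varphi$, which is precisely your witnessing-model step. Your additional remark about reading $\nmodels \false$ as satisfiability is the right interpretation and consistent with how the paper uses the notation throughout.
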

\begin{proof}
$\varphi \land \psi$ have at least one solution $S$, and it \cbstart is \cbend also a valid solution to $\varphi$.
\end{proof}

\cbstart
In the rest of the appendix we will use the notation $\overline{X_n = z_n}$ to denote the conjunction of constraints $X_1 = z_1 \land X_2 = z_2 \land \ldots \land X_n = z_n$, where $z_i \in \mathbb{Z}$. Similarly, the notation $\overline{X_n = z_n} \in \varphi$ expresses that the conjunction of constraints $\varphi$ syntactically contains the constraints $\overline{X_n = z_n}$.
\cbend

\begin{prop}\label{teo:satisf_param_passing}
If $\psi_1 \wedge \psi_2 \not\models \false$ and $\overline{X_i = z_i} \in \psi_2$, where $z_i \in \mathbb{Z}$, then $\psi_1 \wedge \overline{X'_i = X_i} \wedge \psi_2 \wedge \overline{X'_i = z_i} \nmodels \false$.
\end{prop}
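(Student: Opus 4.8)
The plan is to exhibit an explicit satisfying assignment, reusing a solution of the hypothesis. Since $\psi_1 \wedge \psi_2 \nmodels \false$, there is a solution $S$ satisfying both $\psi_1$ and $\psi_2$ (as in the proof of Proposition~\ref{teo:remove_contraints}). Because the constraints $\overline{X_i = z_i}$ occur syntactically in $\psi_2$, any solution of $\psi_2$---in particular $S$---must assign $S(X_i) = z_i$ for every $i$.

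Next I would extend $S$ to the primed variables. The variables $X'_i$ are the fresh copies introduced by parameter passing, so they do not occur in $\psi_1$ or $\psi_2$; I would state this freshness assumption explicitly at the outset. Define $S'$ to agree with $S$ on all variables of $\psi_1 \wedge \psi_2$, and to set $S'(X'_i) = z_i$ for each $i$.

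Finally I would verify the four conjuncts under $S'$. Both $\psi_1$ and $\psi_2$ remain satisfied, since $S'$ agrees with $S$ on their variables and the fresh $X'_i$ do not appear in them; the conjunction $\overline{X'_i = z_i}$ holds by construction; and $\overline{X'_i = X_i}$ holds because $S'(X'_i) = z_i = S(X_i) = S'(X_i)$. Hence $S'$ is a solution of the full conjunction $\psi_1 \wedge \overline{X'_i = X_i} \wedge \psi_2 \wedge \overline{X'_i = z_i}$, which is therefore satisfiable, i.e.\ $\nmodels \false$.

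The only delicate point is the freshness of the variables $X'_i$: the argument relies on their not occurring in $\psi_1 \wedge \psi_2$, so that extending the solution cannot falsify either formula. This is guaranteed by the way primed variables are generated during parameter passing, and it is the single hypothesis I would need to make visible; the remaining verification is a routine check that each conjunct evaluates to true under $S'$.
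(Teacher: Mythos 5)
Your proof is correct and takes essentially the same approach as the paper's: the paper likewise extends a solution $S$ of $\psi_1 \wedge \psi_2$ with the assignments $\overline{X'_i = z_i}$ and observes that the extension satisfies the full conjunction. Your write-up merely makes explicit two facts the paper leaves implicit---that $S(X_i) = z_i$ follows from $\overline{X_i = z_i}$ occurring syntactically in $\psi_2$, and that the $X'_i$ must be fresh for the extension to be well-defined.
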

\begin{proof}
Every solution $S$ to the original set of constraints $\psi_1 \wedge \psi_2$ can be extended $S \cup \overline{X'_i = z_i}$ and it is also a solution of $\psi_1 \wedge \overline{X'_i = X_i} \wedge \psi_2 \wedge \overline{X'_i = z_i}$.
\end{proof}

In order to prove the soundness of the translation using typed-norms, we first need to prove that the definition of typed-norms is \emph{consistent} with the evaluation of terms and guards. The next two lemmas prove that consistency with respect the definition of \typednorm{t}{T} in Def.~\ref{def:typed-norms} and Def.~\ref{def:symb-typed-norms}, but the proof will be similar for $\typednormorig{t}{T}$. Moreover, any definition of typed-norms satisfying the following Lemma~\ref{teo:satisf_eval} and Lemma~\ref{teo:satisfy_guard_eval} will produce a sound translation.

\begin{lemma}\label{teo:satisf_eval}
If $\rrassigns{x}{t} \in \rbrprog$, $\fevalt(t,\tv) = v$ and $\psi \wedge \atv \not\models \false$ then $X_T = \typednorm{t}{T} \wedge  X_T = \typednorm{v}{T} \wedge \psi \wedge \atv \not\models \false$ for any type $T \in \typednorms(x)$.
\end{lemma}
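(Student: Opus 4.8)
The plan is to prove satisfiability by \emph{extending} an existing model, exactly in the style of \rProp{teo:remove_contraints} and \rProp{teo:satisf_param_passing}. Since $\psi \wedge \atv \nmodels \false$ by hypothesis, fix a solution $S$ of $\psi \wedge \atv$. The abstract variable $X_T$ associated with the assigned variable $x$ does not occur in $\psi \wedge \atv$ (it is introduced fresh by the abstraction of the assignment), so to obtain a model of the full conjunction it suffices to give $X_T$ one value that satisfies both $X_T = \typednorm{t}{T}$ and $X_T = \typednorm{v}{T}$. Because $v = \fevalt(t,\tv)$ is ground, $\typednorm{v}{T}$ is a constant; taking $X_T \mapsto \typednorm{v}{T}$ satisfies the second equality trivially, so the whole statement reduces to the consistency property: for every type $T$, $S(\typednorm{t}{T}) = \typednorm{v}{T}$ with $v = \fevalt(t,\tv)$ (the \emph{Claim}).

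The Claim is precisely the consistency between the symbolic abstraction and the concrete evaluation: it says that evaluating the \emph{symbolic} typed-norm of $t$ under the model $S$ yields the typed-norm of the \emph{value} $v$ that $\fevalt$ computes. I would prove it by structural induction on $t$, following the cases of \rDef{def:evalt} and using, in each case, the matching clause of the typed-norm definition (\rDef{def:typed-norms} and~\rDef{def:symb-typed-norms}). The key ingredient throughout is that $S$ satisfies $\atv$, i.e.\ for every variable $y$ in scope the abstract variable of $y$ is bound by $S$ to $\typednorm{\tv(y)}{T}$. The base cases carry the argument: if $t \equiv x$ then $v = \tv(x)$ and $\typednorm{x}{T}$ is just the abstract variable of $x$, which $S$ binds to $\typednorm{\tv(x)}{T} = \typednorm{v}{T}$; if $t \equiv n$, both sides equal the constant $\typednorm{n}{T}$. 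For $t \equiv e_1 + e_2$ and $t \equiv e_1 - e_2$ I would apply the induction hypothesis to $e_1,e_2$ and then invoke the fact that typed-norms commute with $+$ and $-$ on ground integers, so that $\typednorm{v_1 \pm v_2}{T} = \typednorm{v_1}{T} \pm \typednorm{v_2}{T}$ where $v_i = \fevalt(e_i,\tv)$.

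I expect the constructor case $t \equiv \DC(\overline{t_n})$ to be the main obstacle. Here the value is $\DC(\overline{v_n})$, and the typed-norm of a constructor application decomposes into the structural contribution of $\DC$ itself plus the typed-norms of its arguments. The delicate point is lining up the type $T \in \typednorms(x)$ with the typed positions of $\DC$, so that the recursive occurrences of $\typednorm{\cdot}{\cdot}$ apply the correct type to each subterm $t_i$ and the contribution of $\DC$ is counted identically on the symbolic and the concrete side; this is exactly where the precise form of \rDef{def:typed-norms} and~\rDef{def:symb-typed-norms} is used, after which the induction hypothesis on each $t_i$ closes the case.

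Finally, given the Claim, the extended assignment $S \cup \{X_T = \typednorm{v}{T}\}$ is a model of $X_T = \typednorm{t}{T} \wedge X_T = \typednorm{v}{T} \wedge \psi \wedge \atv$: the two new equalities hold because $S(\typednorm{t}{T}) = \typednorm{v}{T}$ and $S(X_T) = \typednorm{v}{T}$, and the remaining conjuncts are unaffected since $X_T$ is fresh. Hence this conjunction $\nmodels \false$, as required.
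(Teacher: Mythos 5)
Your proof is correct and follows essentially the same route as the paper's: a structural induction on $t$ in which the equalities $Y_T = \typednorm{\tv(y)}{T}$ recorded in $\atv$ (available because $\fevalt(t,\tv)=v$ forces every $y \in \vars(t)$ to lie in $\dom(\tv)$) guarantee that assigning $X_T$ the concrete value $\typednorm{v}{T}$ is consistent with the symbolic constraint $X_T = \typednorm{t}{T}$. The only difference is presentational: you make explicit the model-extension step and the claim $S(\typednorm{t}{T}) = \typednorm{v}{T}$, which the paper leaves implicit in its brief sketch.
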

\begin{proof}
By induction on the structure of the term $t$. It is important to notice that $\fevalt(t,\tv) = v$ implies that $y \in \dom(\tv)$ for every variable $y \in \vars(t)$, so $\atv$ will contain equalities $Y_T = \typednorm{\tv(y)}{T}$ that are consistent with $\psi$. Therefore,  $X_T = \typednorm{v}{T}$ will assign a concrete value to $X_T$ that is consistent with the constraint $X_T = \typednorm{t}{T}$ that involves those variables $Y_T$.
\end{proof}

\begin{lemma}\label{teo:satisfy_guard_eval}
Let $g$ a guard in a program $P$. If $\fevalg(g, \tv) = \tv_g$ and $\psi \wedge \atv \nmodels \false$ then $\psi \wedge g^\alpha \wedge \atv \wedge \atvl{\tv_g} \nmodels \false$
\end{lemma}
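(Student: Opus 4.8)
The plan is to induct on the structure of the guard $g$, following the five cases of \rDef{def:evalg}, and to reuse the term-level consistency of \rLem{teo:satisf_eval}. I will assume, as is natural from the (structural) definitions of the abstractions, that $(g_1 \wedge g_2)^\alpha = g_1^\alpha \wedge g_2^\alpha$ and that $\atvl{\tv_1 \uplus \tv_2} = \atvl{\tv_1} \wedge \atvl{\tv_2}$ whenever $\dom(\tv_1) \cap \dom(\tv_2) = \emptyset$, since each abstract mapping is just a conjunction of equalities $Y_T = \typednorm{\tv(y)}{T}$ over its domain.

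For the base cases $g \equiv \mi{true}$ and $g \equiv \notmatch(x,p)$ we have $\tv_g = \emptymap$ and $g^\alpha$ contributes no constraint (a failed match cannot be tracked by norms), so the goal collapses to the hypothesis $\psi \wedge \atv \nmodels \false$. For $g \equiv e_1 \mathtt{~op~} e_2$ with $\mathtt{op} \in \{>,=,\geq\}$ we again have $\tv_g = \emptymap$; here $\fevalt(e_i,\tv) = n_i$ with $n_1 \mathtt{~op~} n_2$, and by \rLem{teo:satisf_eval} applied to each $e_i$ the relevant norm of $e_i$ is forced to the concrete value $n_i$ in every model of $\psi \wedge \atv$. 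Hence the abstract comparison $g^\alpha$ over these norms is satisfied and $\psi \wedge g^\alpha \wedge \atv \nmodels \false$.

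For the match case $g \equiv \match(x,p)$ we have $\tv_g = \tv'$ with $\tv'(p) = \tv(x)$ and $\dom(\tv') = \vars(p)$. Reading the pattern $p$ as a term, I apply \rLem{teo:satisf_eval} to $\fevalt(p,\tv') = \tv(x)$, which gives that $X_T = \typednorm{p}{T}$ and $X_T = \typednorm{\tv(x)}{T}$ are jointly consistent with $\psi \wedge \atvl{\tv'}$ for each relevant type $T$. Since $\atv$ already records $X_T = \typednorm{\tv(x)}{T}$, this is precisely the constraint asserted by $g^\alpha$, so $\psi \wedge g^\alpha \wedge \atv \wedge \atvl{\tv'} \nmodels \false$. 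To see that $\atv$ and $\atvl{\tv'}$ bind disjoint norm-variables, and hence do not clash, I invoke \rProp{teo:evalg_disjoint_tv}: as the pattern variables of $g$ are fresh with respect to $\dom(\tv)$, we obtain $\dom(\tv) \cap \dom(\tv') = \emptyset$.

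The hard part is the conjunction $g \equiv g_1 \wedge g_2$, where $\fevalg(g_1,\tv) = \tv_1$, $\fevalg(g_2, \tv \uplus \tv_1) = \tv_2$, and $\tv_g = \tv_1 \uplus \tv_2$. The induction hypothesis on $g_1$ gives $\psi \wedge g_1^\alpha \wedge \atv \wedge \atvl{\tv_1} \nmodels \false$; dropping $g_1^\alpha$ via \rProp{teo:remove_contraints} yields $\psi \wedge \atvl{\tv \uplus \tv_1} \nmodels \false$, which is exactly the premise needed to apply the induction hypothesis to $g_2$, obtaining $\psi \wedge g_2^\alpha \wedge \atvl{\tv \uplus \tv_1} \wedge \atvl{\tv_2} \nmodels \false$. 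It remains to merge these into $\psi \wedge g_1^\alpha \wedge g_2^\alpha \wedge \atv \wedge \atvl{\tv_1} \wedge \atvl{\tv_2} \nmodels \false$, and the delicate point is that two independent witnessing solutions could in principle disagree on shared norm-variables. This is resolved by observing that $\atv \wedge \atvl{\tv_1} \wedge \atvl{\tv_2}$ consists entirely of equalities binding each norm-variable to a fixed integer, with mutual consistency guaranteed by the disjointness of the three domains (\rProp{teo:evalg_disjoint_tv}), so it admits a single assignment to all norm-variables appearing in $g_1^\alpha$ and $g_2^\alpha$. That unique assignment satisfies $g_1^\alpha$ by the first hypothesis and $g_2^\alpha$ by the second, hence their conjunction $g^\alpha$; together with $\atvl{\tv_g} = \atvl{\tv_1} \wedge \atvl{\tv_2}$ this closes the induction.
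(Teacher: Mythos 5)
Your proof is correct and shares the paper's skeleton---structural induction on $g$, with $\match(x,p)$ and $g_1 \wedge g_2$ as the substantive cases---but it closes the conjunction case by a genuinely different route. The paper applies the induction hypothesis to $g_2$ with $\psi \wedge g_1^\alpha$ playing the role of $\psi$: the first IH application gives $(\psi \wedge g_1^\alpha) \wedge \atv \wedge \atvl{\tv_1} \nmodels \false$, and feeding this strengthened constraint into the second application yields $\psi \wedge g_1^\alpha \wedge g_2^\alpha \wedge \atv \wedge \atvl{\tv_1} \wedge \atvl{\tv_2} \nmodels \false$ in one stroke, with no need to reconcile two separate satisfiability facts. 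You instead discard $g_1^\alpha$ (via \rProp{teo:remove_contraints}) before the second IH application, and must then merge two independently obtained models; your resolution---that $\atv \wedge \atvl{\tv_1} \wedge \atvl{\tv_2}$ pins every norm-variable occurring in $g_1^\alpha$ and $g_2^\alpha$ to a concrete integer, so the two witnesses cannot disagree---is sound, and is phrased most cleanly by noting that the witness of the second IH application already satisfies $g_1^\alpha$, since $g_1^\alpha$ mentions only variables fixed by $\atv \wedge \atvl{\tv_1}$. This merging step is a real extra argument that the paper's bookkeeping makes unnecessary, and it depends on the structural fact that all variables of a guard are covered by the mappings in play. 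A second, smaller difference: in the $\match(x,p)$ case the paper argues directly from the norm definition ($X_{T'} = 1 + Y^1_{T'} + \ldots + Y^k_{T'}$ against the concrete values recorded in $\atv$ and $\atvl{\tv_g}$), whereas you reuse \rLem{teo:satisf_eval} on the pattern $p$ read as a term; this is an economical move, though strictly speaking that lemma is stated under the hypothesis $\rrassigns{x}{t} \in \rbrprog$, which no match guard satisfies, so you are borrowing its proof idea rather than its literal statement. Both routes establish the lemma: the paper's buys a shorter inductive step, yours isolates a reusable observation about how abstract variable mappings determine guard abstractions.
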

\begin{proof}
By induction on the structure of the guard $g$. The most interesting cases are $\match(x,p)$ and $g_1 \wedge g_2$:
\begin{itemize}
 \item $g \equiv \match(x,p)$, where $p \equiv \DC(y^1, \ldots, y^k)$ and $y^i$ do not appear in $\dom(\tv)$. In this case $g^\alpha$ is $\bigwedge\{ X_T = \typednorm{p}{T} \mid T\in\typednorms(x) \}$. Consider only a type $T'\in\typednorms(x)$ such that $T' = \type{x}$, then $X_{T'} = 1 + Y^1_{T'} \ldots + Y^k_{T'}$. On the other hand, $X_{T'} = z$ appears in $\atv$ and $\atv_g$ contains some constraints $\overline{Y^i_{T'} = z_i}$ such that $1 + z_i \ldots + z_k = z$ (by definition of $\fevalg$). Therefore, $\psi \wedge X_{T'} = 1 + Y^1_{T'} \ldots + Y^k_{T'} \wedge \overline{Y^i_{T'} = z_i} \wedge \atv_g \nmodels \false$. The case is the same if $T' \neq \type{x}$, and it can be repeated for every $T'\in\typednorms(x)$, so $\psi \wedge g^\alpha \wedge \atv \wedge \atvl{\tv_g} \nmodels \false$.
 
 \item $g \equiv g_1 \wedge g_2$. By definition of guard abstraction (Fig.~\ref{fig:sizeabst}) $g^\alpha = g_1^\alpha \land g_2^\alpha$, and by definition of $\fevalg$ we have that (A) $\fevalg(g_1, \tv) = \tv_1$ and (B) $\fevalg(g_2, \tv \uplus \tv_1) = \tv_2$. From (A) and the premises, by IH we obtain that $\psi \wedge g_1^\alpha \wedge \atv \wedge \atv_1 \nmodels \false$. By Prop.~\ref{teo:evalg_disjoint_tv} we know that $\dom(\tv) \cap \dom(\tv_1) = \emptyset$, so 
 $\atv \wedge \atv_1 = \atvl{(\tv \uplus \tv_1)}$. Then, again, by IH and using (B) we have that $\psi \wedge g_1^\alpha \wedge g_2^\alpha \wedge \atv \wedge \atvl{(\tv \uplus \tv_1)} \wedge \atv_2 \nmodels \false$. Finally, as by Prop.~\ref{teo:evalg_disjoint_tv} the domains of the generated variable mappings are disjoint then $\atvl{(\tv \uplus \tv_1)} \wedge \atv_2 = \atv \wedge \atv_1 \wedge \atv_2 = \atv \wedge \atvl{(\tv_1 \uplus \tv_2)}$ and therefore $\psi \wedge g_1^\alpha \wedge g_2^\alpha \wedge \atv \wedge \atvl{(\tv_1 \uplus \tv_2)} \nmodels \false$ .
\end{itemize}
\end{proof}


\noindent\textbf{Theorem~\ref{teo:soundness} (Soundness).}
	If $\trace \equiv \irstate_0 \rrderiv^*
	\irstate_n$ then there is an abstract trace
	$\trace^\alpha \equiv \irstate_0^\alpha \rrabsderiv^* \arb \abstractSep \psi$
	such that $\trsteps{\trace} = \trsteps{\trace^\alpha}$, $\irstate_n^\alpha = \arb \abstractSep \widetilde{\psi}$ and $\psi \wedge \widetilde{\psi} \not\models \false$.

\begin{proof}
By induction on the length of the $\irstate_0 \rrderiv^n \irstate_n$ derivation.

~
\cbstart
\noindent\underline{Base Case:} $n = 0$\\
We have $\trace \equiv \irstate_0 \rrderiv^0 \irstate_n$ and the trivial derivation $\trace^\alpha \equiv \irstate_0^\alpha \rrabsderiv^0 \irstate_0^\alpha$, where $\irstate_0^\alpha \equiv \arb \abstractSep \psi$ and therefore $\psi = \widetilde{\psi}$ in this case. By Prop.~\ref{teo:satisf_rderiv} $\psi \nmodels \false$, so $\psi \wedge \psi \nmodels \false$. Additionally, the sequences of steps in both traces are empty: $\trsteps{\trace} = \langle \rangle = \trsteps{\trace^\alpha}$.\\

\noindent\underline{Inductive Step:} $n > 0$\\
We have a derivation $\irstate_0 \rrderiv^{n-1} \irstate_{n-1} \rrderiv \irstate_n$.
By Induction Hypothesis we have that if $\irstate_0 \rrderiv^{n-1} \irstate_{n-1}$ then $\irstate_0^\alpha \rrabsderiv^{n-1} \arb_{n-1} \abstractSep \psi_{n-1}$ such that $\irstate_{n-1}^\alpha = \arb_{n-1} \abstractSep \widetilde{\psi_{n-1}}$, $\psi_{n-1} \wedge \widetilde{\psi_{n-1}} \nmodels \false$, and $\trsteps{\irstate_0 \rrderiv^{n-1} \irstate_{n-1}} = \trsteps{\irstate_0^\alpha \rrabsderiv^{n-1} \arb_{n-1} \abstractSep \psi_{n-1}}$. Depending on $\irstate_{n-1}$ we can perform the last step using the 3 different rules:

\begin{itemize}
	\item $\irstate_{n-1} \equiv \tuple{p,\rrassigns{x}{t}{\cdot}\stkbc,\tv}{\cdot}\ar$. Then we can only apply rule $(1)$:
	$$(1)~\semrule
	{\bc \equiv \rrassigns{x}{t}~~ \fevalt(t,\tv)=v}
	{
		\ar_{n-1} \equiv \tuple{p,\bc{\cdot}\stkbc,\tv}{\cdot} \ar 
		\rrderiv^{(1)\cdot \epsilon}
		\tuple{p,\stkbc,\tv[x\mapsto v]}{\cdot} \ar \equiv \ar_{n}
	}$$
	By definition of configuration abstraction (Def.~\ref{def:state_abstr}) $\irstate_{n-1}^\alpha = \arb_{n-1} \abstractSep \widetilde{\psi_{n-1}}$, where $\arb_{n-1} = \tuple{\varphi_1 \cdot \astkbc{}} \cdot \arb$
	and $\varphi_1 = \bigwedge \{ X_T = \typednorm{t}{T} | T \in \typednorms(x) \}$.
	Then we can perform the following abstract step from 
	$\arb_{n-1} \abstractSep \psi_{n-1}$:
	\[(1)~\semrule
	{\varphi_1 {\wedge} \psi_{n-1} \not\models \false
	}
	{
		\tuple{\varphi_1 \cdot \bsa} \cdot \arb \abstractSep \psi_{n-1}
		\rrabsderiv^{(1)\cdot \epsilon}
		\tuple{\bsa} \cdot \arb \abstractSep \varphi_1{\wedge}\psi_{n-1}\\
	}\]
	We define $\varphi_2 = \bigwedge \{ X_T = \typednorm{\tv(x)}{T} ~|~T \in \typednorms(x) \}$ as the constraints added by the mapping extension $\tv[x\mapsto v]$, and $\tv_1$, \ldots, $\tv_k$ the variable mappings in the activation records in $\ar$. By IH we have $\psi_{n-1} \wedge \widetilde{\psi_{n-1}} \nmodels \false$, where $\widetilde{\psi_{n-1}} = \atv \land \atv_1 \land \ldots \land \atv_k$ by definition.
	By the repeated application of Lemma~\ref{teo:satisf_eval} using all the variable mappings $\tv \uplus \tv_1 \uplus \ldots \uplus \tv_k$ (their domains are disjoint because variables in every activation record are fresh), we obtain that
	$\varphi_1 \wedge \varphi_2 \wedge \psi_{n-1} \wedge \widetilde{\psi_{n-1}} \nmodels \false$. Therefore the abstract $(1)$ step is correct---$\varphi_1 \wedge \psi_{n-1}\nmodels \false$ by Prop.~\ref{teo:remove_contraints}---and
	$\varphi_1 \wedge \varphi_2 \wedge \psi_{n-1} \wedge \widetilde{\psi_{n-1}} = (\varphi_1 \wedge \psi_{n-1}) \wedge (\varphi_2 \wedge \widetilde{\psi_{n-1}}) = \psi_{n} \wedge \widetilde{\psi_{n}} \nmodels \false$. 
	Finally, since $\trsteps{\irstate_0 \rrderiv^{n-1} \irstate_{n-1}} = \trsteps{\irstate_0^\alpha \rrabsderiv^{n-1} \arb_{n-1} \abstractSep \psi_{n-1}}$ then it is clear that $\trsteps{\irstate_0 \rrderiv^{n-1} \irstate_{n-1} \rrderiv^{(1)\cdot \epsilon} \irstate_n} = \trsteps{\irstate_0^\alpha \rrabsderiv^{n-1} \arb_{n-1} \abstractSep \psi_{n-1} \rrabsderiv^{(1)\cdot \epsilon}\tuple{\bsa} \cdot \arb \abstractSep \varphi_1{\wedge}\psi_{n-1}}$.

	\item $\irstate_{n-1} \equiv \tuple{p,m(\bar{x},\bar{y}){\cdot}\stkbc,\tv}{\cdot}\ar$. Then we can only apply rule $(2)$:
	\[(2)~ \semrule
	{\bc \equiv m(\bar{x},\bar{y})~~ m(\bar{x'},\bar{y'}) \rulearrow g, \bc_1 \cdot \cdot \cdot \bc_k \in \p \mi{~fresh} \\
		\tv_1 \equiv [\overline{x' \mapsto \tv(x)}] ~~ \fevalg(g,\tv_1) = \tv_2
		
	}
	{
		\irstate_{n-1} \equiv \tuple{p,\bc{\cdot}\stkbc,\tv}{\cdot} \ar 
		\rrderiv^{(2)\cdot \mi{rn}}
		\tuple{m,\bc_1 \cdot \cdot \cdot \bc_k,\tv_1 \uplus \tv_2} {\cdot} \tuple{p[\overline{y' \sim y}],\stkbc,\tv}{\cdot} \ar \equiv \irstate_{n}
	}\]
	We assume that the fresh program rule $m(\bar{x'},\bar{y'}) \rulearrow g, \bc_1 \cdot \cdot \cdot \bc_k$ used in the step is the number $\mi{rn}$. By definition of configuration abstraction (Def.~\ref{def:state_abstr}) $\irstate_{n-1}^\alpha = \arb_{n-1} \abstractSep \widetilde{\psi_{n-1}}$, where $\arb_{n-1} = \tuple{p(\bar{X,\bar{Y}})\cdot \astkbc{}} \cdot \arb$.
	Then we can perform the following abstract step from 
	$\arb_{n-1} \abstractSep \psi_{n-1}$ using the abstract semantic rule $(2)$ and a fresh instance of the abstract program rule number $\mi{rn}$:
	\[(2)~\semrule
	{
		m(\bar{X'},\bar{Y'})  \rulearrow g^\alpha ~\abstractSep 
		~ \bca_1,\ldots,\bca_k  \in \palpha \mi{~fresh}
		~~~~~~
		\overline{X = X'} \wedge\psi_{n-1}{\wedge}g^\alpha \not\models {\it false}
	}
	{
		\tuple{m(\bar{X},\bar{Y}) \cdot \bsa} \cdot \arb \abstractSep \psi_{n-1}
		\rrabsderiv^{(2)\cdot \mi{rn}} \\ 
		\tuple{\bca_1 \cdots \bca_k}^{\overline{Y = Y'}} \cdot \tuple{\bsa} \cdot \arb \abstractSep \overline{X = X'} \wedge\psi_{n-1}\wedge g^\alpha 
	}\]
	We need to prove that:
	\begin{enumerate}
	\item $(\overline{X = X'} \wedge g^\alpha \wedge \psi_{n-1}) \wedge (\atv_1 \wedge \atv_2 \wedge \widetilde{\psi_{n-1}}) = \psi_{n} \wedge \widetilde{\psi_{n}} \nmodels \false$, from the soundness theorem.
	\item $(\overline{X = X'} \wedge g^\alpha \wedge \psi_{n-1}) \nmodels \false$, i.e., the abstract step is valid.
	\end{enumerate}
	We will focus only on the first statement as it implies the second \cbstart one \cbend by Prop.~\ref{teo:remove_contraints}. 
	\cbstart
	By Prop.~\ref{teo:satisf_param_passing} we have that 
	$(\overline{X = X'} \wedge \psi_{n-1}) \wedge (\atv_1 \wedge \widetilde{\psi_{n-1}}) \nmodels \false$, because:
	\begin{itemize}
	\item $\bigwedge_{x \in \overline{x}, T \in \typednorms(x)} \{X_T = \typednorm{\tv(x)}{T}\} \in \widetilde{\psi_{n-1}}$, with $\typednorm{\tv(x)}{T} \in \mathbb{Z}$, by definition of configuration transformation, and
	\item $\atv_1 = [\overline{x' \mapsto \tv(x)}]^\alpha = \bigwedge_{x' \in \overline{x'}, T \in \typednorms(x')} \{X'_T = \typednorm{\tv(x)}{T}\}$
	\item $\typednorms(x) = \typednorms(x')$
	\end{itemize}   
	\cbend
	Then by the guard evaluation $\fevalg(g,\tv_1) = \tv_2$ and Lemma~\ref{teo:satisfy_guard_eval} we have that: 
	\[
	\begin{array}{ll}
	  & (\overline{X = X'} \wedge \psi_{n-1} \wedge \widetilde{\psi_{n-1}}) \wedge g^\alpha \wedge \atv_1 \wedge \atv_2 \nmodels \false \\
	  = & (\overline{X = X'} \wedge g^\alpha \wedge \psi_{n-1}) \wedge (\atv_1 \wedge \atv_2 \wedge \widetilde{\psi_{n-1}}) \nmodels \false\\
	  = & \psi_{n} \wedge \widetilde{\psi_{n}} \nmodels \false\\
	 \end{array}
	\]
	Similarly to the previous case $\trstepsnolink(\irstate_0 \rrderiv^{n-1} \irstate_{n-1} \rrderiv^{(2)\cdot \mi{rn}} \irstate_n) = \trstepsnolink(\irstate_0^\alpha \rrabsderiv^{n-1}$
	$ \arb_{n-1}\abstractSep \psi_{n-1} \rrabsderiv^{(2)\cdot \mi{rn}} \tuple{\bca_1 \cdots \bca_m}^{\overline{Y = Y'}} \cdot \tuple{\bsa} \cdot \arb \abstractSep \overline{X = X'} \wedge\psi_{n-1}\wedge g^\alpha )$.

	
	\item $\irstate_{n-1} \equiv \tuple{m,\epsilon,\tv_1}{\cdot}\tuple{p[\overline{y' \sim y}],\stkbc,\tv}{\cdot} \ar$ . Then we can only apply rule $(3)$:
	\[(3)~\semrule
	{}
	{
		\irstate_{n-1} \equiv \tuple{m,\epsilon,\tv_0}{\cdot}\tuple{p[\overline{y' \sim y}],\stkbc,\tv}{\cdot} \ar 
		\rrderiv^{(3)\cdot \epsilon}
		\tuple{p,\stkbc,\tv[\overline{y \mapsto \tv_0(y')}]}{\cdot} \ar \equiv \irstate_{n}
	}\]
	By definition of configuration abstraction (Def.~\ref{def:state_abstr}) $\irstate_{n-1}^\alpha = \arb_{n-1} \abstractSep \widetilde{\psi_{n-1}}$, where
	$\arb_{n-1} = \tuple{\epsilon}^{\overline{Y = Y'}} \cdot \tuple{\bsa} \cdot \arb$.
	Then we can perform the following abstract step from 
	$\arb_{n-1} \abstractSep \psi_{n-1}$:
	\[(3) \semrule
	{\psi_{n-1}{\wedge}\overline{Y = Y'} \not\models {\it false}
	}
	{
		\tuple{\epsilon}^{\overline{Y = Y'}} \cdot \tuple{\bsa} \cdot \arb \abstractSep \psi_{n-1}
		\rrabsderiv^{(3)\cdot \epsilon}
		\tuple{\bsa} \cdot \arb \abstractSep \psi_{n-1}{\wedge}\overline{Y=Y'} 
	}\]
	Let $\tv_1$, $\tv_2$, \ldots, $\tv_k$ the variable mappings in the activation records in $\ar$. The constraints $\atv_0$ contains $\overline{Y'_i = z_i}$ for some $z_i \in \mathbb{Z}$, therefore by definition of configuration abstraction (Def.~\ref{def:state_abstr}) we have that $\widetilde{\psi_{n-1}} \equiv \atv_0 \land \atv \land \atv_1 \land \atv_2 \land \ldots \land \atv_k$ and $\widetilde{\psi_n} \equiv \atv \land \overline{Y_i = z_i} \land \atv_1 \land \atv_2 \land \ldots \land \atv_k $. As before, we need to prove that:
	\begin{enumerate}
	\item $\psi_{n} \wedge \widetilde{\psi_{n}} = (\psi_{n-1} \land \overline{Y=Y'}) \land (\atv \land \overline{Y_i = z_i} \land \atv_1 \land \atv_2 \land \ldots \land \atv_k) \nmodels \false$, from the soundness theorem.
	\item $\psi_{n-1} \land \overline{Y=Y'}\nmodels \false$, i.e., the abstract step is valid.
	\end{enumerate}
	
	The first statement implies the second \cbstart one \cbend by Prop.~\ref{teo:remove_contraints}, so we focus only on the first one. By IH we have $\psi_{n-1} \wedge \widetilde{\psi_{n-1}} \nmodels \false$, so we can apply Prop.~\ref{teo:satisf_param_passing} and obtain that $\psi_{n-1} \land \overline{Y_i = Y_i'} \wedge \widetilde{\psi_{n-1}} \land \overline{Y_i = z_i}\nmodels \false$, i.e.,  $\psi_{n-1} \land \overline{Y_i = Y_i'} \wedge \atv_0 \land \atv \land \atv_1 \land \atv_2 \land \ldots \land \atv_k \land \overline{Y_i = z_i}\nmodels \false$. By Prop.~\ref{teo:remove_contraints} we can remove the constraints $\atv_0$ and the set remains satisfiable, therefore $(\psi_{n-1} \land \overline{Y=Y'}) \land (\atv \land \overline{Y_i = z_i} \land \atv_1 \land \atv_2 \land \ldots \land \atv_k) \nmodels \false$.
	As before $\trstepsnolink(\irstate_0 \rrderiv^{n-1} \irstate_{n-1} \rrderiv^{(3)\cdot \epsilon} \irstate_n) = \trstepsnolink(\irstate_0^\alpha \rrabsderiv^{n-1}$
	$ \arb_{n-1}\abstractSep \psi_{n-1} \rrabsderiv^{(3)\cdot \epsilon} \tuple{\bsa} \cdot \arb \abstractSep \psi_{n-1}{\wedge}\overline{Y=Y'} )$.
\end{itemize}
\cbend
%
%
\end{proof}

\subsection{Results in Section~\ref{sec:inference_formalization}}

\cbstart
\begin{lemma}\label{teo:poset}
$(\prmaps{\rbrprog{}}, \sqsubseteq)$ is a partially ordered set.
\end{lemma}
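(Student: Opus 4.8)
The plan is to verify directly the three defining axioms of a partial order for the relation $\sqsubseteq$ on $\prmaps{\rbrprog{}}$: reflexivity, antisymmetry, and transitivity. Since $\sqsubseteq$ is the pointwise lifting of the underlying order on the codomain of the maps in $\prmaps{\rbrprog{}}$, the strategy is to reduce each axiom to the corresponding property of that base order, checked at every point of a common domain.

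First I would fix the domain. Every $\prmap \in \prmaps{\rbrprog{}}$ assigns, to each rule (or program point) $r$ of $\rbrprog$, an element of a fixed underlying domain $D$, and $\prmap_1 \sqsubseteq \prmap_2$ holds exactly when $\prmap_1(r) \sqsubseteq_D \prmap_2(r)$ for every such $r$, where $\sqsubseteq_D$ is the base order on $D$. The crucial preliminary observation is that all elements of $\prmaps{\rbrprog{}}$ share the \emph{same} domain (the rules of $\rbrprog$), so the pointwise comparison is always well defined, and a pointwise equality at every $r$ entails equality of the entire map.

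With this in hand, each axiom follows by a uniform pointwise argument. For reflexivity, $\prmap(r) \sqsubseteq_D \prmap(r)$ holds at every $r$ by reflexivity of $\sqsubseteq_D$, hence $\prmap \sqsubseteq \prmap$. For transitivity, from $\prmap_1 \sqsubseteq \prmap_2$ and $\prmap_2 \sqsubseteq \prmap_3$ I obtain $\prmap_1(r) \sqsubseteq_D \prmap_2(r) \sqsubseteq_D \prmap_3(r)$ at each $r$, and transitivity of $\sqsubseteq_D$ yields $\prmap_1(r) \sqsubseteq_D \prmap_3(r)$, so $\prmap_1 \sqsubseteq \prmap_3$. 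For antisymmetry, $\prmap_1 \sqsubseteq \prmap_2$ and $\prmap_2 \sqsubseteq \prmap_1$ give both $\prmap_1(r) \sqsubseteq_D \prmap_2(r)$ and $\prmap_2(r) \sqsubseteq_D \prmap_1(r)$ at every $r$; antisymmetry of $\sqsubseteq_D$ forces $\prmap_1(r) = \prmap_2(r)$ for all $r$, and since the maps share a domain this gives $\prmap_1 = \prmap_2$.

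The main obstacle is antisymmetry of the base order $\sqsubseteq_D$, and with it the soundness of the whole reduction. If the elements of $D$ are constraint sets compared by logical entailment, entailment is only a \emph{preorder}: two logically equivalent but syntactically distinct constraint sets entail each other without being equal, so antisymmetry can fail at the level of raw syntax. The real care therefore goes into confirming that the elements of $\prmaps{\rbrprog{}}$ are taken up to the appropriate notion of equality under which $\sqsubseteq_D$ is genuinely antisymmetric (e.g.\ canonical/normalized representatives, equivalence classes modulo logical equivalence, or sets compared by inclusion of their solution spaces). Once $\sqsubseteq_D$ is established to be a partial order on $D$, the three pointwise steps above close the proof immediately.
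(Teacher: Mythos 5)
Your proposal is correct and takes essentially the same route as the paper: both proofs check reflexivity, transitivity, and antisymmetry of $\sqsubseteq$ pointwise, per rule and per variable, reducing each axiom to the corresponding property of the base order. The only caveat is that your worry about antisymmetry of the base order is moot here, since in the paper each $\prmap(i)$ maps variables to \emph{sets of types} and the base comparison is set inclusion $\subseteq$ (not entailment between constraint sets), which is genuinely antisymmetric, so the pointwise argument closes exactly as you describe.
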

\begin{proof}
\begin{itemize}
\item reflexivity: $\prmap \sqsubseteq \prmap$ because $\prmap(i) \sqsubseteq \prmap(i)$ for every rule $i$, since for all $x \in \dom(\prmap(i))$ we have that $\prmap(i)(x) \subseteq \prmap(i)(x)$.
\item transitivity: if $\prmap_1 \sqsubseteq \prmap_2$ then for all rule $i$ and $x \in \dom(\prmap_1(i))$, $\prmap_1(i)(x) \subseteq \prmap_2(i)(x)$. Similarly, if $\prmap_2 \sqsubseteq \prmap_3$ then for all rule $i$ and $x \in \dom(\prmap_2(i))$, $\prmap_3(i)(x) \subseteq \prmap_3(i)(x)$. Therefore, for every rule $i$ and $x \in \dom(\prmap_1(i))$, $\prmap_1(i)(x) \subseteq \prmap_2(i)(x) \subseteq \prmap_3(i)(x)$, so $\prmap_1 \sqsubseteq \prmap_3$.
\item anti-symmetry: if $\prmap_1 \sqsubseteq \prmap_2$ then for all rule $i$ and $x \in \dom(\prmap_1(i))$, $\prmap_1(i)(x) \subseteq \prmap_2(i)(x)$. Similarly, if $\prmap_2 \sqsubseteq \prmap_1$ then for all rule $i$ and $x \in \dom(\prmap_2(i))$, $\prmap_2(i)(x) \subseteq \prmap_1(i)(x)$. Therefore for all rule $i$ $\dom(\prmap_1(i)) = \dom(\prmap_2(i))$ and $\prmap_1(i)(x) = \prmap_2(i)(x)$ for every variable, so $\prmap_1 = \prmap_2$.
\end{itemize}
\end{proof}

\begin{lemma}\label{teo:leastub}
In the partially ordered set $(\prmaps{\rbrprog{}}, \sqsubseteq)$ every subset $C \subseteq \prmaps{\rbrprog{}}$ has a least upper bound $\prmap' = \baddr_{\prmap \in C} \prmap$.
\end{lemma}
\begin{proof}
We consider that $|P|= n$ and $C = \{\prmap_1, \prmap_2, \ldots, \prmap_k\}$, where $\prmap_i = \langle \rulemap_i^1, \rulemap_i^2, \ldots, \rulemap_i^n \rangle$. By definition of $\baddr$ we have $\prmap' = \langle \rulemap_1^1 \addr \rulemap_2^1 \ldots \addr \rulemap_k^1, \rulemap_1^2 \addr \rulemap_2^2 \ldots \addr \rulemap_k^2, \ldots, \rulemap_1^n \addr \rulemap_2^n \ldots \addr \rulemap_k^n\rangle$. 

We proceed by reduction to the absurd: suppose $\prmap''$ is a least upper bound of $C$ but $\prmap' \not \sqsubseteq \prmap''$, then for some rule $i$ and variable $x$ we have $\prmap'(i)(x) \not \subseteq \prmap''(i)(x)$. If $x \notin \dom(\prmap''(i))$ then $\prmap''$ cannot be an upper bound of $C$, because $\dom(\prmap'(i)) = \bigcup_{j=1}^k \dom(\rulemap_j^i))$ so for some $\rulemap_j^i$ we will have that $\rulemap_j^i(x) \not \subseteq \prmap'(i)$. On the other hand, if $x \in \dom(\prmap''(i))$ then $\prmap'(i)(x) \not \subseteq \prmap''(i)(x)$ because $T \in \prmap'(i)(x)$ but $T \not \in \prmap''(i)(x)$ for some type $T$. By definition of $\baddr$ we have that $\prmap'(i)(x) = \bigcup_{j=1}^k \rulemap_j^i(x)$, so $T \in \rulemap_j^i(x)$ for some $\rulemap_j^i(x)$. Therefore $\prmap''$ cannot be an upper bound of $C$ because $\prmap''(i)(x) \not \subseteq \rulemap_j^i(x)$, so $\prmap'' \not \sqsubseteq \prmap_j$.
\end{proof}

\noindent\textbf{Theorem~\ref{teo:lfp}.}
Consider a program $\rbrprog{}$ such that $\nrules{\rbrprog{}} = n$. 
Then $\mi{lfp}(\genf{\rbrprog{}})$ exists and is the supremum of the ascending Kleene chain starting from $\bot_\rbrprog{} = \langle \epsilon_1, \ldots, \epsilon_n \rangle$.

\begin{proof}
$(\prmaps{\rbrprog{}}, \sqsubseteq)$ is a partially ordered set because by Lemma~\ref{teo:poset} $\sqsubseteq$ is reflexive, transitive, and anti-symmetric. By Lemma~\ref{teo:leastub} every subset of \prmaps{\rbrprog{}} has a least upper bound (obtained by \addr) so $(\prmaps{\rbrprog{}}, \sqsubseteq)$ is also a complete lattice by Lemma A.2~\cite{DBLP:books/daglib/0098888}. \prmaps{\rbrprog{}} satisfies the Ascending Chain condition trivially because it is finite, and $\genf{\rbrprog{}}$ is monotone as by definition (Fig.~\ref{fig:infer}) it extends the program mapping $\prmap$ passed as argument. Then $\mi{lfp}(\genf{\rbrprog{}}) = \genf{\rbrprog{}(n)}(\bot_\rbrprog{})$ for some $n \ge 0$~\cite{DBLP:books/daglib/0098888,tarski1955lattice}.
\end{proof}
\cbend 

\subsection{Results in Section~\ref{sec:inference_soundness}}

In this section we need to track the concrete program rule associated to each activation record. Therefore we will assume that activation records contain in their first element the rule number used to create them. For example $\tuple{p^i, \bc{\cdot}\stkbc,\tv}$ is an activation record generated by a call to procedure $p$ using the i-th rule of the program.

We will also use the notion of a variable $z$ being dependent on a variable $x$ w.r.t.\ a type $T$ in a step $\ar \rrderiv \ar'$---written \valdep{x}{T}{z}. This relation tracks the dependence between variables in a step, so that a change in a component of type $T$ in the value of $x$ in the original configuration will have an impact on the components of type $T$ of variable $z$ in the destination configuration. Formally:

\begin{definition}[Dependent variables in a $\rrderiv$-step]\label{def:dependent_vars}
Consider a step $\tuple{p^i,\stkbc,\tv} \cdot \ar'_0 \rrderiv \tuple{q^j,\stkbc',\tv'} \cdot \ar'_1$. The set $D$ of dependent variables w.r.t.\ $T$ is defined as:
\begin{itemize} 
	
	\item If the step evaluates an assignment \rrassigns{x}{t} using rule (1) of Fig.~\ref{fig:rrsem} then $D = \{ \valdep{y}{T}{x} \mid y \in \vars(t),~ T \preceq \type{y} \}$, i.e., $x$ is dependent on all the variables $y \in \vars(t)$.
	
	\item If the step evaluates a procedure call $m(\bar{x},\bar{y})$---rule (2) of Fig.~\ref{fig:rrsem}---using the program rule $m(\bar{x'},\bar{y'}) \rulearrow g, \bc_1 \cdot \cdot \cdot \bc_k \in \rbrprog{}$ then $D = \depvars{T}{D_0, g}$ where $D_0 = \{ \valdep{x_k}{T}{x'_k} \mid T \preceq \type{x_k} \}$ and
	{\small
	\[
	\hspace*{-0.9cm}
	\depvars{T}{D,g} = 
	\left \{ 
	\begin{array}{lll} 
	\depvars{T}{\depvars{T}{D,g_1},g_2}  && \mbox{ if } g = g_1 \land g_2\\
	D \cup \{ \valdep{x}{T}{y} \mid y \in \vars(p),~ T \preceq \type{y} \} \cup && \mbox{ if } g = \match(x,p)\\
	~~\{ \valdep{z}{T}{y} \mid \valdep{z}{T}{x} \in D,~y \in \vars(p),~ T \preceq \type{y} \} \\
	D &~~~& \mbox{ otherwise }\\
	\end{array}
	\right. \enspace
	\]
	}
	Note that, in the case of $\match(x,p)$ guards, the variables $y \in \vars(p)$---which are fresh---are dependent on the variable $x$. Recursively, if $x$ depends on any variable $z$, then any variable $y \in \vars(p)$ is also dependent on $z$.
	
	\item Finally, if the step uses rule (3) of Fig.~\ref{fig:rrsem} to evaluate $\tuple{p^i,\epsilon,\tv_1}{\cdot}\tuple{q^j[\overline{y' \sim y}],\stkbc,\tv}{\cdot} \ar$ then we have $D = \{\valdep{y'_k}{T}{y_k} \mid y'_k,~ T \preceq \type{y'_k}\}$, i.e., the output arguments of a call ($y_k$) are dependent on the output variables of the fresh instance ($y'_k$) used to evaluate a call.
\end{itemize}	
We use the notation $\valdep{x}{*T}{y}$ in a trace $\trace = \ar_0 \rrderiv \ar_1 \rrderiv \ldots \rrderiv \ar_n$ to refer to a chain of $k \leq n$ dependences between ordered (but not necessarily consecutive) steps. Formally, we have a sequence of $k$ positions $p_1 < p_2 < \ldots < p_{k-1} < p_{k}$ such that $0 \leq p_i < n$ and a chain of $k$ dependencies:
\begin{enumerate}
\item[1.] $\valdep{x}{T}{x_1}$ in some step $\ar_{p_1} \rrderiv \ar_{p_1+1}$, 
\item[2.] $\valdep{x_1}{T}{x_2}$ in some subsequent step $\ar_{p_2} \rrderiv \ar_{p_2+1}$,\\
\vdots
\item [(k-1).]  $\valdep{x_{k-2}}{T}{x_{k-1}}$ in some subsequent step $\ar_{p_{k-1}} \rrderiv \ar_{p_{k-1}+1}$,
\item [k.] $\valdep{x_{k-1}}{T}{y}$ in step $\ar_{p_{k}} \rrderiv \ar_{p_{k}+1}$ with $p_{k} > p_{k-1}$
\end{enumerate}

\end{definition}

Finally, we will use an extended notion of \emph{value variations} for variable mappings and configurations. We will say that $\tv'$ is a \emph{variation} of $\tv$, written $\variation{\tv}{\tv'}{T}$, if $\tv' = \tv[\overline{x_i \mapsto v'_i}]$ for some variables $x_i \in \dom(\tv)$ and $\variation{\tv(x_i)}{v'_i}{T}$. This notion can be extended to configurations, written $\variation{\ar}{\ar'}{T}$, if $\ar'$ results from $\ar$ by replacing some of its variable mappings $\tv$ by $\tv'$ such that $\variation{\tv}{\tv'}{T}$.

\begin{lemma}\label{lemma:typednormsStep}
Let $\prmap$ be the result of the typed-norms inference of a program $\rbrprog$. If $\tuple{p^i, \stkbc,\tv} \cdot \ar_0 \rrderiv \tuple{q^j, \stkbc',\tv'} \cdot \ar'_0$, $T \in \prmap(j)(z)$ and $\valdep{x}{T}{z}$, 
then $T \in \prmap(i)(x)$.
\end{lemma}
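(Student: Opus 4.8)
The plan is to exploit that $\prmap = \mi{lfp}(\genf{\rbrprog})$ is a \emph{fixpoint} of the inference operator (\rTh{teo:lfp}), so that $\genf{\rbrprog}(\prmap) = \prmap$. Concretely, this means $\prmap$ is \emph{closed} under every backward-propagation rule that $\genf{\rbrprog}$ (Fig.~\ref{fig:infer}) applies to the syntax of a program rule: whenever $\genf{\rbrprog}$ would add a type to some $\prmap(i)(x)$ on account of the instruction or guard being analysed, that type already belongs to $\prmap(i)(x)$. The whole argument then reduces to matching, for the single $\rrderiv$-step under consideration, the dependence $\valdep{x}{T}{z}$ produced by \rDef{def:dependent_vars} against the corresponding propagation clause of $\genf{\rbrprog}$, and reading off the closure property. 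Since the statement concerns a single step, no induction on trace length is needed; the only induction is an inner one on the structure of a guard, discussed below.

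First I would split on which rule of Fig.~\ref{fig:rrsem} justifies the step, mirroring the three bullets of \rDef{def:dependent_vars}. For rule $(1)$ the step evaluates an assignment $\rrassigns{z}{t}$ inside the same activation record, so $i = j$, the dependence has the shape $\valdep{x}{T}{z}$ with $x \in \vars(t)$ and $T \preceq \type{x}$, and the hypothesis is $T \in \prmap(i)(z)$. The assignment clause of $\genf{\rbrprog}$ propagates every $T \in \prmap(i)(z)$ to each $y \in \vars(t)$ with $T \preceq \type{y}$; fixpoint closure then yields $T \in \prmap(i)(x)$ directly. For rule $(3)$ the step returns from $m$ (rule $i$) into its caller (rule $j$), and $\valdep{y'_k}{T}{y_k}$ links a formal output $y'_k$ of $m$ to the actual output argument $y_k$ at the call site, with $T \in \prmap(j)(y_k)$. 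The output-propagation clause applied by $\genf{\rbrprog}$ at that call site pushes the usefulness of $y_k$ back onto the formal outputs of every rule of the invoked $m$ (hence in particular rule $i$); closure gives $T \in \prmap(i)(y'_k) = \prmap(i)(x)$.

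The substantial case is rule $(2)$, the procedure call, where the target record is a fresh instance of rule $\mi{rn} = j$ of the callee $m$ and $D = \depvars{T}{D_0, g}$. I would decompose $D$ along its recursive construction. The seed $D_0 = \{\, \valdep{x_k}{T}{x'_k} \mid T \preceq \type{x_k} \,\}$ relates actual and formal \emph{inputs}; the input-propagation clause of $\genf{\rbrprog}$ carries usefulness of each formal input $x'_k$ (in rule $j$) back to the actual input $x_k$ (in rule $i$), so closure settles these dependences. For the dependences contributed by the guard I would argue by induction on the structure of $g$, following the recursive clauses of $\depvars{T}{D, g}$ in \rDef{def:dependent_vars}: the case $g = g_1 \wedge g_2$ composes two applications of the induction hypothesis, while in the $\match(x,p)$ case each fresh pattern variable $y \in \vars(p)$ (with $T \preceq \type{y}$) becomes dependent on $x$ and, transitively, on every $z$ already satisfying $\valdep{z}{T}{x}$ in the accumulated set. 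I would match this against the guard-abstraction clause of $\genf{\rbrprog}$, which for a match makes each such $y$ inherit the types useful for $x$; together with the inputs already handled and fixpoint closure applied step by step, every dependence in $D$ lifts to the desired membership $T \in \prmap(i)(x)$.

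The main obstacle will be precisely this guard case of rule $(2)$: aligning the transitive accumulation performed by $\depvars{T}{D, g}$ — where a fresh pattern variable inherits not only its direct dependence on the matched variable $x$ but also all the indirect ones already carried in $D$ — with the closure that $\genf{\rbrprog}$ reaches when it repeatedly processes the rule to a fixpoint. The care needed is to make the inductive invariant strong enough; I expect to phrase it as ``for every $\valdep{x}{T}{z} \in \depvars{T}{D, g}$ with $T \in \prmap(j)(z)$ we have $T \in \prmap(i)(x)$, provided the same property already holds for the seed set $D$'', so that the $\match$ clause's transitive step is absorbed by one appeal to the induction hypothesis combined with closure of $\prmap$.
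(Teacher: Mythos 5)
Your proposal is correct and takes essentially the same route as the paper's proof: a case split on the three semantic rules of Fig.~\ref{fig:rrsem}, discharging each kind of dependence via the matching propagation clause of the inference (the assignment equation of $\genS{\rbrprog}_i$ for rule (1), set $A$ for the call seed $D_0$, set $B$ for rule (3), and the match clauses of $\genG{\rbrprog}_j$ for the guard-generated dependences), with the fixpoint property of $\prmap$ playing exactly the closure role the paper leaves implicit, and your structural induction on $g$ being just a cleaner packaging of the paper's explicit unfolding of the chain $\match(x'_k,p_1)\wedge\match(z_1,p_2)\wedge\ldots\wedge\match(z_{m-1},p_m)$. One wording slip to fix when fleshing it out: in the $\match(x,p)$ case the clause you need propagates norms from the pattern variables back to the matched variable (from $T\in\prmap(j)(z_m)$ down the chain to $T\in\prmap(j)(x'_k)$, then via set $A$ to $T\in\prmap(i)(x_k)$), not from $x$ to the fresh $y$ as ``each such $y$ inherits the types useful for $x$'' reads---your stated invariant already has the correct direction, so only the phrasing needs repair.
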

\begin{proof}
By case distinction on the semantic rule from Fig.~\ref{fig:rrsem} used to perform the step:

\begin{itemize} 
\item If the step evaluates an assignment using rule (1) we have:
\[
\begin{array}{lc}
(1) & \semrule
{\bc \equiv \rrassigns{x}{t}~~ \fevalt(t,\tv)=v}
{
	\tuple{p^i,\bc{\cdot}\stkbc,\tv}{\cdot} \ar 
	\rrderiv
	\tuple{p^i,\stkbc,\tv[x\mapsto v]}{\cdot} \ar
}
\end{array}
\]
In this case the step is in the same activation record of $p^i$, and the only dependent variable is $x$, which depends on every $y \in \vars(t)$: $\valdep{y}{T}{x}$. Then by the first equation of $\genS{\rbrprog{}}_i$ any typed-norm of $x$---$T \in \prmap(i)(x)$---will be a typed-norm of $y$---$T \in \prmap(i)(y)$.

\item When using rule (2) we have:
\[
\begin{array}{lc}
(2) & \semrule
{
	\bc \equiv m(\bar{x},\bar{y})~~~~ m(\bar{x'},\bar{y'}) \rulearrow g, \bc_1 \cdot \cdot \cdot \bc_k \in \rbrprog{} ~\mi{fresh}\\
	\tv_1 \equiv [\overline{x' \mapsto \tv(x)}] ~~ \fevalg(g,\tv_1) = \tv_2
	
}
{
	\tuple{p^i,\bc{\cdot}\stkbc,\tv}{\cdot} \ar 
	\rrderiv
	\tuple{m^j,\bc_1 \cdot \cdot \cdot \bc_k,\tv_1 \uplus \tv_2} {\cdot} \tuple{p^i[\overline{y' \sim y}],\stkbc,\tv}{\cdot} \ar
}
\end{array}
\]
In this case the dependence between \cbstart variables \cbend is more complex: there are dependences from the parameter passing ($\valdep{x_k}{T}{x'_k}$) and also from the sequence of guards ($\valdep{x_k}{T}{z_m}$ for those variables $z_m$ appearing in the right-hand side of $\match$ guards). 
\begin{enumerate}
	\item \label{dep:1} For the parameter passing, we have $T \in \prmap(i)(x_k)$ directly by the second equation of $\genS{\rbrprog{}}_i$ (set $A$). The typed-norms of any input variable for any rule of $m$ (including rule number $j$) will be propagated to the arguments $\bar{x}_k$ of the call $m(\bar{x}, \bar{y})$ in rule $i$. 
	\item Assume a variable $z_m$ in some guard such that $\valdep{x_k}{T}{z_m}$ and $T \in \prmap(i)(z_m)$. Thus there is a sequence of \match{} guards 
	\[
	\match(x'_k,p_1) \wedge \match(z_1,p_2) \land \ldots \land \match(z_{m-1},p_m)
	\]
	such that $z_i \in \mi{vars}(p_i)$---note that we can safely ignore $e_1~\mi{op}~e_2$ as they do not define new variables. Then by definition of $\genG{\rbrprog{}}_i$ ($2^{nd}$ and $4^{th}$ rules) we know that $T \in \prmap(i)(z_{m-1})$, $T \in \prmap(i)(z_{m-2})$, \ldots, $T \in \prmap(i)(z_2)$, $T \in \prmap(i)(z_1)$, $T \in \prmap(i)(x'_k)$. Therefore $T \in \prmap(i)(x_k)$ using the same reasoning as in case~\ref{dep:1}.
\end{enumerate}

\item Finally, if we use rule (3):
\[
\begin{array}{lc}
  (3) & \semrule
{}
{
	\tuple{p^i,\epsilon,\tv_1}{\cdot}\tuple{q^j[\overline{y' \sim y}],\stkbc,\tv}{\cdot} \ar 
	\rrderiv
	\tuple{q^j,\stkbc,\tv[\overline{y \mapsto \tv_1(y')}]}{\cdot} \ar
}
\end{array}
\]
In this case we have that $\valdep{y'_k}{T}{y_k}$, so if $T \in \prmap(j)(y_k)$ then $T \in \prmap(i)(y'_k)$ because of the second equation of $\genS{\rbrprog{}}_i$: rule $q^j$ will contain a call $q(\bar{x}, \bar{y})$, so using the set $B$ the typed-norm $T$ will be propagated to any output variable of any rule of procedure $q$, in particular those $y'_k \in \rvars{i}$ such that $\valdep{y'_k}{T}{y_k}$. Therefore $\prmap(i)(y'_k)$.
\end{itemize}
\end{proof}

\noindent\textbf{Theorem~\ref{teo:soundnessInference} (Soundness)}. 
If $\useful{i}{T}{x}$ then $T \in \typednorms_i(x)$. 

\begin{proof}
Let $\prmap$ be the result of the typed-norms inference of the program. By definition of $\useful{i}{T}{x}$ there are two configurations such that:
\begin{enumerate}
	\item $\ar_0 = \tuple{p,\bc{\cdot}\stkbc,\tv} \cdot \ar'$ containing statements from the i-th rule of $\rbrprog{}$
	\item $x \in \dom(\tv)$
	\item $\ar'_0 = \tuple{p,\bc{\cdot}\stkbc,\tv'} \cdot \ar'$ with $\tv' = \tv[x \mapsto v]$ and $\variation{\tv(x)}{v}{T}$
	\item \label{p4}$\traces(\ar_0) \neq \traces(\ar'_0)$
\end{enumerate}
From point~\ref{p4} we know that $\traces(\ar_0) \not\subseteq \traces(\ar'_0)$ or $\traces(\ar_0) \not\supseteq \traces(\ar'_0)$. We will focus only on the first case, as the second one is similar. Since $\traces(\ar_0) \not\subseteq \traces(\ar'_0)$ we know that there is a trace $\trace \in \traces(\ar_0)$ such that $\trace \not\in \traces(\ar'_0)$. However, $\traces(\ar'_0)$ will contain a trace $\trace'$ which starts with a (possibly empty) prefix of $\trace$. Therefore:

\[
\begin{array}{l}
\trace = \ar_0 \rrderiv^{r_1} \ar_1 \rrderiv^{r_2}  \ldots \rrderiv^{r_n} \ar_n \rrderiv^{r_{n+1}} \ar_{n+1}\\
\trace' = \ar_0' \rrderiv^{r_1} \ar'_1 \rrderiv^{r_2} \ldots \rrderiv^{r_n} \ar'_n 
\end{array}
\]
but $\ar_n' \not \rrderiv^{r_{n+1}} \ar_{*}'$. Furthermore, from point 3 we have that $\variation{\ar_0}{\ar_0'}{T}$.
Assuming $\ar_n = \tuple{q^j,\stkbc',\tv_n} \cdot \ar''_n$ then 
by Lemma~\ref{teo:usefulTrace} we know that there is a variable $z \in \dom(\tv_n)$ such that
\begin{enumerate}
	\item $\useful{j}{T}{z}$
	\item $\valdep{x}{*T}{z}$
	\item $T \in \prmap(j)(z)$
\end{enumerate}
Therefore by Lemma~\ref{teo:typednormsTraces} we have that $T \in \prmap(i)(x)$.
\end{proof}	

\begin{lemma}\label{teo:usefulTrace}
Consider a configuration $\ar_0 = \tuple{p^i,\bc{\cdot}\stkbc,\tv} \cdot \ar$ containing statements from the i-th rule of $\rbrprog{}$, and a variant configuration $\variation{\ar_0 \equiv \tuple{p^i,\bc{\cdot}\stkbc,\tv} \cdot \ar}{\tuple{p^i,\bc{\cdot}\stkbc,\tv'} \cdot \ar' \equiv \ar'_0}{T}$. Let $\{\bar{x}\} \subseteq \dom(\tv)$ be those variables whose value has changed from $\ar_0$ to $\ar'_0$, and $\prmap$ be the result of the typed-norms inference of the program.
If there are traces

\[
\begin{array}{l}
\trace = \ar_0 \rrderiv^{r_1} \ar_1 \rrderiv^{r_2}  \ldots \rrderiv^{r_n} \ar_n \\ 
\trace' = \ar_0' \rrderiv^{r_1} \ar'_1 \rrderiv^{r_2} \ldots \rrderiv^{r_n} \ar'_n \\ 
\end{array}
\]
such that $\ar_{n} \rrderiv^{r_{n+1}} \ar_{n+1}$ but $\ar'_{n} \not\rrderiv^{r_{n+1}} \ar'_{*}$, 
where $\ar_n = \tuple{q^j, \slot,\tv_n} \cdot \ar''_n$, then $\variation{\ar_n}{\ar'_n}{T}$ and there is a variable $z \in \dom(\tv_n)$ such that: 
\begin{enumerate}
	\item $\useful{j}{T}{z}$
	\item $\valdep{x_k}{*T}{z}$ for some $x_k \in \{\bar{x}\}$
	\item $T \in \prmap(j)(z)$
\end{enumerate}
\end{lemma}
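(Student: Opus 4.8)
The plan is to proceed by induction on the length $n$ of the two parallel traces, peeling off the first step and extending the dependence chain backwards. Before starting, I would slightly strengthen the statement so that it applies to any top activation record $\tuple{q^j,\stkbc,\tv}$ rather than only to one whose block list is nonempty ($\bc\cdot\stkbc$); this is needed because a single step can leave an empty block list, and the stronger invariant is proved by the same induction while the stated lemma is the special case. The engine of the argument is a single-step \emph{variation-preservation} claim: if $\variation{\ar_m}{\ar'_m}{T}$ with changed set $\{\bar{w}\}$, and both configurations take the same rule $r_{m+1}$, then $\variation{\ar_{m+1}}{\ar'_{m+1}}{T}$, and every variable that is changed in $\ar_{m+1}$ either was already changed and merely carried over, or receives its changed $T$-component through a single-step dependence $\valdep{w}{T}{w'}$ from some $w\in\{\bar{w}\}$, exactly as prescribed by Def.~\ref{def:dependent_vars}. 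I would prove this by a routine case analysis over the three rules of Fig.~\ref{fig:rrsem} (assignment propagates to the left-hand variable, parameter passing and $\match$-guards propagate to the renamed and fresh variables, return propagates to the output arguments).

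For the base case $n=0$ the divergence occurs at the very first step. Since the definedness of $\fevalt$ does not depend on the values of the variables, rule $(1)$ can never be the culprit, and the step must use rule $(2)$; the two configurations then differ because the guard $g$ succeeds on $\tv_1$ in $\ar_0$ but fails in $\ar'_0$. I would locate the sub-guard responsible for the failure---a comparison $e_1\,\mi{op}\,e_2$ or a $\match(x',p)$---and trace the parameter $x'$ back through parameter passing to the argument variable $z=x_k\in\{\bar{x}\}$, which lies in $\dom(\tv_n)=\dom(\tv)$. The empty chain gives $\valdep{x_k}{*T}{z}$ with $z=x_k$; the divergence itself witnesses $\useful{j}{T}{z}$; and $T\in\prmap(j)(z)$ follows because the inference equations $\genG{\rbrprog{}}_j$ assign type $T$ to the guard variables while $\genS{\rbrprog{}}_j$ (set $A$) propagates this across parameter passing.

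For the inductive step $n>0$, both traces share the first step, $\ar_0\rrderiv^{r_1}\ar_1$ and $\ar'_0\rrderiv^{r_1}\ar'_1$. Applying variation-preservation to this step yields $\variation{\ar_1}{\ar'_1}{T}$ with some changed set $\{\bar{x}_1\}$. The suffix traces of length $n-1$ satisfy the strengthened hypotheses, so by the induction hypothesis there is a variable $z\in\dom(\tv_n)$ with $\useful{j}{T}{z}$, $T\in\prmap(j)(z)$, $\variation{\ar_n}{\ar'_n}{T}$, and $\valdep{x_{1,l}}{*T}{z}$ for some $x_{1,l}\in\{\bar{x}_1\}$. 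It then remains to prepend one link: because $x_{1,l}$ is changed in $\ar_1$, variation-preservation gives either $x_{1,l}\in\{\bar{x}\}$ (carried over) or $\valdep{x_k}{T}{x_{1,l}}$ in the first step for some $x_k\in\{\bar{x}\}$; in either case we obtain a chain $\valdep{x_k}{*T}{z}$ over the whole trace $\trace$, and conditions 1 and 3 carry over from the induction hypothesis unchanged.

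The main obstacle I anticipate is the variation-preservation claim together with the base-case analysis of rule $(2)$: one must verify that every way a $T$-variation can flow across a step is captured by $\valdep{}{T}{}$ of Def.~\ref{def:dependent_vars}---in particular for the fresh pattern variables of $\match$ and the renaming performed on return---and that the inference equations $\genS{}$ and $\genG{}$ really do place $T$ in $\prmap(j)(z)$ for the precise variable that gates the diverging guard. Matching the informal ``tracks the dependence'' intuition to the exact shape of the set $D$ and of the inference rules, rather than the induction on $n$ itself, is where the real care is required.
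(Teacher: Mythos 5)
Your proposal is correct and follows essentially the same route as the paper's proof: induction on the trace length $n$, a base case that identifies rule $(2)$ as the only rule whose premise (the guard evaluation) can be blocked by a $T$-variation and then invokes the $\genG{}$/$\genS{}$ equations to get $T\in\prmap(j)(z)$, and an inductive step that pushes the variation through the shared first step and prepends one dependence link. Your ``variation-preservation'' claim is exactly the paper's Prop.~\ref{teo:variation_step} (augmented with the single-step dependence bookkeeping that the paper instead handles by a case distinction on Def.~\ref{def:dependent_vars}), and your strengthening of the statement to arbitrary top activation records is a minor technical refinement of the same argument rather than a different approach.
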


\begin{proof}
By induction on the length $n$ of the traces.

\begin{itemize}
	\item \underline{Base Case:} $n = 0$\\
	In this case $\ar_0 \rrderiv^{r_1} \ar_1$ but $\ar_0' \not \rrderiv^{r_1} \ar'_1$, where
	$\variation{\tv}{\tv'}{T}$.
	According to the rules in Fig.~\ref{fig:rrsem} the only rule that can prevent such a step is (2):
	
	\[
	\begin{array}{lc}
	(2) & \semrule
	{
		\bc \equiv m(\bar{x},\bar{y})~~~~ m(\bar{x'},\bar{y'}) \rulearrow g, \bc_1 \cdot \cdot \cdot \bc_k \in \rbrprog{} ~\mi{fresh}\\
		\tv_1 \equiv [\overline{x' \mapsto \tv(x)}] ~~ \fevalg(g,\tv_1) = \tv_2
		
	}
	{
		\tuple{p^i,\bc{\cdot}\stkbc,\tv}{\cdot} \ar 
		\rrderiv
		\tuple{m^j,\bc_1 \cdot \cdot \cdot \bc_k,\tv_1 \uplus \tv_2} {\cdot} \tuple{p^i[\overline{y' \sim y}],\stkbc,\tv}{\cdot} \ar
	}
	\end{array}
	\]
	
	$\ar'_0$ cannot be reduced because $\fevalg(g, [\overline{x' \mapsto \tv'(x)}])$ is undefined, but on the other hand $\fevalg(g, [\overline{x' \mapsto \tv(x)}])$ returns a variable mapping. Then it is clear that the value of some variable $z \in \{\overline{x'}\}$ must differ from $\tv$ to $\tv'$. If there is only differences in one variable $z$ then $\useful{i}{T}{z}$ (in this case $i = j$) because clearly $\traces(\tuple{p^i,\bc{\cdot}\stkbc,\tv} \cdot \ar) \neq \traces(\tuple{p^i,\bc{\cdot}\stkbc,\tv'} \cdot \ar)$. If there are several variables ${\overline{z}} \subseteq \dom(\tv)$ such that $\variation{\tv(z)}{\tv'(z)}{T}$ then it is clear that we can revert the values of all of them but one, which would prevent the guard evaluation, so again $\useful{i}{T}{z}$. In this case $\valdep{z}{*T}{z}$ trivially, because there is no step involved. Finally, by definition of $\genG{\rbrprog{}}_i$ and $\genS{\rbrprog{}}_i$ ($2^{nd}$ rule, set A) the typed-norm $T$ will be propagated from the guard to the input variable, and then to $z$, therefore $T \in \prmap(j)(z)$. Note that  $\variation{\ar_0}{\ar'_0}{T}$ holds trivially in this case.
	
	\item \underline{Inductive Step:} $n > 0$\\
	We know that the traces from $\ar_1$ and $\ar_1'$ are different, so $\ar_1 \neq \ar_1'$ and  by Prop.~\ref{teo:variation_step} we have that $\variation{\ar_1}{\ar'_1}{T}$ with $\ar_1 = \tuple{q^k, \slot, \tv_1} \cdot \ar$ and $\ar_1 = \tuple{q^k, \slot, \tv'_1} \cdot \ar'$. Then by IH we know $\variation{\ar_n}{\ar'_n}{T}$ and (1) $\useful{j}{T}{z}$, (2) $\valdep{y}{*T}{z}$ for some $y \in \dom(\tv_1)$ and (3) $T \in \prmap(j)(z)$. 
	
	If $y \in \dom(\tv)$ then $\variation{\tv(y)}{\tv'(y)}{T}$ and the proof is finished. Otherwise, consider the set of variables $\{\bar{x}\} \subseteq \dom(\tv)$ whose value has changed from $\ar_0$ to $\ar'_0$, i.e., $\variation{\tv(x_k)}{\tv'(x_k)}{T}$. Then by a case distinction on the definition of dependent variables (Def.~\ref{def:dependent_vars}) we have that there is at least one variable 
	$x_k \in \dom(\tv)$ 
	such that $\valdep{x_k}{T}{y}$, therefore $\valdep{x_k}{*T}{z}$.

\end{itemize}	
\end{proof}

\begin{lemma}\label{teo:typednormsTraces}
	Let $\prmap$ be the result of the typed-norms inference of the program.
Consider configurations $\ar_0 = \tuple{p^i,\stkbc,\tv} \cdot \ar'$ and $\ar_n = \tuple{q^j,\stkbc',\tv_n} \cdot \ar''$. 
If $\ar_0 \rrderiv^{r_1} \ar_1 \rrderiv^{r_2}  \ldots \rrderiv^{r_n} \ar_n$, $\valdep{x}{*T}{z}$ for some $x \in \dom(\tv)$, and $T \in \prmap(j)(z)$, then $T \in \prmap(i)(x)$.
\end{lemma}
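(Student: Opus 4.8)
The plan is to reduce this multi-step claim to the single-step Lemma~\ref{lemma:typednormsStep} by iterating it along the dependency chain. By the definition of $\valdep{x}{*T}{z}$ that accompanies Def.~\ref{def:dependent_vars}, the chain unfolds into $k$ elementary dependencies $\valdep{x_0}{T}{x_1}, \valdep{x_1}{T}{x_2}, \ldots, \valdep{x_{k-1}}{T}{x_k}$, with $x_0 = x$ and $x_k = z$, where each $\valdep{x_{m-1}}{T}{x_m}$ holds in a single step $\ar_{p_m} \rrderiv \ar_{p_m+1}$ and the positions satisfy $p_1 < \cdots < p_k$. I would prove the lemma by induction on the chain length $k$, which amounts to applying Lemma~\ref{lemma:typednormsStep} once per elementary dependency, working from the last one back to the first.

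For the base case $k = 0$ we have $x = z$ in the same (top) activation record, so $i = j$ and the hypothesis $T \in \prmap(j)(z)$ is literally the conclusion $T \in \prmap(i)(x)$. For the inductive step I would peel off the last dependency $\valdep{x_{k-1}}{T}{z}$: the step $\ar_{p_k} \rrderiv \ar_{p_k+1}$ has $z$ in the top frame of its target and $x_{k-1}$ in the top frame of its source, so Lemma~\ref{lemma:typednormsStep} converts $T \in \prmap(j)(z)$ into $T \in \prmap(i_k)(x_{k-1})$, where $i_k$ names the rule of that source frame. The prefix $\valdep{x}{*T}{x_{k-1}}$ is then a chain of length $k-1$ lying entirely in the sub-derivation ending at $\ar_{p_k}$, so the induction hypothesis delivers $T \in \prmap(i)(x)$, as required.

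The delicate part --- and the step I expect to be the main obstacle --- is the \emph{bookkeeping of activation records and rule numbers} that makes two consecutive applications of Lemma~\ref{lemma:typednormsStep} compose. Each intermediate variable $x_m$ occurs as the \emph{target} of the dependency at step $p_m$ (hence in the top frame of $\ar_{p_m+1}$, of some rule $j_m$) and as the \emph{source} of the dependency at step $p_{m+1}$ (hence in the top frame of $\ar_{p_{m+1}}$, of some rule $i_{m+1}$); to chain the lemma I must justify $j_m = i_{m+1}$. This holds because an activation record keeps the rule that created it throughout its lifetime, and a variable used as a source at $p_{m+1}$ must still reside in the top frame there, so $x_m$ inhabits one and the same frame across the gap $[p_m+1, p_{m+1}]$ even though that gap may contain steps outside the chain. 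A case inspection of Def.~\ref{def:dependent_vars} (assignment leaves the current frame on top, a call pushes the fresh callee frame that carries the new $x_m$, a return exposes the caller frame holding the new $x_m$) confirms this, and the same reasoning pins the two endpoints: $x$ stays in the top frame of rule $i$ from $\ar_0$ to $\ar_{p_1}$, and $z$ stays in the top frame of rule $j$ from $\ar_{p_k+1}$ to $\ar_n$, so the outermost applications of Lemma~\ref{lemma:typednormsStep} match the lemma's hypotheses.
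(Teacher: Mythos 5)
Your proof is correct, and it reaches the result by a genuinely different decomposition than the paper's. The paper proves this lemma by induction on the length $n$ of the \emph{trace}, peeling off the \emph{first} step: writing $\ar_1 = \tuple{r^k,\stkbc'',\tv''} \cdot \ar''$, it distinguishes whether $x \in \dom(\tv'')$ (then the induction hypothesis on the suffix trace gives $T \in \prmap(k)(x)$, and freshness of variables identifies the frames of rules $i$ and $k$, so $i=k$) or $x \notin \dom(\tv'')$ (then the first link $\valdep{x}{T}{z'}$ of the chain is consumed in the first step, the induction hypothesis gives $T \in \prmap(k)(z')$, and one application of Lemma~\ref{lemma:typednormsStep} transfers the norm back to $x$). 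You instead induct on the length $k$ of the \emph{dependency chain}, peeling off the \emph{last} link, so Lemma~\ref{lemma:typednormsStep} is invoked once per link rather than once per step. Both arguments rest on the same two ingredients---the single-step Lemma~\ref{lemma:typednormsStep}, plus the bookkeeping that activation records retain their rule number and have pairwise-fresh variables, so a variable determines its frame---but they pay for the bookkeeping in different places. In the paper, the frames being compared always belong to consecutive configurations, so frame identity is immediate and the chain structure is absorbed into the case split; in your proof the chain positions $p_1 < \cdots < p_k$ are not consecutive, so you must (and do) argue explicitly that each $x_m$ inhabits one and the same frame across the gap $[p_m+1,\,p_{m+1}]$, and likewise that $x$ and $z$ keep their frames at the two ends. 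Your version is in fact somewhat more robust: since your induction hypothesis constrains only the top frames at the two endpoints of the (sub)trace, it cleanly covers situations that the paper's two-case split glosses over, e.g.\ a chain whose first link lies in the first step even though $x$ survives in $\dom(\tv'')$ (an assignment with $x$ on the right-hand side), or a chain whose first link occurs only after $x$'s frame has been buried by a call and re-exposed by the matching return. The price is the explicit frame-persistence argument, which the paper's step-by-step induction never needs to state.
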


\begin{proof}
By induction on the length $n$ of the trace:

\begin{itemize}
	\item \underline{Base Case:} $n = 0$\\
	In this case $\ar_0 = \ar_n$, $q^j = p^i$, and $\valdep{x}{*T}{x}$. Therefore $T \in \prmap(i)(x) = \prmap(j)(x)$.
	\item \underline{Inductive Step} $n > 0$\\
	We assume that $\ar_1 = \tuple{r^k,\stkbc'',\tv''} \cdot \ar''$. The dependence $\valdep{x}{*T}{z}$ appears in the complete trace, so there are two cases:
	\begin{itemize}
	\item $x \in \dom(lv'')$. Then by the Induction Hypothesis we have that $T \in \prmap(k)(x)$.
	  If $x \in \dom(lv'')$ and $x \in \dom(lv)$ then $p^i$ and $r^k$ are the same activation record, so $T \in \prmap(i)(x)$.
		
    \item $x \not\in \dom(lv'')$, so $\valdep{x}{T}{z'}$ in the first step and $\valdep{z'}{*T}{z}$ in the rest of the trace, for some $z' \in \dom(\tv'')$. Then by the Induction Hypothesis we have that $T \in \prmap(k)(z')$.
    From $T \in \prmap(k)(z')$, $\valdep{x}{T}{z'}$ and Lemma~\ref{lemma:typednormsStep}, 
		we obtain directly that $T \in \prmap(i)(x)$.	
	\end{itemize} 
\end{itemize}
\end{proof}

\begin{prop}\label{teo:variation_step}
If $\ar_0 \rrderiv^{r} \ar_1$, $\ar'_0 \rrderiv^{r} \ar'_1$, $\variation{\ar_0}{\ar'_1}{T}$, and $\ar'_0 \neq \ar'_1$ then $\variation{\ar_1}{\ar'_1}{T}$.
\end{prop}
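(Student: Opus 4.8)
The plan is to proceed by a case analysis on the semantic rule $r$ labelling the common step, using rules (1)--(3) of Fig.~\ref{fig:rrsem}. Because both configurations reduce by the \emph{same} rule---and, in the call case, by the same fresh instance of the same program rule---the two successor configurations $\ar_1$ and $\ar'_1$ have the same \emph{shape}: the same sequence of activation records with identical code stacks and, for rule (2), identical callee bodies and output renamings $\overline{y'\sim y}$. Hence, reading the premise as $\variation{\ar_0}{\ar'_0}{T}$, it suffices to show record by record that each variable mapping of $\ar'_1$ is a type-$T$ variation of the corresponding mapping of $\ar_1$; the activation records not touched by the step are carried over unchanged on both sides and remain related by $\triangleright^T$ from the hypothesis.

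Two auxiliary facts, each proved by structural induction (following \rDef{def:evalt} and \rDef{def:evalg}), drive the individual cases. \textbf{(i) Term evaluation preserves variations:} if $\variation{\tv}{\tv'}{T}$, $\fevalt(t,\tv)=v$ and $\fevalt(t,\tv')=v'$, then $\variation{v}{v'}{T}$. The variable and constant cases are immediate; for $e_1\pm e_2$ the operands carry no type-$T$ data unless $T$ is the integer type, in which case both results are integers and are trivially $T$-variations; for a constructor $\DC(\overline{t_n})$ the induction hypothesis applies componentwise, and $\triangleright^T$ relates two terms with the common head $\DC$ once it relates their arguments. \textbf{(ii) Guard evaluation preserves its bindings:} if $\variation{\tv}{\tv'}{T}$ and both $\fevalg(g,\tv)=\tv_g$ and $\fevalg(g,\tv')=\tv'_g$ are defined, then $\variation{\tv_g}{\tv'_g}{T}$. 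Comparison guards bind nothing; the conjunction $g_1\wedge g_2$ composes the two induction hypotheses, invoking \rProp{teo:evalg_disjoint_tv} so that the bound domains are disjoint and the two unions combine into a single variation; the decisive case $\match(x,p)$ is discussed below.

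Given these facts, each rule is discharged directly. For rule (1), $\rrassigns{x}{t}$, the top record's mapping becomes $\tv[x\mapsto v]$ resp.\ $\tv'[x\mapsto v']$; fact (i) gives $\variation{v}{v'}{T}$, so the updated mappings remain $T$-variations while all other records are untouched. For rule (2), the freshly pushed callee record carries $\tv_1\uplus\tv_2$ resp.\ $\tv'_1\uplus\tv'_2$, where $\tv_1=[\overline{x'\mapsto \tv(x)}]$ and $\tv'_1=[\overline{x'\mapsto \tv'(x)}]$ are $T$-variations since $\variation{\tv(x)}{\tv'(x)}{T}$, and the guard bindings $\tv_2,\tv'_2$ are $T$-variations by fact (ii); the pushed caller record retains $\tv$ resp.\ $\tv'$. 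For rule (3), the empty-stack callee record is popped on both sides and the caller mapping is updated by $\overline{y\mapsto \tv_0(y')}$ resp.\ $\overline{y\mapsto \tv'_0(y')}$; since $\variation{\tv_0}{\tv'_0}{T}$ yields $\variation{\tv_0(y')}{\tv'_0(y')}{T}$, the result is again a $T$-variation.

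The main obstacle is the $\match(x,p)$ case of fact (ii), because a type-$T$ variation of $\tv(x)$ can in principle redirect control flow---letting the match succeed on one side and fail on the other---which is precisely the phenomenon the usefulness analysis is meant to detect. What rescues the argument is the premise that \emph{both} steps fire with the same rule $r$: this forces both guard evaluations to take the same matching decisions, so $\tv(x)$ and $\tv'(x)$ must expose the head constructor demanded by $p$, and the variation can only flow into the data bound to the fresh variables $\vars(p)$, never into the choice of rule. The delicate subcase is when the varied component lies at the matched type, $T\preceq\type{x}$; there one appeals to the precise definition of $\triangleright^T$ to check that, once the two values agree on the constructor required by $p$, their corresponding fields are themselves related by $\triangleright^T$, so the extended mappings stay variations. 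The premise $\ar'_0\neq\ar'_1$ is not otherwise used: the variation relation is preserved whether or not the two successors coincide.
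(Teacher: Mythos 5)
Your overall strategy---case analysis on the semantic rule, supported by lemmas that term and guard evaluation carry variations through---is exactly the paper's approach (its proof is a one-liner: case distinction on the rule of Fig.~\ref{fig:rrsem}), and your reading of the garbled premise $\variation{\ar_0}{\ar'_1}{T}$ as $\variation{\ar_0}{\ar'_0}{T}$ is the right correction. The genuine problem is your closing claim that the premise $\ar'_0 \neq \ar'_1$ ``is not otherwise used''. Read literally that premise is vacuous---every rule changes the configuration's shape: (1) pops an instruction, (2) pushes a record, (3) pops one---which signals that it is itself a typo for $\ar_1 \neq \ar'_1$; that is precisely how the proposition is invoked in the proof of Lemma~\ref{teo:usefulTrace}, where $\ar_1 \neq \ar'_1$ is established \emph{before} the proposition is applied. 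And that premise is exactly what the paper's proof singles out as the crux: rule (3) drops the callee's activation record, so if the only varied variables live in the popped mapping $\tv_0$ but outside the output variables $\overline{y'}$, the pop discards all varied data, the two successors coincide, and $\variation{\ar_1}{\ar'_1}{T}$ fails, since a variation requires at least one genuinely changed value and is not reflexive.

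Correspondingly, two of your intermediate claims are false as stated. Your fact (i) can only conclude $v = v'$ \emph{or} $\variation{v}{v'}{T}$: if the varied variables do not occur in $t$, the two evaluations agree. In rule (3), the step ``$\variation{\tv_0}{\tv'_0}{T}$ yields $\variation{\tv_0(y')}{\tv'_0(y')}{T}$'' is a non sequitur for the same reason---the variation may be confined to non-output variables of the popped record. Rule (1) has the same failure mode (variation only in $x$, then $\rrassigns{x}{t}$ with $t$ independent of the varied components, e.g.\ a constant, erases it); only rule (2) is immune, since it discards nothing. The repair is to weaken your two facts to the ``equal or a variation'' form, conclude in each case that $\ar_1 = \ar'_1$ or $\variation{\ar_1}{\ar'_1}{T}$, and then use the hypothesis that the successors differ to exclude the first disjunct. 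Your treatment of the $\match$ case---the common rule label forces the same matching decisions, so variations flow only into the data bound by $p$---is correct and is a point the paper leaves implicit; it is the dismissal of the inequality premise, not the control-flow issue, that breaks the proof.
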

\begin{proof}
By case distinction on the rule of Fig.~\ref{fig:rrsem} used. Note the importance of the premise $\ar'_0 \neq \ar'_1$, as rule (3) of Fig.~\ref{fig:rrsem} drops an activation record from the configuration and could remove those variables storing variations, therefore making $\ar'_0$ equal to $\ar'_1$.
\end{proof}

\end{document}